\title{Scaled Relative Graph Analysis of General Interconnections of SISO Nonlinear Systems}
\author{Julius P.J. Krebbekx$^1$, Roland Tóth$^{1,2}$, Amritam Das$^1$
\thanks{$^1$Control Systems group, Department of Electrical Engineering, Eindhoven University of Technology, The Netherlands.}
\thanks{$^2$Systems and Control Lab, HUN-REN Institute for Computer Science and Control, Budapest, Hungary. 

E-mail: {\tt\small \{j.p.j.krebbekx, R.Toth, am.das\}@tue.nl}

}
}
\date{\today} 
\begin{document}

\maketitle

\begin{abstract}
    \emph{Scaled Relative Graphs} (SRGs) provide a novel graphical frequency-domain method for the analysis of nonlinear systems. However, we show that the current SRG analysis suffers from a pitfall that limits its applicability in analyzing practical nonlinear systems. We overcome this pitfall by introducing a novel reformulation of the SRG of a linear time-invariant operator and combining the SRG with the Nyquist criterion. The result is a theorem that can be used to assess stability and $L_2$-gain performance for general interconnections of nonlinear dynamic systems. We provide practical calculation results for canonical interconnections and apply our result to Lur'e systems to obtain a generalization of the celebrated circle criterion, which deals with broader class of nonlinearities, and we derive (incremental) $L_2$-gain performance bounds. We illustrate the power of the new approach on the analysis of several examples.
\end{abstract}

\begin{IEEEkeywords}
    Scaled Relative Graph, Nonlinear systems, Stability of nonlinear systems, Circle criterion, Nyquist criterion
\end{IEEEkeywords}

\section{Introduction}

\IEEEPARstart{F}{or} \emph{Linear Time-Invariant} (LTI) systems, graphical system analysis using the Nyquist diagram~\cite{nyquistRegenerationTheory1932} is the cornerstone of control engineering. It is easy to use, allows for intuitive analysis and controller design methods and has been widely used in the industry. However, it is unclear how to systematically generalize graphical frequency domain methods to nonlinear system analysis and controller design. 

There have been various attempts in extending graphical analysis methods for nonlinear systems, but they are all either approximative, or limited in range of applicability. Classical results such as the circle criterion \cite{sandbergFrequencyDomainConditionStability1964}, which is based on the Nyquist concept, and the Popov criterion can predict the stability of a class of nonlinear systems. These methods are exact, but are not useful for performance shaping. Moreover, they are limited to Lur'e systems with sector bounded nonlinearities. The describing function method \cite{krylovIntroductionNonlinearMechanics1947} is an approximate method based on the Nyquist diagram, which considers only the first Fourier coefficient of the amplitude-dependent frequency response. A more sophisticated approximate method is the nonlinear Bode diagram in~\cite{pavlovFrequencyDomainPerformance2007}. The \emph{Scaled Relative Graph} (SRG), first introduced in~\cite{ryuScaledRelativeGraphs2022}, was proposed as new graphical method to analyze nonlinear feedback systems in~\cite{chaffeyGraphicalNonlinearSystem2023}. It is an exact method, and it is intuitive because of its close connection to the Nyquist diagram. Moreover, SRG analysis can provide performance bounds in terms of (incremental) $L_2$-gain. This has been demonstrated recently in \cite{vandeneijndenScaledGraphsReset2024}, where the framework of SRGs has been applied to the analysis of reset controllers. By leveraging the restriction of the SRG to specific input spaces, it is also possible to compute a nonlinear Bode diagram and define a bandwidth for nonlinear systems~\cite{krebbekxNonlinearBandwidthBode2025}, generalizing~\cite{pavlovFrequencyDomainPerformance2007}.

Even though existing SRG tools are exact, they are limited in range of applicability, as they can only deal with stable open-loop plants. In practice, however, it is often required to stabilize an unstable nonlinear plant. In this paper\footnote{A part of the results in this paper were already published in the conference paper~\cite{krebbekxSRGAnalysisLure2024}, which focused on the Lur'e system only. This paper generalizes \cite{krebbekxSRGAnalysisLure2024} to arbitrary interconnections with possibly unstable LTI operators, and we derive practical formulae for two other systems than the Lur'e system: the controlled Lur'e and Lur'e controlled system. Apart from these generalizations, we: provide a mathematical explanation of the pitfall (instead of the heuristic one in \cite{krebbekxSRGAnalysisLure2024}), develop a formal language approach for expressing interconnections of nonlinear operators, prove SRG interconnection rules for the extended SRG, and work out three new examples. Other technical contributions are: 1) the mathematical setting laid out in Section~\ref{sec:signals_systems_stability} to analyze systems on $L_2$ or $\Lte$, including causality, 2) Lemma~\ref{lemma:srg_radius_gamma_Gamma}, 3) Lemma~\ref{lemma:finite_srg'_radius}, and 4) detailed non-incremental version of \cite{chaffeyHomotopyTheoremIncremental2024} in Appendix~\ref{app:non_incremental_homotopy}.}, we demonstrate a fundamental pitfall of SRG analysis when applying it to unstable LTI systems in a feedback interconnection. We resolve this pitfall by including the information provided by the Nyquist criterion into the SRG of the LTI operator, called the \emph{extended SRG}, to obtain an effective tool for computing stability conditions. Using the extended SRG, we derive practical formulae for three canonical interconnections, including the Lur'e system, for which we obtain performance metrics in terms of (incremental) $L_2$-gain bounds, and well-posedness is guaranteed via the homotopy construction~\cite{megretskiSystemAnalysisIntegral1997,chaffeyHomotopyTheoremIncremental2024}. An important consequence of our results is a generalized circle criterion, where the class of nonlinear operators is now general and an $L_2$-gain bound is provided, instead of being limited to sector bounded nonlinearities and guaranteeing only $L_2$ stability as in the classical circle criterion. We then address general interconnections of operators and develop a stability theorem that provides an (incremental) $L_2$-gain bound. To this end, we develop a formal language approach to represent the interconnection structure and provides a natural framework for computational SRG analysis. Finally, we demonstrate our results on three original examples, highlighting both the incremental and non-incremental cases.

This paper is structured as follows. In Section~\ref{sec:preliminaries}, we present the required preliminaries, including slightly improved state-of-the-art SRG results. An important pitfall of the current SRG methods is identified and explained in Section~\ref{sec:pitfall_and_resolution}, where we introduce the extended SRG to resolve the pitfall. In Section~\ref{sec:resolution_applied}, we apply the extended SRG to three canonical interconnections, including the Lur'e system, which implies a generalization of the celebrated circle criterion. In Section~\ref{sec:general_interconnections}, we develop a language representation for general interconnections of operators, which is used to analyze the stability of arbitrary interconnections using the extended SRG, avoiding the pitfall. We apply our main result to three examples in Section~\ref{sec:examples}, each highlighting a key aspect of our results, and present our conclusions in Section~\ref{sec:conclusion}.

\section{Preliminaries}\label{sec:preliminaries}

\subsection{Notation and Conventions}

Let $\R, \C$ denote the real and complex number fields, respectively, with $\R_{>0} = (0, \infty)$ and $\C_{\mathrm{Re} > 0}= \{ a+ jb \mid \, a \in \R_{>0}, \, b \in \R \}$, where $j$ is the imaginary unit. Let $\C_\infty := \C \cup \{ \infty \}$ denote the extended complex plane. We denote the complex conjugate of $z = a + jb \in \C$ as $\bar{z} = a-jb$. Let $\mathcal{L}$ denote a Hilbert space, equipped with an inner product $\inner{\cdot}{\cdot}_\mathcal{L} : \mathcal{L} \times \mathcal{L} \to \C$ and norm $\norm{x}_\mathcal{L} := \sqrt{\inner{x}{x}_\mathcal{L}}$. For sets $A, B \subseteq \C$, the sum and product sets are defined as $A+B:= \{ a+b \mid a\in A, b\in B\}$ and $AB:= \{ ab \mid a\in A, b\in B\}$, respectively. The disk in the complex plane is denoted $D_r(x) = \{ z \in \C \mid |z-x| \leq r \}$. Denote $D_{[\alpha, \beta]}$ the disk in $\C$ centered on $\R$ which intersects $\R$ in $[\alpha, \beta]$. The radius of a set $\mathcal{C} \subseteq \C$ is defined as $\rmin(\mathcal{C}) := \inf_{r>0} : \mathcal{C} \subseteq D_r(0)$. The distance between two sets $\mathcal{C}_1,\mathcal{C}_2 \subseteq \C_\infty$ is defined as $\dist(\mathcal{C}_1, \mathcal{C}_2) := \inf_{z_1 \in \mathcal{C}_1, z_2 \in \mathcal{C}_2} |z_1-z_2|$, where $|\infty-\infty|:=0$.

\subsection{Signals, Systems and Stability}\label{sec:signals_systems_stability}

A relation $R : X \to Y$ is a possibly multi-valued map, defined by $Rx \subseteq Y$ for all $x \in X =: \dom(R)$, and the range is defined as $\ran(R) := \{ y\in Y \mid \exists x \in X : y \in Rx \} \subseteq Y$. The graph of a relation $R$ is the set $\{ (u,y) \in X \times Y \mid u \in X, y \in R(u) \}$. The inverse relation $R^{-1}$ is defined by the set $\{ (y,u) \in Y \times X \mid u \in X, y \in R(u) \}$. A single-valued relation is called an operator.

Denote a field $\mathbb{F} \in  \{ \R^n, \C^n \}$, where $n=1$ is assumed, unless otherwise specified, since this work focuses on \emph{Single-Input Single-Output} (SISO) continuous-time systems. A Hilbert space of particular interest is $L_2(\mathbb{F}):= \{ f:\R_{\geq 0} \to \mathbb{F} \mid \norm{f}_2 < \infty \}$, where the norm is induced by the inner product $\inner{f}{g}:= \int_\mathbb{F} \bar{f}(t) g(t) d t$, and $\bar{f}$ denotes the complex conjugate. For brevity, we denote $L_2(\R)$ as $L_2$ from now on. 

For any $T \in \R_{\geq 0}$, define the truncation operator $P_T : L_2(\mathbb{F}) \to L_2(\mathbb{F})$ as 
\begin{equation*}
    (P_T u)(t) :=
    \begin{cases}
        u(t) & t \leq T, \\        
        0 & t > T.
    \end{cases}
\end{equation*}
The extension of $L_2(\mathbb{F})$, see Ref.~\cite{desoerFeedbackSystemsInputoutput1975}, is defined as 
\begin{equation*}
    \Lte(\mathbb{F}) := \{ u : \R_{\geq 0} \to \mathbb{F} \mid \norm{P_T u}_2 < \infty \text{ for all } T \in \R_{\geq 0} \}.
\end{equation*}
The space $\Lte(\R)$ is denoted as $\Lte$ from now on. This space is the natural setting for modeling systems, as it includes periodic signals, which are otherwise excluded from $L_2$. However, extended spaces are not even normed spaces~\cite[Ch. 2.3]{willemsAnalysisFeedbackSystems1971}. Therefore, the Hilbert space $L_2$ is the adequate signal space for rigorous functional analytic system analysis. 

An operator $R$ is said to be causal on $L_2$ ($\Lte$) if it satisfies $R : L_2 \to L_2$ ($R : \Lte \to \Lte$) and $P_T (Ru) = P_T(R(P_Tu))$ for all $u \in L_2$ ($u \in \Lte$) and $T \in \R_{\geq 0}$, i.e., the output at time $t$ is independent of the signal at times greater than $t$. Unless specified otherwise, \emph{we will always assume causality on $L_2$.} Causal systems $R:L_2 \to L_2$ are extended to $\Lte$ by defining $R : \Lte \to \Lte$ as $P_T R u := P_T R P_T u$, which is well-defined since $P_T u \in L_2$ for all $u \in \Lte$. If $R :L_2 \to L_2$ and $R : \Lte \to \Lte$, then $R$ is causal on $L_2$ if and only if $R$ is causal on $\Lte$~\cite[Ch. 2.4]{willemsAnalysisFeedbackSystems1971}. If $R: \dom(R) \subsetneq L_2 \to L_2$, then $R$ can only be extended to $\Lte$ if $\norm{P_T R u}_2 < \infty$ for all $u\in L_2$ and $T \in \R_{\geq 0}$, i.e. no finite escape time. Conversely, if $R: \Lte \to \Lte$, then it can be that $Ru \notin L_2$ for all $u \in L_2$ (e.g., consider $u(t) \mapsto \sin(t)$). We model systems as operators $R: \Lte \to \Lte$, and always assume $R(0)=0$, unless otherwise specified.

\begin{definition}\label{def:system}
    A system is an operator $R : \Lte \to \Lte$.
\end{definition}

Given an operator $R: \dom(R) \subseteq L_2 \to L_2$, the induced incremental norm of the operator is defined (similar to the notation in~\cite{vanderschaftL2GainPassivityTechniques2017}) 
\begin{equation}\label{eq:incremental_induced_norm}
    \Gamma(R) := \sup_{u_1, u_2 \in \dom(R)} \frac{\norm{Ru_1-Ru_2}_2}{\norm{u_1-u_2}_2} \in [0,\infty],
\end{equation}
Similarly, we define the induced non-incremental norm of the operator
\begin{equation}\label{eq:non_incremental_induced_norm}
    \gamma(R) := \sup_{u \in \dom(R)} \frac{\norm{Ru}_2}{\norm{u}_2} \in [0,\infty].
\end{equation}

Note that the definitions in Eqs.~\eqref{eq:incremental_induced_norm} and \eqref{eq:non_incremental_induced_norm} require a norm. Since $\Lte$ is not a normed space, the gain $\Gamma(R)$ ($\gamma(R)$) is computed using $L_2$ signals only. Therefore, if $R : \Lte \to \Lte$ but \emph{not causal} (e.g. the system $G(s)= e^s$), the gain $\Gamma(R)$ ($\gamma(R)$) is computed using inputs in $\dom(R) \subseteq L_2$ such that $R : \dom(R) \to L_2$. This subtlety is only relevant in absence of causality, as shown by the following lemma.

\begin{lemma}\label{lemma:L2_norm_to_L2e}
    Let $R: L_2 \to L_2$ be a causal operator, then $R: \Lte \to \Lte$ is causal and 
    \begin{align*}
        \Gamma(R) &= \sup_{u_1, u_2 \in \Lte} \sup_{T \in \R_{\geq 0}} \frac{\norm{P_T Ru_1-P_T Ru_2}_2}{\norm{P_T u_1-P_T u_2}_2}, \\ \gamma(R) &= \sup_{u \in \Lte} \sup_{T \in \R_{\geq 0}} \frac{\norm{P_T R u}_2}{\norm{P_T u}_2},
    \end{align*}
    where $\Gamma(R)$ and $\gamma(R)$ are defined in Eqs.~\eqref{eq:incremental_induced_norm} and \eqref{eq:non_incremental_induced_norm}, respectively.
\end{lemma}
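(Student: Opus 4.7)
My plan is to prove the lemma in three stages: first the causality of the extension, then the inequality $\widetilde{\Gamma} \le \Gamma(R)$, and finally $\Gamma(R) \le \widetilde{\Gamma}$, where $\widetilde{\Gamma}$ denotes the right-hand side of the claimed identity for the incremental gain. The argument for $\gamma(R)$ is then essentially a specialization. Throughout, the crucial tool is that the truncation $P_T$ is an orthogonal projection on $L_2$, hence a contraction, together with the defining relation $P_T R u := P_T R P_T u$ of the extension. Since $P_T u \in L_2$ for every $u \in \Lte$, the extension is well-defined; and because $P_T P_T = P_T$, we have $P_T R u = P_T R (P_T u) = P_T R (P_T P_T u) = P_T R P_T (P_T u)$, establishing causality of $R$ on $\Lte$ directly from the definition. (Alternatively, I could simply invoke the cited equivalence from Willems.)

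For $\widetilde{\Gamma} \le \Gamma(R)$, fix $u_1, u_2 \in \Lte$ and $T \in \R_{\geq 0}$, and set $v_i := P_T u_i \in L_2$. Causality gives $P_T R u_i = P_T R v_i$, so using $\|P_T x\|_2 \le \|x\|_2$ yields
\begin{equation*}
    \|P_T R u_1 - P_T R u_2\|_2 = \|P_T(R v_1 - R v_2)\|_2 \le \|R v_1 - R v_2\|_2 \le \Gamma(R)\|v_1 - v_2\|_2,
\end{equation*}
and the denominator is exactly $\|v_1 - v_2\|_2 = \|P_T u_1 - P_T u_2\|_2$. Dividing and taking the supremum over $u_1, u_2, T$ (skipping the trivial case $P_T u_1 = P_T u_2$, where the ratio is conventionally $0/0$ and does not contribute) gives the bound. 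The same computation with $u_2 = 0$ and using $R(0) = 0$ handles $\gamma$.

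For the reverse inequality $\Gamma(R) \le \widetilde{\Gamma}$, I would restrict the outer supremum in $\widetilde{\Gamma}$ to $u_1, u_2 \in L_2 \subseteq \Lte$ with $u_1 \ne u_2$, and let $T \to \infty$. Since $|(P_T x)(t)| \nearrow |x(t)|$ pointwise on $\R_{\ge 0}$, monotone convergence yields $\|P_T x\|_2 \to \|x\|_2$ for every $x \in L_2$, applied to $x = u_1 - u_2$ and $x = R u_1 - R u_2$. The denominator is strictly positive for all sufficiently large $T$, so the ratio
\begin{equation*}
    \frac{\|P_T R u_1 - P_T R u_2\|_2}{\|P_T u_1 - P_T u_2\|_2} \longrightarrow \frac{\|R u_1 - R u_2\|_2}{\|u_1 - u_2\|_2},
\end{equation*}
and each finite-$T$ ratio is bounded by $\widetilde{\Gamma}$, whence so is the limit. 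Taking the supremum over $(u_1, u_2) \in L_2 \times L_2$ gives $\Gamma(R) \le \widetilde{\Gamma}$. The $\gamma$ case is identical with $u_2 = 0$.

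The only place requiring care is the limit-of-ratios step in the last paragraph: I need to ensure that the denominator does not vanish in the limit and that monotone convergence applies to both the input and output. Both follow from $u_1 \ne u_2$ in $L_2$ and from $R u_i \in L_2$ by hypothesis on $R$. I expect no deeper obstacle; the lemma is essentially a bookkeeping result that exploits causality to pass between $L_2$ and $\Lte$ estimates, and the proof should occupy well under half a page.
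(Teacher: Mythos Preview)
Your proposal is correct and follows essentially the same route as the paper: causality plus the contraction property of $P_T$ gives $\widetilde{\Gamma}\le\Gamma(R)$, and restricting to $L_2$ inputs recovers the reverse inequality. Your treatment is in fact slightly more careful than the paper's, which tacitly assumes the supremum defining $\Gamma(R)$ is attained; your monotone-convergence argument as $T\to\infty$ avoids this, and you also address the causality of the extension explicitly, which the paper's proof omits.
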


\noindent The Proof can be found in Appendix~\ref{app:proofs}.

Lemma~\ref{lemma:L2_norm_to_L2e} shows that for causal systems, the induced \mbox{(non-)incremental} operator norm on $L_2$ carries over to  $\Lte$~\cite{vanderschaftL2GainPassivityTechniques2017}. This fact allows us to analyze $R$ on $L_2$ and extend the result to $\Lte$ using the causality assumption. 

A system $R$ is said to be $L_2$-stable if $R : L_2 \to L_2$, i.e. $Ru \in L_2$ for all $u\in L_2$. For an $L_2$-stable system, we define the (non-)incremental $L_2$-gain as $\Gamma(R)$ ($\gamma(R)$) from Eq.~\eqref{eq:incremental_induced_norm} (Eq.~\eqref{eq:non_incremental_induced_norm}). When $R:L_2 \to L_2$ and $\Gamma(R) < \infty$ ($\gamma(R) < \infty$), we call the system (non-)incrementally stable. The general approach in this work is to show that $R: L_2 \to L_2$ (i.e. $L_2$-stable) and $\Gamma(R) <\infty$ ($\gamma(R)< \infty$), and finally extend the domain to $\Lte$ by proving or assuming causality. 

Note that inverses of operators (which are single-valued) may be multi-valued, hence they require the more general treatment as relation instead of operator. If a relation is multi-valued, it has infinite (incremental) gain per definition. 

A particularly useful class of systems $R$ are those that are LTI. Linearity is defined by $R(\alpha u_1 + u_2) = \alpha (R u_1) + (R u_2)$ for all $u_1, u_2 \in \Lte$ and $\alpha \in \R$. Time-invariance is defined by $R (\sigma_T u) = \sigma_T (R u) $ for all $u \in \Lte$ and $T \in \R_{\geq 0}$, where 
\begin{equation*}
    (\sigma_T u)(t) = 
    \begin{cases}
        u(t-T) &\text{ if } t \geq T, \\
        0 &\text{ else},
    \end{cases}
\end{equation*}
is the time translation operator. These two properties result in the fact that the system $R$ can be represented by a transfer function $R(s)$ in the Laplace domain, by abuse of notation. Equivalently, $R$ is represented by an impulse response $g_R(t)$ in the time domain. Given some input $u \in \Lte$, the output $y(t)=(Ru)(t)$ is given by the convolution $y(t) = (g_R \ast u)(t) = \int_{-\infty}^\infty g_R(t-\tau) u(\tau) d \tau $, or, in the Laplace domain, by $Y(s) = R(s) U(s)$.

\subsection{Scaled Relative Graphs}\label{sec:srg_definitions}

We now turn to the definition and properties of the SRG, as introduced by Ryu et al. in~\cite{ryuScaledRelativeGraphs2022}. We follow closely the exposition of the SRG as given by Chaffey et al. in~\cite{chaffeyGraphicalNonlinearSystem2023}.

\subsubsection{Definitions}

Let $\mathcal{L}$ be a Hilbert space, and $R : \dom(R) \subseteq \mathcal{L} \to \mathcal{L}$ a relation. The angle between $u, y\in \mathcal{L}$ is defined as 
\begin{equation}\label{eq:def_srg_angle}
    \angle(u, y) := \cos^{-1} \frac{\mathrm{Re} \inner{u}{y}}{\norm{u} \norm{y}} \in [0, \pi].
\end{equation}
Given $u_1, u_2 \in \mathcal{U}$, we define the set of complex numbers
\begin{multline*}
    z_R(u_1, u_2) := \\ \left\{ \frac{\norm{y_1-y_2}}{\norm{u_1-u_2}} e^{\pm j \angle(u_1-u_2, y_1-y_2)} \mid y_1 \in R u_1, y_2 \in R u_2 \right\}.
\end{multline*}
The SRG of $R$ over the set $\mathcal{U}$ is defined as
\begin{equation*}
    \SRG_\mathcal{U} (R) := \bigcup_{u_1, u_2 \in \mathcal{U}} z_R(u_1, u_2) \subseteq \C_\infty.
\end{equation*}
When $\mathcal{U}=\dom(R)$, we denote $\SRG_{\dom(R)}(R) = \SRG(R)$. 

One can also define the \emph{Scaled Graph} (SG) around some particular input. The SG of an operator $R$ with one input $u^\star \in \dom(R)$ fixed and the other in set $\mathcal{U}$ is defined as
\begin{equation}
    \SG_{\mathcal{U}, u^\star}(R) := \{ z_R(u, u^\star) \mid u \in \mathcal{U} \}.
\end{equation}
Again when $\mathcal{U}=\dom(R)$, we use the shorthand $\SG_{\dom(R), u^\star}(R) = \SG_{u^\star}(R)$. The SG around $u^\star=0$ is particularly interesting, because the radius of $\SG_{0}(R)$ gives a non-incremental $L_2$-gain bound for the operator $R$. 

By definition of the SRG (SG), the (non-)incremental gain of an operator $R:L_2 \to L_2$, defined in Eq.~\eqref{eq:incremental_induced_norm} (Eq.~\eqref{eq:non_incremental_induced_norm}), is equal to the radius of the SRG (SG at zero), i.e. $\Gamma(R) = \rmin(\SRG(R))$ ($\gamma(R) = \rmin(\SG_0(R))$).

\subsubsection{Operations on SRGs}\label{sec:operations_on_srgs}

Now, we study the SRG of sums, concatenations and inverses of operators. The facts presented here are proven in~\cite[Chapter 4]{ryuScaledRelativeGraphs2022}. 

Inversion of a point $z = re^{j \omega} \in \C$ is defined as the M\"obius inversion $r e^{j \omega} \mapsto (1/r)e^{j \omega}$. By $R^{-1}$ we mean the relational inverse. An operator $R$ satisfies the \emph{chord property} if for all $z \in \SRG(R) \setminus \{ \infty \}$ it holds that $[z, \bar{z}] \subseteq \SRG(R)$. An operator $R$ is said to satisfy the left-arc (right-arc) property if for all $z \in \SRG(R)$, it holds that $\operatorname{Arc}^-(z, \bar{z}) \subseteq \SRG(R)$ ($\operatorname{Arc}^+(z, \bar{z}) \subseteq \SRG(R)$). If $R$ satisfies the left-arc, right-arc, or both arc properties, it is said to satisfy \emph{an} arc property. See Appendix~\ref{app:complex_geometry} for the definition of the chord and arcs.

\begin{proposition}\label{prop:srg_calculus}
    Let $0 \neq \alpha \in \R$ and let $R,S$ be arbitrary operators on the Hilbert space $\mathcal{L}$. Then\footnote{Where it is understood that $\dom(R \alpha) = \frac{1}{\alpha} \dom(R)$, $\dom(R^{-1}) =\ran(R)$, $\dom(R+S) = \dom(R)\cap \dom(S)$ and $\dom(RS) \subseteq \dom(S)$.}, 
    \begin{enumerate}[label=\alph*.]
        \item\label{eq:srg_calculus_alpha} $\SRG(\alpha R) = \SRG(R \alpha) = \alpha \SRG(R)$,
        \item\label{eq:srg_calculus_plus_one} $\SRG(I + R) = 1 + \SRG(R)$, where $I$ denotes the identity on $\mathcal{L}$,
        \item\label{eq:srg_calculus_inverse} $\SRG(R^{-1}) = (\SRG(R))^{-1}$.
        \item\label{eq:srg_calculus_parallel} If at least one of $R, S$ satisfies the chord property, then $\SRG(R + S) \subseteq \SRG(R) + \SRG(S)$.
        \item\label{eq:srg_calculus_series} If at least one of $R, S$ satisfies an arc property, then $\SRG(R S) \subseteq \SRG(R) \SRG(S)$.
    \end{enumerate}
    If the SRGs above contain $\infty$ or are the empty set, extra care is required, see Ref.~\cite{ryuScaledRelativeGraphs2022}. 
\end{proposition}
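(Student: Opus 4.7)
My plan is to argue on the individual building blocks $z_R(u_1,u_2)$ for fixed $u_1,u_2\in\dom(R)$ (and similarly $z_S$): if I can express $z_{R'}(u_1,u_2)$ as an algebraic transformation of $z_R(u_1,u_2)$ (and possibly $z_S(u_1,u_2)$), then the identity (a)--(c) or inclusion (d)--(e) will follow by taking unions over admissible inputs. Whenever the algebraic expression does not respect the conjugate-pair structure $\rho e^{\pm j\theta}$ built into the SRG definition, I will invoke the chord or arc property to supply the missing points.

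I would dispatch parts (a)--(c) by direct Hilbert-space calculation. For (a), scaling $y_i\mapsto\alpha y_i$ multiplies the magnitude ratio by $|\alpha|$ and (for $\alpha<0$) maps the angle to $\pi-\theta$; combined with the $\pm$ in the definition, this collapses to pointwise multiplication of $z_R(u_1,u_2)$ by $\alpha$, and the variant $R\alpha$ follows by the substitution $u_i'=\alpha u_i$ inside the supremum. For (b), I would expand
\begin{equation*}
\|(u_1-u_2)+(r_1-r_2)\|^2 = \|u_1-u_2\|^2 + 2\,\mathrm{Re}\inner{u_1-u_2}{r_1-r_2} + \|r_1-r_2\|^2,
\end{equation*}
together with the analogous identity for $\mathrm{Re}\inner{u_1-u_2}{(u_1-u_2)+(r_1-r_2)}$, to obtain $z_{I+R}(u_1,u_2)=1+z_R(u_1,u_2)$ pointwise. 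For (c), symmetry of $\mathrm{Re}\inner{u}{y}$ in $u,y$ shows that swapping input and output preserves the angle and inverts the magnitude ratio, giving the pointwise M\"obius inverse.

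The hard part will be (d) and (e). For (d), with $r_i\in Ru_i$, $s_i\in Su_i$ and $y_i=r_i+s_i$, the magnitude and angle of $y_1-y_2$ relative to $u_1-u_2$ depend on more than the magnitudes and projected angles of $r_1-r_2$ and $s_1-s_2$: they also depend on the \emph{relative} sign of their perpendicular components in the plane spanned with $u_1-u_2$. Because the SRG only records the conjugate pair $\rho e^{\pm j\theta}$, a naive choice of signs can push the resulting sum outside $\SRG(R)+\SRG(S)$. My plan is to invoke the chord property of (say) $R$, which makes the entire segment $[z_R,\bar z_R]$ available in $\SRG(R)$, and to identify by elementary complex geometry an explicit chord point whose sum with a suitably conjugated point of $\SRG(S)$ realizes $z_{R+S}(u_1,u_2)$. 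Part (e) is the multiplicative analogue: writing $y\in R(Su)$ as $y\in Rv$ with $v\in Su$, magnitudes multiply and angles add, and the arc property replaces the chord property because the locus of complex products is built from circular arcs rather than straight segments.

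I expect the geometric matching in (d) and (e) to be the main obstacle: identifying the correct chord (or arc) point requires careful book-keeping of the conjugate structure and of the perpendicular components. These manipulations are elementary but delicate, and the edge cases in which the SRG contains $\infty$ or is empty need separate treatment, as the statement itself flags and as carried out in~\cite[Ch.~4]{ryuScaledRelativeGraphs2022}.
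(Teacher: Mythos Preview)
The paper does not give its own proof of this proposition: immediately before the statement it writes ``The facts presented here are proven in~\cite[Chapter~4]{ryuScaledRelativeGraphs2022},'' and no argument is supplied in the body or the appendix. So there is nothing in the paper to compare your proposal against; the proposition is imported wholesale from Ryu et al.

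That said, your outline is the standard one from that reference and is essentially correct for (a)--(d). One imprecision worth flagging in (e): you write that for $y\in Rv$, $v\in Su$ ``magnitudes multiply and angles add.'' Magnitudes do multiply, but the SRG angle $\angle(u,y)$ in a general Hilbert space does \emph{not} equal $\angle(u,v)+\angle(v,y)$; one only has the triangle-type bounds $|\angle(u,v)-\angle(v,y)|\le\angle(u,y)\le\angle(u,v)+\angle(v,y)$. This is precisely why the arc property is needed---to make the entire arc $\{|z_R|\,e^{\pm j\theta}:\theta\text{ in the admissible range}\}$ available in $\SRG(R)$ so that some arc point multiplied by $z_S$ (or its conjugate) realises $z_{RS}$. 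Your sentence about ``the locus of complex products is built from circular arcs'' hints at the right geometry, but the preceding claim that angles add would, if taken literally, make the arc hypothesis unnecessary. Once you replace ``angles add'' by the angle triangle inequality, the rest of your plan goes through as in~\cite{ryuScaledRelativeGraphs2022}.
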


We note that the proof of Proposition~\ref{prop:srg_calculus} in \cite{ryuScaledRelativeGraphs2022} trivially carries over to $\SG_0$ if all the operators $R,S$ satisfy $R(0)=S(0)=0$ (since then $0\in \dom(R)$ and $0\in \ran(R)$ is guaranteed). Under this condition, Proposition~\ref{prop:srg_calculus} can be restated with the substitution $\SRG \to \SG_0$, i.e. the SRG interconnection rules hold for the SG at zero.

\subsubsection{SRG Bounds for Common Operators}

Here, we present some results that yield a bound for the SRG of common operators, which are LTI operators and static nonlinear functions. Define the Nyquist diagram of a transfer function $G(s)$ as $\operatorname{Nyquist}(G) = \{ G(j \omega)\mid \omega \in \R \}$.

\begin{theorem}\label{thm:lti_srg}
    Let $R : \Lte \to \Lte$ be stable and LTI with transfer function $R(s)$, then $\SRG(R) \cap \C_{\mathrm{Im} \geq 0}$ is the h-convex hull (see Appendix~\ref{app:complex_geometry}) of $\operatorname{Nyquist}(R) \cap \C_{\mathrm{Im} \geq 0}$.
\end{theorem}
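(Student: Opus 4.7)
The plan is to (i) use linearity of $R$ to collapse $\SRG(R)$ into a single-variable object, (ii) apply Parseval/Plancherel to translate the problem into a moment problem over probability measures on $\R$, and (iii) identify the resulting ordinary convex hull in $\R^2$ with the h-convex hull on $\C_{\mathrm{Im}\geq 0}$ through the bijection $\phi : z \mapsto (\mathrm{Re}(z), |z|^2)$.

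First, linearity gives $Ru_1 - Ru_2 = R(u_1-u_2)$, so with $u := u_1-u_2$ the SRG simplifies to
\begin{equation*}
\SRG(R) = \left\{ \tfrac{\norm{Ru}_2}{\norm{u}_2}\, e^{\pm j \angle(u,Ru)} \,\Big|\, u \in L_2 \setminus \{0\} \right\},
\end{equation*}
which is already $\SG_0(R)$. A point in $\C_{\mathrm{Im}\geq 0}$ is determined by the pair $(\mathrm{Re}(z), |z|^2)$, so it suffices to characterize the attainable pairs. Extending causal $u$ by zero to $\R$ and applying Plancherel yields $|z|^2 = \int_\R |R(j\omega)|^2 d\mu$ and $\mathrm{Re}(z) = \int_\R \mathrm{Re}\,R(j\omega)\, d\mu$, with the probability measure $d\mu(\omega) = |U(j\omega)|^2/(2\pi\norm{u}_2^2)\, d\omega$. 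Narrow Lorentzians $U_\varepsilon(j\omega) = 1/(\varepsilon + j(\omega-\omega_0))$, which are Laplace transforms of causal $L_2$-signals, yield $\mu \rightharpoonup \delta_{\omega_0}$ as $\varepsilon \downarrow 0$; finite superpositions are weak-$*$ dense in probability measures on $\R$, so the set of attainable $(\mathrm{Re}(z), |z|^2)$ is exactly the closed convex hull in $\R^2$ of $\{(\mathrm{Re}\, R(j\omega), |R(j\omega)|^2) : \omega \in \R\}$ (closedness from $R(j\cdot) \in L^\infty$ for stable $R$).

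Next comes the geometric core: $\phi$ is a bijection from $\C_{\mathrm{Im}\geq 0}$ onto $\{(x,y)\in\R^2 : y \geq x^2\}$, and the arc through $a, b \in \C_{\mathrm{Im}\geq 0}$ centred at $c \in \R$ projects under $\phi$ onto the straight segment $[\phi(a), \phi(b)]$. Indeed, $|z-c|^2 = r^2$ rewrites as $|z|^2 = 2c\, \mathrm{Re}(z) + (r^2 - c^2)$, a \emph{linear} relation between $|z|^2$ and $\mathrm{Re}(z)$ that necessarily passes through $\phi(a)$ and $\phi(b)$. Iterating this lemma and invoking Carath\'eodory's theorem (three-point convex combinations suffice in $\R^2$) gives $\operatorname{h-conv}(N) = \phi^{-1}(\operatorname{conv}\phi(N))$ for any $N \subseteq \C_{\mathrm{Im}\geq 0}$. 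Applied to $N = \operatorname{Nyquist}(R) \cap \C_{\mathrm{Im}\geq 0}$, and noting that the conjugation symmetry $R(-j\omega)=\overline{R(j\omega)}$ implies $\phi(\operatorname{Nyquist}(R)\cap \C_{\mathrm{Im}\geq 0}) = \phi(\operatorname{Nyquist}(R))$, this matches exactly the set produced in the previous step.

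I expect the main obstacle to be the geometric lemma and its iteration: while the linearization $|z|^2 = 2c\,\mathrm{Re}(z) + \text{const}$ on an arc is a short computation, proving that arbitrary h-convex combinations (not just two-point arcs) correspond to ordinary $\R^2$ convex combinations under $\phi$ requires inductively combining arcs of arcs while controlling that intermediate centres stay real and intermediate points stay in $\C_{\mathrm{Im}\geq 0}$. A secondary technicality is the density argument of step (ii): causal inputs produce Fourier transforms in $H^2(\R)$ rather than arbitrary $L_2(\R)$, so one has to verify carefully that Lorentzian approximants and their finite linear combinations suffice to reach every probability measure on $\R$ in the weak-$*$ sense.
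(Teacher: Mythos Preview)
The paper does not supply its own proof of this theorem; it simply cites \cite{chaffeyGraphicalNonlinearSystem2023} (see the line following Proposition~\ref{prop:static_nl_srg_non_incremental}). Your route---linearity to collapse to $\SG_0$, Parseval to obtain a moment problem over probability measures, and the bijection $\phi:z\mapsto(\mathrm{Re}(z),|z|^2)$ to convert h-convexity in $\C_{\mathrm{Im}\geq 0}$ into ordinary convexity in $\{(x,y):y\geq x^2\}$---is exactly the standard argument underlying that reference (and Pates~\cite{patesScaledRelativeGraph2021}), so there is nothing substantively different to compare.

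Your two self-flagged obstacles are lighter than you suggest. For the geometric step, no induction is needed: you have already shown that $\phi$ is a bijection onto the convex region $\{y\geq x^2\}$ and that it carries every $\operatorname{Arc}_{\min}$ to a straight segment. A set $S\subseteq\C_{\mathrm{Im}\geq 0}$ is then h-convex if and only if $\phi(S)$ is convex, and taking hulls commutes with $\phi$ directly---there is no issue of ``intermediate centres staying real,'' since every segment in $\{y\geq x^2\}$ pulls back to a single well-defined $\operatorname{Arc}_{\min}$. For the density step, your Lorentzians are already Laplace transforms of causal $L_2$ signals, so the $H^2$ restriction is automatically met; the only point to justify is that finite \emph{sums} of Lorentzians at distinct centres $\omega_i$ yield, after squaring, approximately the desired convex combination of point masses. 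This follows because the diagonal terms scale like $1/\varepsilon$ while the cross terms stay bounded as $\varepsilon\downarrow 0$, so after normalisation the cross terms vanish. With those two remarks your outline is complete.
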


\begin{proposition}\label{prop:static_nl_srg}
    If $\partial \phi \in [k_1, k_2]$, i.e. $\phi$ satisfies the incremental sector condition $k_1 \leq \frac{\phi(x) - \phi(y)}{x-y} \leq k_2$ for all $x,y\in \R$, then 
    \begin{equation*}
        \SRG(\phi) \subseteq D_{[k_1, k_2]}.
    \end{equation*}
    Furthermore, if there is a point at which the slope of $\phi$ switches in a discontinuous fashion from $k_1$ to $k_2$, then the inclusion becomes an equality.
\end{proposition}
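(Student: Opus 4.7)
The plan is to reduce Proposition~\ref{prop:static_nl_srg} to a pointwise sector inequality combined with a moment identification on a natural probability measure induced by $\Delta u^2$. For any $u_1,u_2 \in L_2$, I would first introduce $\Delta u(t) := u_1(t) - u_2(t)$ and the pointwise secant slope $m(t) := (\phi(u_1(t)) - \phi(u_2(t)))/(u_1(t)-u_2(t))$ wherever $\Delta u(t) \neq 0$ (choosing any value in $[k_1,k_2]$ on the null set where $\Delta u(t)=0$), so that $\Delta y(t) := \phi(u_1(t)) - \phi(u_2(t)) = m(t)\Delta u(t)$ and $m(t) \in [k_1,k_2]$ by the incremental sector condition. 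Since $\phi$ is real-valued, both $\Delta u, \Delta y$ are real signals and $\mathrm{Re}\,\inner{\Delta u}{\Delta y} = \int m(t)\Delta u(t)^2\,dt$.

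Next I would interpret the SRG point via the probability measure $d\mu(t) := \Delta u(t)^2\,dt/\norm{\Delta u}_2^2$: setting $\alpha := r\cos\angle(\Delta u,\Delta y) = \int m\,d\mu$ and $r^2 := \norm{\Delta y}_2^2/\norm{\Delta u}_2^2 = \int m^2\,d\mu$, the SRG point is $\alpha \pm j\beta$ with $\beta^2 = r^2 - \alpha^2 = \mathrm{Var}_\mu(m)$. The inclusion then drops out of the pointwise factorisation $(m(t)-k_1)(m(t)-k_2) \leq 0$: integrating against $d\mu$ yields $r^2 - (k_1+k_2)\alpha + k_1 k_2 \leq 0$, which rearranges exactly to $(\alpha - (k_1+k_2)/2)^2 + \beta^2 \leq ((k_2-k_1)/2)^2$, i.e.\ membership in $D_{[k_1,k_2]}$.

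For the reverse inclusion under the switching hypothesis I would invert this moment calculation. Near the switching point $x^\star$, where $\phi$ is locally affine with slopes $k_1, k_2$ on opposite sides, a straddling pair $u_1(t) < x^\star < u_2(t)$ has secant slope $m(t) = \lambda(t) k_1 + (1-\lambda(t)) k_2$ with $\lambda(t) := (x^\star - u_1(t))/(u_2(t) - u_1(t)) \in (0,1)$, so any prescribed value of $m(t)$ in $[k_1,k_2]$ can be realised pointwise. Taking $u_1, u_2$ piecewise constant on disjoint compactly supported intervals and tuning both the prescribed $m(t)$ and the mass $\Delta u(t)^2$ on each interval realises any finitely supported probability measure $\mu$ on $[k_1,k_2]$. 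For any target $z = \alpha \pm j\beta \in D_{[k_1,k_2]}$, a three-atom measure on $\{k_1, \alpha, k_2\}$ attains the required mean $\alpha$ and variance $\beta^2$, feasibly precisely when $\beta^2 \leq (k_2-\alpha)(\alpha-k_1)$, i.e.\ when $z$ lies in the disk; this places $z$ in $\SRG(\phi)$.

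The hard part is the reverse inclusion: one must check that the explicit $u_1,u_2$ lie in $L_2$ (immediate from compact support) and, if $\phi$ is only piecewise-affine on a small neighbourhood of $x^\star$, shrink the excursions toward $x^\star$ so that the identity $m = \lambda k_1 + (1-\lambda)k_2$ is exact on the chosen intervals. The forward inclusion itself is a routine pointwise sector calculation; the geometry of $D_{[k_1,k_2]}$ is made transparent by the Bhatia--Davis-style variance bound $\mathrm{Var}_\mu(m) \leq (k_2-\alpha)(\alpha-k_1)$.
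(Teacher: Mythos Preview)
The paper does not actually prove this proposition; it simply refers the reader to \cite{chaffeyGraphicalNonlinearSystem2023}. Your proposal is therefore not competing with an in-paper proof but supplying one, and it is correct. The forward inclusion via the probability measure $d\mu = \Delta u^2\,dt/\norm{\Delta u}_2^2$ and the integrated pointwise inequality $(m-k_1)(m-k_2)\leq 0$ is precisely the standard argument (and is how the cited reference proceeds); the variance/Bhatia--Davis reading of $\beta^2$ is a clean way to see why the resulting region is a disk. Your reverse-inclusion construction with piecewise-constant straddling signals and a three-atom measure on $\{k_1,\alpha,k_2\}$ is also sound.

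One small caveat on the equality part: your construction uses the exact identity $m=\lambda k_1+(1-\lambda)k_2$, which holds when $\phi$ is genuinely piecewise affine on a neighbourhood of the switching point $x^\star$. If the hypothesis is read more loosely as ``the one-sided derivatives at $x^\star$ are $k_1$ and $k_2$'' without local affinity, then shrinking the excursions toward $x^\star$ only makes the identity approximate, and you would need an additional limiting or closure argument to land exactly on every point of $D_{[k_1,k_2]}$. Under the natural reading of the statement (a kink in a piecewise-linear $\phi$), your argument is complete as written.
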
 

\begin{proposition}\label{prop:static_nl_srg_non_incremental}
    If $\phi \in [k_1, k_2]$, i.e., $\phi$ satisfies a sector condition $k_1 \leq \frac{\phi(x)}{x} \leq k_2$ for all $x \in \R$, then 
    \begin{equation*}
        \SG_{0}(\phi) \subseteq D_{[k_1, k_2]}.
    \end{equation*}
\end{proposition}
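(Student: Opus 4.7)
The plan is to rewrite the set-membership $z \in D_{[k_1,k_2]}$ as an inner-product inequality in $L_2$, and then use the static (pointwise) action of $\phi$ to reduce the whole claim to a pointwise consequence of the sector hypothesis.

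First I would translate the disk membership into algebra. Since $D_{[k_1,k_2]}$ has center $(k_1+k_2)/2$ and radius $(k_2-k_1)/2$, expanding $|z-(k_1+k_2)/2|^2 \leq ((k_2-k_1)/2)^2$ gives the equivalent scalar inequality
\begin{equation*}
    |z|^2 - (k_1+k_2)\,\mathrm{Re}(z) + k_1 k_2 \leq 0.
\end{equation*}
A point of $\SG_0(\phi)$ has the form $z = r e^{\pm j\theta}$ with $r = \norm{\phi u}_2/\norm{u}_2$ and $r\cos\theta = \mathrm{Re}\inner{u}{\phi u}/\norm{u}_2^2$ for some nonzero $u \in \dom(\phi)$ (using the standing convention $\phi(0)=0$). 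Substituting these into the disk inequality and clearing the positive factor $\norm{u}_2^2$ reduces membership to
\begin{equation*}
    \inner{\phi u - k_1 u}{\phi u - k_2 u} \leq 0,
\end{equation*}
and the $\pm$ sign on the exponential is immaterial because $D_{[k_1,k_2]}$ is symmetric about $\R$.

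Since $\phi$ is static, $(\phi u)(t) = \phi(u(t))$, so this inner product expands as the integral $\int_0^\infty (\phi(u(t)) - k_1 u(t))(\phi(u(t)) - k_2 u(t))\,dt$. A short sign-of-$x$ case analysis shows that the sector hypothesis $k_1 \leq \phi(x)/x \leq k_2$ is exactly the pointwise estimate $(\phi(x) - k_1 x)(\phi(x) - k_2 x) \leq 0$ on all of $\R$. Hence the integrand is nonpositive for every $t$, the integral is nonpositive, and the desired inclusion follows for each admissible $u$, giving $\SG_0(\phi) \subseteq D_{[k_1,k_2]}$.

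The argument is short and I expect the only mild obstacle to be bookkeeping of degenerate configurations ($u=0$ or $\phi u \equiv 0$), where the angle in the definition of $z_\phi(u,0)$ is not defined; these configurations contribute nothing to $\SG_0(\phi)$ and can be excluded from the defining union without weakening the inclusion. The proof closely parallels the incremental version (Proposition~\ref{prop:static_nl_srg}) but fixes one input at zero and replaces the incremental slope condition on differences by the non-incremental sector condition evaluated pointwise in $t$.
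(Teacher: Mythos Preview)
Your argument is correct and is precisely the standard route: rewrite $z\in D_{[k_1,k_2]}$ as $\inner{\phi u-k_1 u}{\phi u-k_2 u}\leq 0$ and verify this pointwise via the sector condition. The paper does not give its own proof of this proposition but defers to~\cite{chaffeyGraphicalNonlinearSystem2023}, where exactly this inner-product characterization is used, so your approach matches the cited source.
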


\noindent For proofs of Theorem~\ref{thm:lti_srg} and Propositions~\ref{prop:static_nl_srg} and \ref{prop:static_nl_srg_non_incremental}, see~\cite{chaffeyGraphicalNonlinearSystem2023}.

\subsection{Analysis of Simple Feedback Systems}\label{sec:nl_system_analysis}

In this section, we show how to analyze the stability of the \enquote{simple} feedback interconnection in Fig.~\ref{fig:chaffey_thm2} using the SRG. We improve the results from \cite{chaffeyGraphicalNonlinearSystem2023,chaffeyHomotopyTheoremIncremental2024} by adding a causality result in Theorem~\ref{thm:chaffey_thm2}, and we use a slightly more flexible notion of well-posedness.

\subsubsection{Stability Analysis using Scaled Relative Graphs}\label{sec:sys_analysis_w_srgs}

To analyze the stability of a feedback system, solutions must exist in the first place. Therefore, we define the following minimal definition of \emph{well-posedness} of a feedback system.

\begin{definition}\label{def:well-posedness}
    The interconnection in Fig.~\ref{fig:chaffey_thm2}, where $H_1, H_2  : \Lte \to \Lte$, is called well-posed if for all $r \in L_2$, there exist unique $e, y \in \Lte$ such that $e = r - H_2 y$ and $y = H_1 e$. 
\end{definition}

Inspired by~\cite[Ch. 5]{desoerFeedbackSystemsInputoutput1975}, our well-posedness definition only assumes solutions to exist for $r \in L_2$, and not $r \in \Lte$ as in~\cite{chaffeyHomotopyTheoremIncremental2024}. \emph{Causality} is not part of our well-posedness definition, as opposed to~\cite{willemsAnalysisFeedbackSystems1971,megretskiSystemAnalysisIntegral1997}. This choice separates stability analysis on $L_2$ from causality, allowing for non-causal multipliers~\cite{osheaImprovedFrequencyTime1967}. 

We prove the following theorem regarding incremental stability for any system $H_1$ feedback interconnected with $H_2$, as displayed in Fig.~\ref{fig:chaffey_thm2}. It first occurred in~\cite{chaffeyGraphicalNonlinearSystem2023} and was corrected in~\cite{chaffeyHomotopyTheoremIncremental2024}. We add the minor improvement of concluding \emph{causality} of the closed-loop, given that the subsystems $H_1$ and $H_2$ are causal.

\begin{theorem}\label{thm:chaffey_thm2}
    Consider $H_1, H_2 : L_2 \to L_2$, where $\Gamma(H_1) < \infty$ and $\Gamma(H_2) < \infty$, and for all $\tau \in [0, 1]$
    \begin{equation*}
        \dist(\SRG(H_1)^{-1}, -\tau \SRG(H_2)) \geq r_m >0,
    \end{equation*}
    where at least one of $\SRG(H_1), \SRG(H_2)$ obeys the chord property. Then, the feedback connection
    \begin{equation}\label{eq:chaffey_closed_loop_representations}
        T = (H_1^{-1} + H_2)^{-1} = H_1 (1 + H_2 H_1)^{-1}
    \end{equation}
    in Fig.~\ref{fig:chaffey_thm2} has an incremental $L_2$-gain bound of $1/r_m$ and is well-posed~\cite{chaffeyHomotopyTheoremIncremental2024}. If $H_1$ and $H_2$ are both causal, the closed-loop is causal as well.
\end{theorem}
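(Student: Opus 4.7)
The plan is to treat the three conclusions—the incremental $L_2$-gain bound, the well-posedness, and the new causality assertion—in sequence. The first two follow from the SRG calculus combined with the homotopy argument of \cite{chaffeyHomotopyTheoremIncremental2024}, which I would invoke almost directly, while the causality claim is the piece requiring a self-contained verification.

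For the gain bound, I would apply Proposition~\ref{prop:srg_calculus} to $T=(H_1^{-1}+H_2)^{-1}$. The inversion rule gives $\SRG(H_1^{-1})=\SRG(H_1)^{-1}$; the sum rule, legitimate under the chord-property hypothesis (together with the chord/arc interchange under inversion recorded in \cite{ryuScaledRelativeGraphs2022}), yields $\SRG(H_1^{-1}+H_2)\subseteq \SRG(H_1)^{-1}+\SRG(H_2)$; and a final inversion bounds $\SRG(T)$ by the M\"obius inverse of this sum. The hypothesis at $\tau=1$ states precisely that $\SRG(H_1)^{-1}+\SRG(H_2)$ stays outside $D_{r_m}(0)$, so its inverse lies in $D_{1/r_m}(0)$, giving $\Gamma(T)=\rmin(\SRG(T))\leq 1/r_m$.

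For well-posedness, I would use the homotopy $T_\tau=(H_1^{-1}+\tau H_2)^{-1}$ on $\tau\in[0,1]$. At $\tau=0$, $T_0=H_1$ is trivially well-posed since $H_1:L_2\to L_2$ has finite incremental gain. The same SRG computation under the uniform hypothesis produces $\Gamma(T_\tau)\leq 1/r_m$ for every $\tau$, and the homotopy result recalled in Appendix~\ref{app:non_incremental_homotopy} propagates well-posedness from $\tau=0$ to $\tau=1$.

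The causality claim is the genuinely new content. Given $r_1,r_2\in L_2$ and a truncation time $T\geq 0$ with $P_T r_1=P_T r_2$, let $y_i=Tr_i$ and $e_i=r_i-H_2 y_i$, which exist and are unique by well-posedness. Since $H_1$ and $H_2$ are causal on $L_2$—and therefore extend causally to $\Lte$ by Lemma~\ref{lemma:L2_norm_to_L2e}—applying $P_T$ to the defining equations produces the truncated fixed-point system
\begin{align*}
    P_T e_i &= P_T r_i - (P_T H_2 P_T)(P_T y_i),\\
    P_T y_i &= (P_T H_1 P_T)(P_T e_i),
\end{align*}
in which both pairs $(P_T e_i,P_T y_i)$ solve the same equations with the common right-hand side $P_T r_1=P_T r_2$. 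The main obstacle, and the step I would handle most carefully, is showing that this truncated system admits only one solution; the cleanest route I see is to view it as the feedback interconnection of the causal operators $P_T H_1 P_T$ and $P_T H_2 P_T$, whose incremental gains and SRG separation are inherited from $H_1,H_2$ because $P_T$ is nonexpansive, so the gain-bound argument of the first paragraph applied to this truncated feedback yields uniqueness. This forces $P_T(Tr_1)=P_T(Tr_2)$, giving causality of the closed-loop on $L_2$ and, via Lemma~\ref{lemma:L2_norm_to_L2e}, on $\Lte$.
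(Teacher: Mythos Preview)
Your handling of the gain bound and well-posedness is essentially the paper's: both parts reduce to \cite[Corollary~1]{chaffeyHomotopyTheoremIncremental2024}, which you invoke in the same way. The causality argument, however, diverges from the paper and has a gap precisely at the step you yourself flag as the main obstacle.

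The claim that ``SRG separation [is] inherited from $H_1,H_2$ because $P_T$ is nonexpansive'' is not correct as stated. Nonexpansiveness of $P_T$ yields $\Gamma(P_T H_i P_T)\le\Gamma(H_i)$, but not $\SRG(P_T H_i P_T)\subseteq\SRG(H_i)$. Working on the Hilbert space $\ran(P_T)$ and using causality together with self-adjointness of $P_T$, one checks that for $u_1,u_2\in\ran(P_T)$ the point $z_{P_T H}(u_1,u_2)$ has the \emph{same real part} as $z_H(u_1,u_2)$ but possibly smaller modulus, so it lies on the chord $[z_H,\bar z_H]$ rather than at $z_H$. Hence $\SRG_{\ran(P_T)}(P_T H_i)$ is contained only in the chord closure of $\SRG(H_i)$. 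To transfer the distance hypothesis to the truncated operators you would therefore need the chord property on \emph{both} $H_1$ and $H_2$, whereas the theorem assumes it on only one; without that, the separation condition need not hold for the truncated pair, and your uniqueness argument does not close.

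The paper sidesteps this entirely. It recalls that well-posedness in \cite{chaffeyHomotopyTheoremIncremental2024} is established by repeated Banach fixed-point steps, and then invokes \cite[Theorem~2.11]{willemsAnalysisFeedbackSystems1971}: each contraction is assembled from causal operators with finite incremental gain, so each fixed point remains in the subalgebra $\tilde{\mathcal B}^+$ of such operators, and causality of $T$ comes directly out of the construction without any analysis of truncated SRGs.
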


\noindent The proof can be found in Appendix~\ref{app:proofs}. Note that the inverses in Eq.~\eqref{eq:chaffey_closed_loop_representations} are defined via the relation.

Upon replacing $\SRG$ with $\SG_0$ in Theorem~\ref{thm:chaffey_thm2}, one obtains a non-incremental $L_2$-gain bound (see Appendix~\ref{app:non_incremental_homotopy}, Theorem~\ref{thm:chaffey_thm2_non_incremental}). However, no conclusion can be made regarding the well-posedness and/or causality of the closed-loop system. 

The requirement that $H_1$ is stable poses a severe limitation for the applicability of SRG methods to controller design, as it is impossible to stabilize an unstable plant using Theorem~\ref{thm:chaffey_thm2}. Removing this limitation is the first key result of this paper.

\begin{figure}[t]
\centering
\begin{subfigure}[t]{0.45\linewidth}
\centering

\tikzstyle{block} = [draw, rectangle, 
minimum height=2em, minimum width=2em]
\tikzstyle{sum} = [draw, circle, node distance={0.5cm and 0.5cm}]
\tikzstyle{input} = [coordinate]
\tikzstyle{output} = [coordinate]
\tikzstyle{pinstyle} = [pin edge={to-,thin,black}]

\begin{tikzpicture}[auto, node distance = {0.3cm and 0.5cm}]
    \node [input, name=input] {};
    \node [sum, right = of input] (sum) {$ $};
    \node [block, right = of sum] (lti) {$G(s)$};
    \node [coordinate, right = of lti] (z_intersection) {};
    \node [output, right = of z_intersection] (output) {}; 
    \node [block, below = of lti] (static_nl) {$\phi$};

    \draw [->] (input) -- node {$r$} (sum);
    \draw [->] (sum) -- node {$e$} (lti);
    \draw [->] (lti) -- node [name=z] {$y$} (output);
    \draw [->] (z) |- (static_nl);
    \draw [->] (static_nl) -| node[pos=0.99] {$-$} (sum);
\end{tikzpicture}

\caption{Block diagram of a Lur'e system, where $G$ is an LTI dynamic block and $\phi$ is a static nonlinearity.}
\label{fig:lure}
\end{subfigure}
\hfill
\begin{subfigure}[t]{0.45\linewidth}
    \centering

    \tikzstyle{block} = [draw, rectangle, 
    minimum height=2em, minimum width=2em]
    \tikzstyle{sum} = [draw, circle, node distance={0.5cm and 0.5cm}]
    \tikzstyle{input} = [coordinate]
    \tikzstyle{output} = [coordinate]
    \tikzstyle{pinstyle} = [pin edge={to-,thin,black}]
    
    \begin{tikzpicture}[auto, node distance = {0.3cm and 0.5cm}]
        \node [input, name=input] {};
        \node [sum, right = of input] (sum) {$ $};
        \node [block, right = of sum] (lti) {$H_1$};
        \node [coordinate, right = of lti] (z_intersection) {};
        \node [output, right = of z_intersection] (output) {}; 
        \node [block, below = of lti] (static_nl) {$H_2$};
    
        \draw [->] (input) -- node {$r$} (sum);
        \draw [->] (sum) -- node {$e$} (lti);
        \draw [->] (lti) -- node [name=z] {$y$} (output);
        \draw [->] (z) |- (static_nl);
        \draw [->] (static_nl) -| node[pos=0.99] {$-$} (sum);
    \end{tikzpicture}
    
    \caption{Block diagram of a general feedback interconnection where $H_1$ and $H_2$ can be LTI or NL static or dynamic blocks.}
    \label{fig:chaffey_thm2}
\end{subfigure}
\hfill
\begin{subfigure}[t]{\linewidth}
    \centering

    \tikzstyle{block} = [draw, rectangle, 
    minimum height=2em, minimum width=2em]
    \tikzstyle{sum} = [draw, circle, node distance={0.5cm and 0.5cm}]
    \tikzstyle{input} = [coordinate]
    \tikzstyle{output} = [coordinate]
    \tikzstyle{pinstyle} = [pin edge={to-,thin,black}]
    
    \begin{tikzpicture}[auto, node distance = {0.3cm and 0.5cm}]
        \node [input, name=input] {};
        \node [sum, right = of input] (sum) {$ $};
        \node [block, right = of sum] (lti) {$L(s)$};
        \node [coordinate, right = of lti] (z_intersection) {};
        \node [output, right = of z_intersection] (output) {}; 
        \node [coordinate, below = of lti] (static_nl) {};
    
        \draw [->] (input) -- node {$r$} (sum);
        \draw [->] (sum) -- node {$e$} (lti);
        \draw [->] (lti) -- node [name=z] {$y$} (output);
        \draw [-] (z) |- (static_nl);
        \draw [->] (static_nl) -| node[pos=0.99] {$-$} (sum);
    \end{tikzpicture}
    
    \caption{Feedback interconnection with LTI loop transfer $L(s)$.}
    \label{fig:linear_feedback}
\end{subfigure}
    \caption{Simple Feedback Systems.}
    \label{fig:simple_nl_systems}
\end{figure}
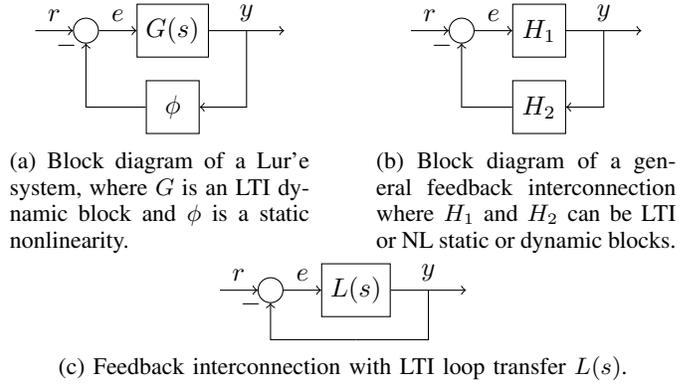

\subsubsection{Lur'e Systems}\label{sec:lure_systems}

An important special case of Fig.~\ref{fig:chaffey_thm2} is the Lur'e system. As depicted in Fig.~\ref{fig:lure}, the Lur'e system consists of a SISO LTI block $G$ in feedback with a static nonlinear function $\phi : \R \to \R$. The closed-loop operator in Eq.~\eqref{eq:chaffey_closed_loop_representations} reads
\begin{equation}\label{eq:lure_closedloop}
    T = (G^{-1} + \phi)^{-1}.
\end{equation}
Stability of a Lur'e system with $\phi$ a sector bounded nonlinearity can be analyzed using the circle criterion, see Appendix~\ref{app:circle_criterion}. Furthermore, in the special case that $\phi = \kappa \in \R$, stability of the LTI feedback loop can be analyzed using the Nyquist criterion, see Appendix~\ref{app:nyquist}.



\section{Stability Analysis with Scaled Relative Graphs: Pitfall and Resolution}\label{sec:pitfall_and_resolution}

In this section, we show how existing methods for stability analysis result in a pitfall. In Section~\ref{sec:resolution}, we present one of our main results, which is a resolution to this pitfall by combining the Nyquist criterion with the SRG. This resolution opens up the possibility to use \emph{unstable} LTI systems in SRG computations.

\subsection{Pitfall of Stability Analysis with SRGs}\label{sec:pitfall}

\subsubsection{An apparent contradiction}\label{sec:contraction}

We will now show that Theorem~\ref{thm:lti_srg} can lead to false conclusions when used in SRG calculations. Consider the simple feedback setup in Fig.~\ref{fig:linear_feedback}, where $L(s) = \frac{-2}{s^2+s+1}$. Since we work with an LTI system, well-posedness of $T = (1+L^{-1})^{-1} : \Lte \to \Lte$ is understood. Chaffey et al. introduced Theorem~\ref{thm:chaffey_thm2} in~\cite{chaffeyGraphicalNonlinearSystem2023}, which includes a homotopy argument, precisely to deal with well-posedness of the system. Therefore, one may expect to analyze the stability of $T$ using only the SRG calculus rules from Section~\ref{sec:srg_definitions} developed by Ryu et al. in~\cite{ryuScaledRelativeGraphs2022}, i.e. without using Theorem~\ref{thm:chaffey_thm2}. However, as argued below, this is not the case.

Firstly, we analyze the system using SRG calculus. Since $L(s) = \frac{-2}{s^2+s+1}$ has poles $s=-1/2 \pm j \sqrt{3}/2$, it is stable, and the $\SRG(L)$ is obtained via Theorem~\ref{thm:lti_srg}, see Fig.~\ref{fig:srg_L1}. The SRG of the closed-loop system, is obtained by first applying Proposition~\ref{prop:srg_calculus}.\ref{eq:srg_calculus_inverse} to obtain $\SRG(L^{-1})$ in Fig.~\ref{fig:srg_L1_inv}. Then, we use Proposition~\ref{prop:srg_calculus}.\ref{eq:srg_calculus_plus_one} to obtain $\SRG(1+L^{-1})$, see Fig.~\ref{fig:srg_L1_inv_p_1}, and finally use Proposition~\ref{prop:srg_calculus}.\ref{eq:srg_calculus_inverse} again to obtain $\SRG(T)$, see Fig.~\ref{fig:srg_T1}. The radius of $\SRG(T)$, as obtained via SRG calculus, is clearly finite. This means that $T$ has finite incremental $L_2$-gain, which would show that $T$ is stable. 

\begin{figure*}[tb]
     \centering
     \begin{subfigure}[b]{0.162\linewidth}
         \centering
         \includegraphics[width=\linewidth]{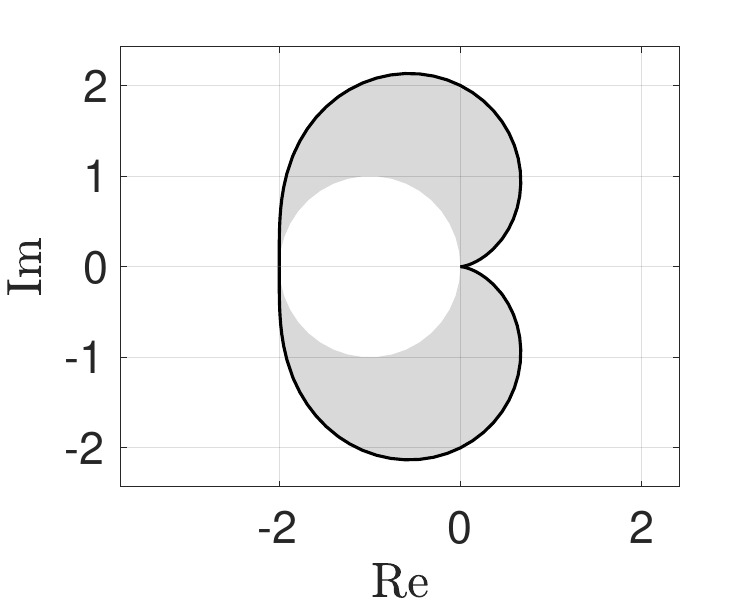}
         \caption{$\SRG(L)$.}
         \label{fig:srg_L1}
     \end{subfigure}
     \hfill
     \begin{subfigure}[b]{0.16\linewidth}
         \centering
         \includegraphics[width=\linewidth]{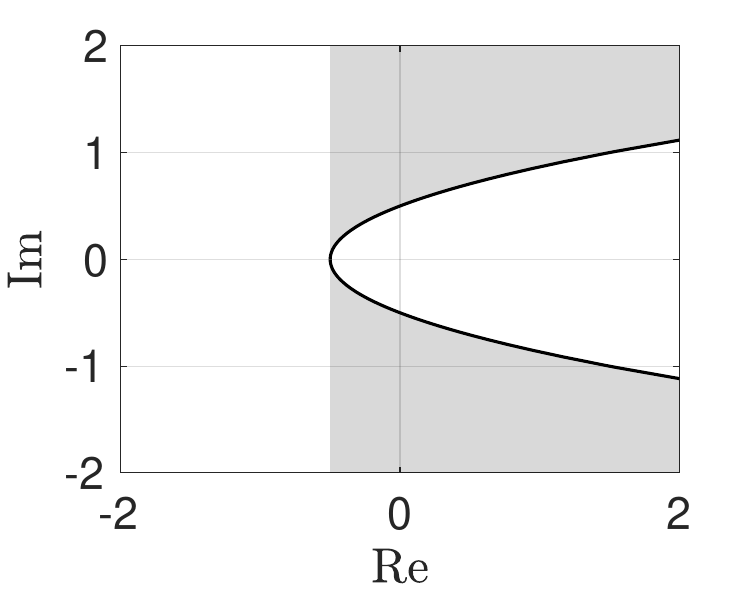}
         \caption{$\SRG(L)^{-1}$.}
         \label{fig:srg_L1_inv}
     \end{subfigure}
     \hfill
     \begin{subfigure}[b]{0.16\linewidth}
         \centering
         \includegraphics[width=\linewidth]{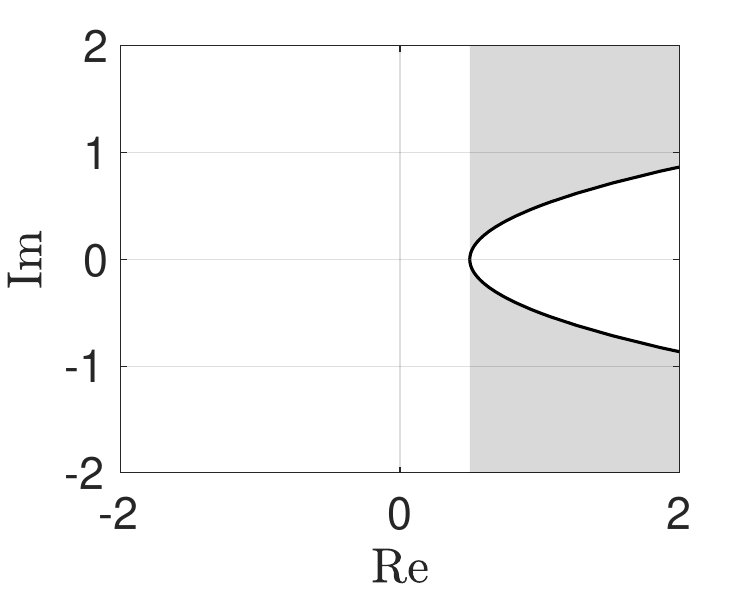}
         \caption{$1 + \SRG(L)^{-1}$.}
         \label{fig:srg_L1_inv_p_1}
     \end{subfigure}
     \hfill
     \begin{subfigure}[b]{0.163\linewidth}
         \centering
         \includegraphics[width=\linewidth]{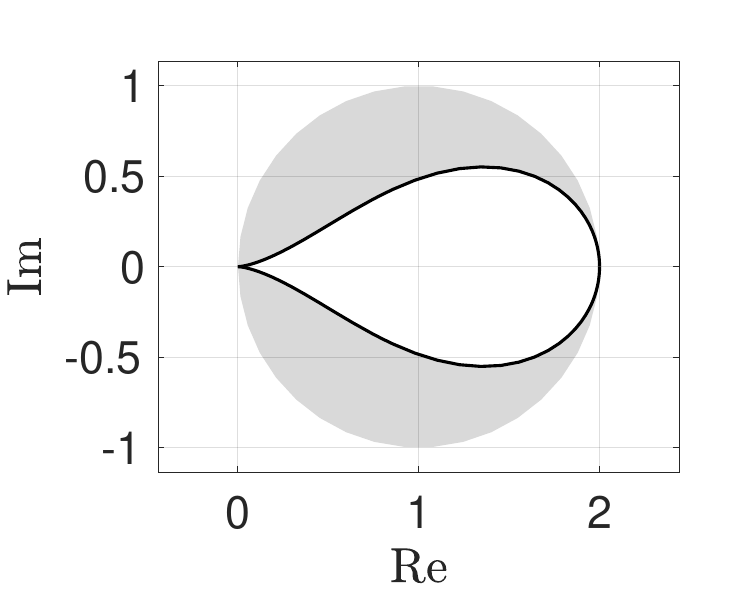}
         \caption{$\SRG(T)$.}
         \label{fig:srg_T1}
     \end{subfigure}
     \hfill
     \begin{subfigure}[b]{0.162\linewidth}
         \centering
         \includegraphics[width=\linewidth]{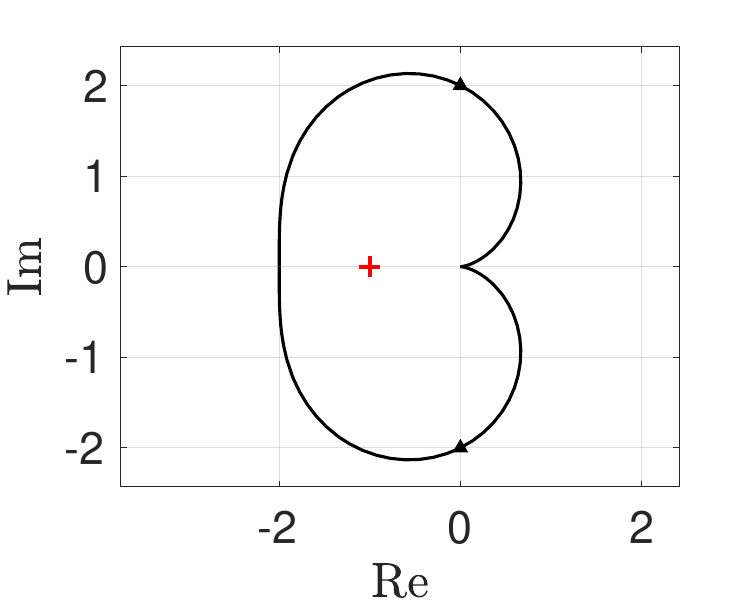}
         \caption{$\operatorname{Nyquist}(L)$.}
         \label{fig:nyquist_L1}
     \end{subfigure}
     \hfill
     \begin{subfigure}[b]{0.162\linewidth}
         \centering
         \includegraphics[width=\linewidth]{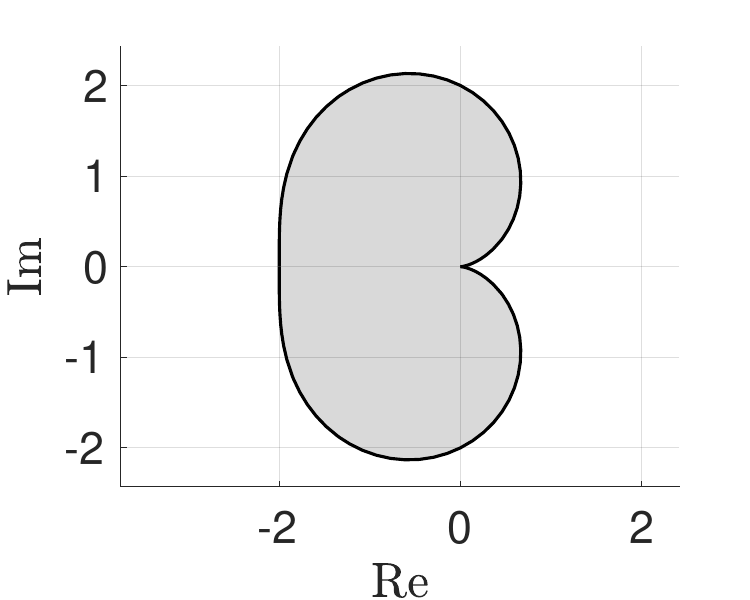}
         \caption{$\SRG'(L)$.}
         \label{fig:srg_L1_extended}
     \end{subfigure}
        \caption{SRGs and Nyquist diagram corresponding to $L(s) = \frac{-2}{s^2+s+1}$. The shaded area is the SRG and the bold line is the Nyquist diagram.}
        \label{fig:srg_analysis_L1_T1}

    \vspace{-1.2em}
\end{figure*}

However, when using the Nyquist criterion Theorem~\ref{thm:nyquist} to analyze stability, one concludes that $T(s)$ is unstable. This can be seen from the Nyquist diagram in Fig.~\ref{fig:nyquist_L1}, which encircles $-1$ one time in clockwise fashion, which results in $n_\mathrm{z} =1$ in Theorem~\ref{thm:nyquist}, indicating that $T(s)$ has one unstable pole.

It appears that we have encountered a \emph{contradiction}. Nyquist theory correctly predicts instability, while using the rules of SRG calculus we arrive at a wrong result. 

This apparent dichotomy is reminiscent of the Nyquist diagram of an unstable plant. When a proper LTI plant is continuously transformed from stable to unstable, the Nyquist diagram only achieves an infinite radius at the transition point between stable and unstable behavior. An example is $T_a(s) = 1/(s+a)$, which is stable for $a>0$, unstable for $a<0$, and achieves an infinite radius only at the transition point $a=0$. For LTI systems, the fact that the radius only attains infinity at a transition point is not a problem, since we can use the Nyquist criterion to assess stability. However, for SRGs, we do not have access to this information.

\subsubsection{Understanding the Pitfall}

The key to understanding this pitfall is to analyze the \emph{domain} and \emph{range} of the operator for which the SRG is computed. This is important since the SRG is defined for inputs in $L_2$ that map to $L_2$, i.e. $\dom(R)$, and not $\Lte$. Therefore, it can be true that $\dom(R) \neq L_2$, while $R : \Lte \to \Lte$ is well-posed.

By Proposition~\ref{prop:srg_calculus}, it is indeed true that $\SRG(T) = (1+\SRG(L)^{-1})^{-1}$, however, one must understand that $\dom(T) \neq L_2$. Since $T$ has an unstable pole, one finds that for
\begin{equation*}
    u(t) = \begin{cases}
        1 \text{ if } 0 \leq t \leq 1, \\
        0 \text{ else},
    \end{cases}
\end{equation*} 
it holds that $u \in L_2$ but $Tu \in \Lte \setminus L_2$ with $\lim_{\tau \to \infty} \norm{P_\tau T u}_2 = \infty$, hence $\dom(T) \neq L_2$. Therefore, Fig.~\ref{fig:srg_T1} correctly displays $\SRG_{\dom(T)}(T)$, but since $\dom(T) \neq L_2$, one cannot claim that $\Gamma(T) = \rmin(\SRG(T))$, since the gain bound only holds on $\dom(T)$. To avoid this pitfall, one must make sure that the operator for which the SRG is computed has full domain, i.e. all elements in $L_2$ are mapped into $L_2$.

\subsubsection{The SRG of an Unstable LTI System}\label{sec:srg_of_unstable_lti}

For an unstable LTI plant, the Nyquist diagram, or the Bode diagram for that matter, can be interpreted as gain and phase per frequency information if the unstable plant is part of an \emph{internally stable} feedback system. That is, the plant only receives signals that stabilize the plant. Denote $\mathcal{U}_\mathrm{e} \subseteq \Lte$ the set of signals that stabilize the unstable LTI plant $T$, i.e. $\lim_{t \to \infty} \norm{P_t R u}_2/\norm{P_t u}_2 < \infty$ for all $u \in \mathcal{U}_\mathrm{e}$. For $T: \dom(T) \subseteq L_2 \to L_2$, i.e. $T$ restricted to $L_2$, one has $\dom(T) = \mathcal{U}_\mathrm{e} \cap L_2 =:\mathcal{U}$, per definition. Then it is clear that $\SRG_\mathcal{U}(T)$ is the h-convex hull of $\operatorname{Nyquist}(T)$. This is precisely what is obtained in Fig.~\ref{fig:srg_T1}. Therefore, if we know that a system is internally stable \emph{for all inputs in} $L_2$, we are safe to use Theorem~\ref{thm:lti_srg} for unstable LTI operators in SRG calculations.

\subsection{Resolution of the Pitfall}\label{sec:resolution}

The fundamental reason of the pitfall reported is that the SRG disregards the information provided by the Nyquist criterion. As a resolution of this pitfall, we prove that the Nyquist criterion can be combined with the SRG, such that direct application of SRG calculus leads to consistent results. 

\subsubsection{Combining the Nyquist Criterion and the SRG}

We define the \emph{extended SRG} as follows, where we add the encirclement information from the Nyquist criterion to the SRG of an LTI system in Theorem~\ref{thm:lti_srg}. 

\begin{definition}\label{def:lti_srg_extended}
    Let $R$ be an LTI operator with $n_\mathrm{p}$ poles $p$ that obey $\mathrm{Re}(p) >0$. Denote the h-convex hull of $\operatorname{Nyquist}(R)$ as $\mathcal{G}_R$ and define
    \begin{equation}\label{eq:set_of_encircled_unstable_points}
        \mathcal{N}_R = \{ z \in \C \mid N_R(z) +n_\mathrm{p} >0 \},
    \end{equation}
    where the winding number $N_R(z) \in \Z$ denotes the amount of clockwise encirclements of $z$ by $\operatorname{Nyquist}(R)$\footnote{See Theorem~\ref{thm:nyquist} for details on the Nyquist contour.}. Define the extended SRG of an LTI operator as 
    \begin{equation}\label{eq:lti_srg_redefinition}
        \SRG'(R) := \mathcal{G}_R \cup \mathcal{N}_R.
    \end{equation}
\end{definition}

To illustrate Definition~\ref{def:lti_srg_extended}, we have plotted the extended SRG for $L(s) = \frac{-2}{s^2+s+1}$ in Fig.~\ref{fig:srg_L1_extended}. Because the \enquote{hole} in $\SRG(L)$ in Fig.~\ref{fig:srg_L1} is now filled in, one cannot derive the contradiction from Section~\ref{sec:contraction} anymore.

Only real elements of $\mathcal{N}_G$ in Eq.~\eqref{eq:set_of_encircled_unstable_points} are used in proofs, which is when the Nyquist criterion is invoked. Hence, one could consider a smaller extended SRG; instead of adding $\mathcal{N}_R$ to $\mathcal{G}_R$ in Eq.~\eqref{eq:lti_srg_redefinition}, one could add only $\mathcal{N}_R \cap \R$. However, the latter is graphically less appealing, and this choice does not influence the stability and/or $L_2$-gain results.

\subsubsection{SRG Analysis of LTI Systems}

In \cite{chaffeyGraphicalNonlinearSystem2023} it is claimed that the SRG of an LTI operator is always bounded by its Nyquist diagram. This is not true in general (e.g. for $G(s) = \frac{(s-5)(s+1)}{(s+4)(s+0.5)(s+5)}$), but we can prove the following.

\begin{lemma}\label{lemma:srg_radius_gamma_Gamma}
    Let $R$ be a stable\footnote{In the sense that all poles $p$ of $R(s)$ obey $\operatorname{Re}(p)<0$.} LTI operator. Then
    \begin{equation}
        \rmin(\SRG(R)) = \rmin(\operatorname{Nyquist}(R)) = \Gamma(R) = \gamma(R).
    \end{equation}
\end{lemma}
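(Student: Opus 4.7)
The strategy is to reduce all four quantities to the standard $\mathcal{H}_\infty$--norm of a stable transfer function, exploiting (i) the definition of the SRG, (ii) the linearity of $R$, (iii) Parseval, and (iv) the fact that taking the h-convex hull of a set preserves its radius with respect to the origin.

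\textbf{Step 1: $\Gamma(R)=\rmin(\SRG(R))$.}
This is immediate from the definition of the SRG: for any operator we have $\Gamma(R)=\rmin(\SRG(R))$, as recalled right after the definition of $\SRG$ in Section~\ref{sec:srg_definitions}. No stability or LTI assumption is needed here.

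\textbf{Step 2: $\Gamma(R)=\gamma(R)$.}
Linearity of $R$ gives $Ru_1-Ru_2=R(u_1-u_2)$, so a change of variable $u:=u_1-u_2$ in \eqref{eq:incremental_induced_norm} yields
\begin{equation*}
\Gamma(R)=\sup_{u\in L_2\setminus\{0\}}\frac{\|Ru\|_2}{\|u\|_2}=\gamma(R),
\end{equation*}
where I use that $L_2$ is a vector space so the set of differences coincides with $L_2$ itself, and that $\dom(R)=L_2$ since $R$ is stable.

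\textbf{Step 3: $\gamma(R)=\rmin(\operatorname{Nyquist}(R))$.}
Since $R$ is stable LTI, Parseval and the convolution theorem give the standard identity $\gamma(R)=\|R\|_{\mathcal H_\infty}=\sup_{\omega\in\mathbb R}|R(j\omega)|$. The right-hand side is exactly the smallest $r>0$ for which $\operatorname{Nyquist}(R)\subseteq D_r(0)$, i.e.\ $\rmin(\operatorname{Nyquist}(R))$.

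\textbf{Step 4: $\rmin(\SRG(R))=\rmin(\operatorname{Nyquist}(R))$.}
By Theorem~\ref{thm:lti_srg}, $\SRG(R)\cap\C_{\mathrm{Im}\ge 0}$ is the h-convex hull of $\operatorname{Nyquist}(R)\cap\C_{\mathrm{Im}\ge 0}$, and $\SRG(R)$ is symmetric about $\R$, as is $\operatorname{Nyquist}(R)$ for real rational $R$. Since $\operatorname{Nyquist}(R)\subseteq\SRG(R)$, we immediately have $\rmin(\operatorname{Nyquist}(R))\le\rmin(\SRG(R))$. For the reverse inequality I would argue that the two h-convex operations (closure under chords $[z,\bar z]$ and under arcs $\operatorname{Arc}^{\pm}(z,\bar z)$) cannot enlarge the radius about the origin: the chord $[z,\bar z]$ lies in the closed disk $D_{|z|}(0)$ (by the triangle inequality applied to its convex combinations of $z$ and $\bar z$), and the arcs through $z,\bar z$ used in the h-convex construction lie on circles whose farthest point from the origin is attained at $z$ and $\bar z$ themselves. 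Consequently, if $\operatorname{Nyquist}(R)\subseteq D_r(0)$ for some $r$, then every chord/arc generated from its points also lies in $D_r(0)$, and passing to the limit of the closure yields $\SRG(R)\subseteq D_r(0)$. Taking the infimum over $r$ gives $\rmin(\SRG(R))\le\rmin(\operatorname{Nyquist}(R))$.

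\textbf{Conclusion and anticipated difficulty.}
Combining Steps 1--4 produces the full chain of equalities. The only non-routine piece is Step~4: one must verify that every arc/chord used in forming the h-convex hull stays inside the enclosing disk $D_r(0)$. The geometric claim about chords is trivial, while the arc claim hinges on a precise description of which arcs are admitted in the h-convex construction (the arcs between $z$ and $\bar z$ used in the SRG literature). I would discharge this by the definitions recalled in Appendix~\ref{app:complex_geometry}, showing directly that each admissible arc between $z$ and $\bar z$ has its maximum modulus at the endpoints.
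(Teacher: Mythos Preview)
Your proof is correct, and in fact Steps~1--3 already close the chain of equalities: Step~1 gives $\Gamma(R)=\rmin(\SRG(R))$, Step~2 gives $\Gamma(R)=\gamma(R)$, and Step~3 gives $\gamma(R)=\rmin(\operatorname{Nyquist}(R))$. Step~4 is therefore redundant.

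The route you take differs from the paper's. The paper does not invoke Parseval/$\mathcal H_\infty$; instead it proves $\rmin(\SRG(R))=\rmin(\operatorname{Nyquist}(R))$ directly by a geometric argument: parameterizing an arc of a circle centered at $x\in\R$ as $z(\phi)=x+re^{j\phi}$, $\phi\in[0,\pi]$, and observing that $|z(\phi)|$ is monotone in $\phi$, so the maximum modulus along any $\operatorname{Arc}_{\min}(z_1,z_2)$ is attained at an endpoint. Your Parseval argument (Step~3) is arguably more direct and sidesteps the geometry entirely; the paper's argument, on the other hand, establishes the purely set-theoretic fact that h-convexification never enlarges the radius, which is of independent interest (e.g.\ it is reused implicitly in Lemma~\ref{lemma:finite_srg'_radius}).

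One caution about your Step~4, even though it is not needed: the h-convex hull in Appendix~\ref{app:complex_geometry} is closure under $\operatorname{Arc}_{\min}(z_1,z_2)$ for $z_1,z_2\in\C_{\mathrm{Im}\ge 0}$ (arcs of circles centered on $\R$), \emph{not} under chords $[z,\bar z]$ and arcs $\operatorname{Arc}^{\pm}(z,\bar z)$ (which are centered at the origin). You have conflated the chord/arc \emph{properties} used in Proposition~\ref{prop:srg_calculus} with the h-convex construction. If you wanted to carry Step~4 through on its own, you would need exactly the monotonicity argument the paper gives for $\operatorname{Arc}_{\min}$.
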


\noindent The proof can be found in Appendix~\ref{app:proofs}.

This lemma proves that the SRG (SG) radius corresponds to the $\Hinf$-norm. For a nonlinear operator, the $\Hinf$-norm is not defined. We only have the induced operator norms in Eqs.~\eqref{eq:incremental_induced_norm} and \eqref{eq:non_incremental_induced_norm} to our disposal, which, at best, one can view as generalizations of the $\Hinf$-norm.

Lemma~\ref{lemma:srg_radius_gamma_Gamma} relates the SRG radius of \emph{stable} LTI systems to the $H_\infty$-norm. The next lemma combines stability analysis with Lemma~\ref{lemma:srg_radius_gamma_Gamma} by using the extended SRG.

\begin{lemma}\label{lemma:finite_srg'_radius}
    Let $R(s)$ be an LTI operator. If $\SRG'(R)$ is bounded, then $R(s)$ has finite (incremental) $L_2$-gain and
    \begin{equation*}
        \rmin(\SRG'(R)) = \Gamma(R) = \gamma(R).
    \end{equation*}
\end{lemma}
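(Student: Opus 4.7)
The plan is to split the argument in two steps: first use the boundedness of $\SRG'(R)$ to force $R$ to be stable (i.e., $n_\mathrm{p}=0$), then show that the $\mathcal{N}_R$ component cannot exceed the radius of $\mathcal{G}_R$, so that Lemma~\ref{lemma:srg_radius_gamma_Gamma} closes the argument. This two-step split mirrors the structure of Definition~\ref{def:lti_srg_extended}, which glues together a Nyquist-based piece and an encirclement-based piece.

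For stability, I would argue by contrapositive. Suppose $n_\mathrm{p}\geq 1$, set $r:=\rmin(\operatorname{Nyquist}(R))$, and note $\operatorname{Nyquist}(R)\subseteq D_r(0)$. For any $z\in \C$ with $|z|>r$, the set $\C\setminus D_r(0)$ is connected, contains a neighborhood of infinity, and is disjoint from $\operatorname{Nyquist}(R)$; hence $z$ lies in the unbounded component of $\C\setminus \operatorname{Nyquist}(R)$, giving $N_R(z)=0$. Then $N_R(z)+n_\mathrm{p}=n_\mathrm{p}\geq 1>0$, so $z\in\mathcal{N}_R$. Consequently $\mathcal{N}_R$, and with it $\SRG'(R)$, is unbounded, contradicting the hypothesis. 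Therefore $n_\mathrm{p}=0$, and Lemma~\ref{lemma:srg_radius_gamma_Gamma} already yields $\Gamma(R)=\gamma(R)=\rmin(\SRG(R))=\rmin(\mathcal{G}_R)<\infty$, which establishes the finite $L_2$-gain part of the claim.

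It remains to prove $\rmin(\SRG'(R))=\rmin(\mathcal{G}_R)$. With $n_\mathrm{p}=0$ we have $\mathcal{N}_R=\{z:N_R(z)>0\}$, i.e.\ the points enclosed by the Nyquist curve. Rerunning the same winding-number argument shows $\mathcal{N}_R\subseteq D_r(0)$, so $\rmin(\mathcal{N}_R)\leq r=\rmin(\mathcal{G}_R)$. Since the radius of a union equals the maximum of the radii of its components, $\rmin(\SRG'(R))=\max\{\rmin(\mathcal{G}_R),\rmin(\mathcal{N}_R)\}=\rmin(\mathcal{G}_R)$, and combining this with Lemma~\ref{lemma:srg_radius_gamma_Gamma} produces the claimed chain of equalities. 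The main delicate point is the winding-number/connectedness argument bounding $\mathcal{N}_R$ inside the Nyquist disk; additional care is needed when $R(s)$ has imaginary-axis poles, but this is handled by the standard indented Nyquist contour that is already implicit in Definition~\ref{def:lti_srg_extended}, so the bounded piece of the Nyquist curve sits in $D_r(0)$ as required.
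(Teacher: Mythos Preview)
Your proof is correct. Both your argument and the paper's hinge on the same winding-number fact—that points outside the smallest disk containing $\operatorname{Nyquist}(R)$ have winding number zero—but you package it differently. The paper establishes stability by interpreting $-1/k\notin\mathcal{N}_R$ as a Nyquist-criterion statement for the closed loop $T_k=R/(1+kR)$ with $|k|<1/r$, and then lets $k\to 0$ to recover $T_0=R$; for the radius equality it argues by contradiction, positing $z_1,z_2$ with different winding numbers outside $D_r(0)$. You instead work directly from the definition of $\mathcal{N}_R$: if $n_\mathrm{p}\geq 1$ then every point outside $D_r(0)$ satisfies $N_R(z)+n_\mathrm{p}>0$, forcing $\mathcal{N}_R$ to be unbounded; once $n_\mathrm{p}=0$, the same observation yields $\mathcal{N}_R\subseteq D_r(0)$ outright. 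Your route avoids the feedback detour and the limiting argument, making the stability step cleaner and more self-contained, while the paper's version has the minor advantage of tying the lemma explicitly to the Nyquist-criterion language used throughout the paper.
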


\noindent The proof can be found in Appendix~\ref{app:proofs}.

We want to use Proposition~\ref{prop:srg_calculus} for the SRG defined in Definition~\ref{def:lti_srg_extended}. Therefore, we prove the following theorem. 

\begin{theorem}\label{thm:srg_calculus_for_extended_lti_srg}
    Let $G,G_1,G_2$ be any LTI operators, which can have unstable poles, and let $0 \neq \alpha \in \R$ be any number. Then the following statements hold for the SRG defined in Theorem~\ref{thm:lti_srg_nyquist_extension}:
    \begin{enumerate}[label=\alph*.]
        \item\label{thm:srg_calculus_for_extended_lti_srg_alpha} $\SRG'(\alpha G) = \SRG'(G \alpha) = \alpha \SRG'(G)$,
        \item\label{thm:srg_calculus_for_extended_lti_srg_id} $\SRG'(I+G) = 1 + \SRG'(G)$,
        \item\label{thm:srg_calculus_for_extended_lti_srg_inv} $\SRG'(G^{-1}) = \SRG'(G)^{-1}$.
        \item\label{thm:srg_calculus_for_extended_lti_srg_parallel} $\SRG'(G_1 + G_2) \subseteq \SRG'(G_1) + \SRG'(G_2)$,
        \item\label{thm:srg_calculus_for_extended_lti_srg_series} $\SRG'(G_1 G_2) \subseteq \SRG'(G_1) \SRG'(G_2)$.
    \end{enumerate}
\end{theorem}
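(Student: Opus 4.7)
The strategy is to analyze how each of the two constituents of $\SRG'(G) = \mathcal{G}_G\cup\mathcal{N}_G$ transforms under the operation at hand, using standard transformation rules for Nyquist diagrams and winding numbers, and then unioning the pieces. Throughout, I rely on $\operatorname{Nyquist}(G)\subseteq \mathcal{G}_G\subseteq \SRG'(G)$.

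For \ref{thm:srg_calculus_for_extended_lti_srg_alpha}--\ref{thm:srg_calculus_for_extended_lti_srg_id}, real scaling and translation commute with the h-convex hull (they map h-circles to h-circles and preserve the real axis), yielding $\mathcal{G}_{\alpha G} = \alpha\mathcal{G}_G$ and $\mathcal{G}_{I+G} = 1+\mathcal{G}_G$. The identities $N_{\alpha G}(\alpha z) = N_G(z)$ (because real scaling is an orientation-preserving homeomorphism of $\mathbb{C}$) and $N_{I+G}(z+1) = N_G(z)$, combined with invariance of the RHP-pole count under these operations, give $\mathcal{N}_{\alpha G} = \alpha\mathcal{N}_G$ and $\mathcal{N}_{I+G} = 1+\mathcal{N}_G$. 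Taking unions yields the two equalities.

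For \ref{thm:srg_calculus_for_extended_lti_srg_inv}, the Nyquist diagram of $G^{-1}$ is the M\"obius inverse of $\operatorname{Nyquist}(G)$. M\"obius inversion maps h-circles to h-circles, so $\mathcal{G}_{G^{-1}} = (\mathcal{G}_G)^{-1}$. Applying the argument principle on the Nyquist contour to $G^{-1}(s)-z = (1-z G(s))/G(s)$, and using that the RHP poles of $G^{-1}$ are exactly the RHP zeros of $G$, I would derive
\[
    N_{G^{-1}}(z) + n_p(G^{-1}) = N_G(1/z) + n_p(G),
\]
which gives $\mathcal{N}_{G^{-1}} = (\mathcal{N}_G)^{-1}$ and therefore the claim upon unioning.

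For \ref{thm:srg_calculus_for_extended_lti_srg_parallel}--\ref{thm:srg_calculus_for_extended_lti_srg_series}, the pointwise inclusions $\operatorname{Nyquist}(G_1+G_2)\subseteq \operatorname{Nyquist}(G_1)+\operatorname{Nyquist}(G_2)$ and $\operatorname{Nyquist}(G_1G_2)\subseteq \operatorname{Nyquist}(G_1)\operatorname{Nyquist}(G_2)$ are immediate from $G_i(j\omega)\in\operatorname{Nyquist}(G_i)$. Since every $\mathcal{G}_G$ is h-convex and hence satisfies the chord and both arc properties, Proposition~\ref{prop:srg_calculus}.\ref{eq:srg_calculus_parallel}--\ref{eq:srg_calculus_series} applied to the stable surrogates with SRG equal to $\mathcal{G}_{G_i}$ yields $\mathcal{G}_{G_1+G_2}\subseteq\mathcal{G}_{G_1}+\mathcal{G}_{G_2}$ and $\mathcal{G}_{G_1G_2}\subseteq\mathcal{G}_{G_1}\mathcal{G}_{G_2}$. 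The hard part is extending the inclusion once the encirclement sets $\mathcal{N}$ are added on both sides: winding numbers of $G_1+G_2$ and $G_1G_2$ do not decompose linearly or multiplicatively into those of $G_1, G_2$. I would handle this either (i) by a direct argument-principle count relating RHP zeros of $G_1+G_2-z$ (resp.\ $G_1G_2-z$) to those of $G_1$ and $G_2$, or preferably (ii) by a homotopy argument: perturb the RHP poles of $G_1$ and $G_2$ slightly into the LHP so that both operators become stable (in which case $\SRG'(G_i)=\mathcal{G}_{G_i}$), apply Proposition~\ref{prop:srg_calculus} to the perturbed operators, and pass to the limit in which the winding-number regions $\mathcal{N}$ emerge by continuity of the Nyquist map. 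Route (ii) reduces the unstable case to repeated applications of the already-established stable calculus and is where I expect the main technical effort to lie.
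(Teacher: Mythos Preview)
Your treatment of parts \ref{thm:srg_calculus_for_extended_lti_srg_alpha}--\ref{thm:srg_calculus_for_extended_lti_srg_inv} is correct and close in spirit to the paper's: both split $\SRG'(G)=\mathcal G_G\cup\mathcal N_G$ and treat the two pieces separately. For the inverse, your argument-principle identity $N_{G^{-1}}(z)+n_p(G^{-1})=N_G(1/z)+n_p(G)$ is in fact cleaner than the paper's route, which argues only for real $k$ via an explicit comparison of the closed-loop poles of $(-1/k)G$ and $(-1/k)G^{-1}$ and then extends to complex points by an (unstated) connectedness assumption on $\mathcal N_G$.

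For parts \ref{thm:srg_calculus_for_extended_lti_srg_parallel}--\ref{thm:srg_calculus_for_extended_lti_srg_series}, however, your preferred route (ii) has a genuine gap. Pushing an RHP pole of $G_i$ across the imaginary axis forces $\operatorname{Nyquist}(\tilde G_i)$ to pass through $\infty$, so the h-convex hulls $\mathcal G_{\tilde G_i}$ blow up rather than converge to $\mathcal G_{G_i}$; simultaneously $n_p$ jumps by one, so the sets $\mathcal N_{\tilde G_i}$ change discontinuously. There is no meaningful limit in which ``the winding-number regions $\mathcal N$ emerge by continuity,'' and the inclusion you obtain for the perturbed systems degenerates to a triviality (right-hand side all of $\mathbb C$) rather than the desired statement.

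Your route (i) is the one that works, and it is exactly what the paper does. Writing $G_i=b_i/a_i$, one takes $k\in\mathcal N_{G_1+G_2}$, picks an RHP root $s_u$ of $a_1a_2k-a_1b_2-a_2b_1$, sets $k_1=G_1(s_u)=b_1(s_u)/a_1(s_u)$ and $k_2=k-k_1$, and verifies directly that $a_i(s_u)k_i-b_i(s_u)=0$, i.e.\ $k_i\in\mathcal N_{G_i}$. The product case is identical with $k=k_1k_2$. This is a two-line algebraic check, not the technically heavy step you anticipated, so you should simply carry it out rather than detour through a perturbation argument.
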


\noindent The proof can be found in Appendix~\ref{app:proofs}.

We can resolve the pitfall from Section~\ref{sec:pitfall} using Theorem~\ref{thm:srg_calculus_for_extended_lti_srg}. From Fig.~\ref{fig:srg_L1_extended} and Theorem~\ref{thm:srg_calculus_for_extended_lti_srg}.\ref{thm:srg_calculus_for_extended_lti_srg_id} it is clear that $0 \in \SRG'(1+L)$, and therefore, by Theorem~\ref{thm:srg_calculus_for_extended_lti_srg}.\ref{thm:srg_calculus_for_extended_lti_srg_inv}, $\rmin(T) = \infty$, which correctly predicts that $T$ is \emph{not} stable.

\section{Application of Extended SRG to Lur'e type Systems and the Generalized Circle Criterion}\label{sec:resolution_applied}

In this section, we apply the extended SRG from Definition~\ref{def:lti_srg_extended} to three canonical feedback interconnections to derive useful concrete formulas for system analysis. In the Lur'e case, we obtain a generalization of the celebrated circle criterion.

\subsection{Canonical Systems}

Consider the Lur'e system from Section~\ref{sec:lure_systems}, and two additional canonical feedback interconnections.
\begin{enumerate}
    \item The controlled Lur'e plant,
    \item An LTI plant with a Lur'e controller.
\end{enumerate}

\subsubsection{Controlled Lur'e plant}

The Lur'e plant controlled by a one degree of freedom LTI controller in feedback is displayed in Fig.~\ref{fig:controlled_lure}. If we write the complementary sensitivity operator as $T=(1+L^{-1})^{-1}$, then we can derive that $L = [(GK)^{-1} + K^{-1}\phi]^{-1}$ for the open-loop operator via $L = (G + \phi)^{-1} K \iff L^{-1} = K^{-1} (G^{-1} + \phi) \iff L = ((GK)^{-1} + K^{-1} \phi)^{-1}$. The sensitivity operator is obtained via $S = (1+L)^{-1}$.

\begin{figure}[tb]
    \centering

    \tikzstyle{block} = [draw, rectangle, 
    minimum height=2em, minimum width=2em]
    \tikzstyle{sum} = [draw, circle, node distance={0.5cm and 0.5cm}]
    \tikzstyle{input} = [coordinate]
    \tikzstyle{output} = [coordinate]
    \tikzstyle{pinstyle} = [pin edge={to-,thin,black}]
    
    \begin{tikzpicture}[auto, node distance = {0.3cm and 0.5cm}]
        \node [input, name=input] {};
        \node [sum, right = of input] (sum) {$ $};
        \node [block, right = of sum] (controller) {$K(s)$};
        \node [sum, right = of controller] (sigma) {$ $};
        \node [block, right = of sigma] (lti) {$G(s)$};
        \node [coordinate, right = of lti] (z_intersection) {};
        \node [output, right = of z_intersection] (output) {}; 
        \node [block, below = of lti] (static_nl) {$\phi$};
        \node [coordinate, right = of static_nl] (phi_intersection) {};
    
        \draw [->] (input) -- node {$r$} (sum);
        \draw [->] (sum) -- node {$e$} (controller);
        \draw [->] (controller) -- node {$u$} (sigma);
        \draw [->] (sigma) -- node {$u'$} (lti);
        \draw [->] (lti) -- node [name=z] {$y$} (output);
        \draw [->] (z) |- (static_nl);
        \draw [->] (static_nl) -| node[pos=0.99] {$-$} (sigma);
        \node [coordinate, below = of static_nl] (tmp1) {$H(s)$};
        \draw [->] (z) |- (tmp1)-| node[pos=0.99] {$-$} (sum);
    
    \end{tikzpicture}
    
    \caption{Block diagram of a Lur'e plant ($G,\phi$) controlled by the LTI controller $K$.}
    \label{fig:controlled_lure}
\end{figure}
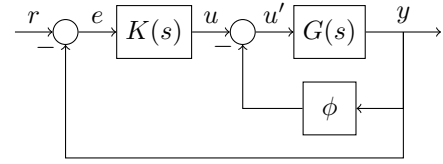

\subsubsection{LTI plant with Lur'e controller}

The LTI system controlled by a Lur'e controller is depicted in Fig.~\ref{fig:lure_controlled_lti}. This system has complementary sensitivity operator $T = ((1 + GK^{-1} + \phi G^{-1}))^{-1}$, which can be written as $T = (1+L^{-1})^{-1}$ with $L = ((GK)^{-1} + \phi G^{-1})^{-1}$, which is derived in a similar way as in the controlled Lur'e case. The sensitivity operator is obtained via $S = (1+L)^{-1}$.

\begin{figure}[tb]
    \centering

    \tikzstyle{block} = [draw, rectangle, 
    minimum height=2em, minimum width=2em]
    \tikzstyle{sum} = [draw, circle, node distance={0.5cm and 0.5cm}]
    \tikzstyle{input} = [coordinate]
    \tikzstyle{output} = [coordinate]
    \tikzstyle{pinstyle} = [pin edge={to-,thin,black}]
    
    \begin{tikzpicture}[auto, node distance = {0.3cm and 0.5cm}]
        \node [input, name=input] {};
        \node [sum, right = of input] (sum) {$ $};
        \node [sum, right = of sum] (sigma) {$ $};
        \node [block, right = of sigma] (lti_K) {$K(s)$};
        \node [coordinate, right = of lti_K] (u_intersection) {};
        \node [block, right = of u_intersection] (lti_G) {$G(s)$};
        \node [coordinate, right = of lti_G] (y_intersection) {};
        \node [output, right = of y_intersection] (output) {}; 
        \node [block, below = of lti_K] (static_nl) {$\phi$};
    
        \draw [->] (input) -- node {$r$} (sum);
        \draw [->] (sum) -- node {$e$} (sigma);
        \draw [->] (sigma) -- node {$u'$} (lti_K);
        \draw [->] (lti_K) -- node {$u$} (lti_G);
        \draw [->] (u_intersection) |- (static_nl);
        \draw [->] (static_nl) -| node[pos=0.99] {$-$} (sigma);
        \node [coordinate, below = of static_nl] (tmp1) {};
        \draw [->] (y_intersection) |- (tmp1)-| node[pos=0.99] {$-$} (sum);
        \draw [->] (lti_G) -- node {$y$} (output);
    
    \end{tikzpicture}
    
    \caption{Block diagram of an LTI system $G$ with Lur'e controller ($K, \phi$).}
    \label{fig:lure_controlled_lti}
\end{figure}
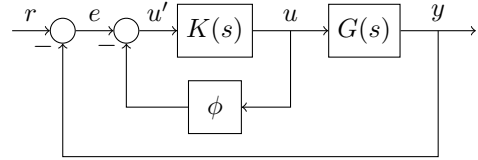

\subsection{Stability Analysis}

To state our stability result, we require the following definition.

\begin{definition}\label{def:general_inflation_condition}
    An operator $\phi$ is called inflatable if one can find a continuous lift $\Phi_\tau : [0, 1] \times \Lte \to \Lte$, such that $\Phi_0 = \kappa$ and $\Phi_1 = \phi$, where $\kappa \in \SRG(\phi) \cap \R$, which satisfies 
    \begin{equation*}
        \SRG(\Phi_{\tau_1}) \subseteq \SRG(\Phi_{\tau_2}), \quad \forall \, 0 \leq \tau_1 \leq \tau_2 \leq 1.
    \end{equation*}
    Moreover, $\SRG(\phi)$ is bounded.
\end{definition}

The key property of inflatable nonlinearities is that during the inflation $\tau \in [0,1)$, their SRG does not include any point that is not included in the final SRG for $\tau = 1$. This allows one to replace $\kappa \to \SRG(\phi)$ directly, without carrying out the inflation to check which points are \enquote{hit}. 

We can now state a practical stability theorem for the canonical interconnections\footnote{This theorem was previously proven in \cite{krebbekxSRGAnalysisLure2024} for the Lur'e system.}. To make the result as practical as possible, we use the extended SRG for LTI operators in Definition~\ref{def:lti_srg_extended} to graphically check stability. Note that in the following theorem, causality is not necessary.

\begin{theorem}\label{thm:lti_srg_nyquist_extension}
    Under the condition that $\SRG(\phi)$ obeys the chord property and $\phi$ is inflatable, the system is a well-posed operator $T : L_2 \to L_2$ with $\Gamma(T) \leq 1/r_m$ if
    \begin{enumerate}
        \item For the Lur'e system:
        \begin{equation*}
            \dist(\SRG'(G)^{-1}, -\SRG(\phi))=r_m>0.
        \end{equation*}
        If, additionally, $G$ and $\phi$ are causal, then \mbox{$T : \Lte \to \Lte$} is causal and well-posed.
        \item For the controlled Lur'e plant: for $\kappa =\Phi_0$, $L_0 = GK/(1+\kappa G)$ and
        \begin{equation*}
            \dist(1+\SRG'(L_0)^{-1}, -\SRG(K^{-1} (\phi - \kappa)))=r_m>0.
        \end{equation*}
        If, additionally, $L_0$ and $K^{-1} (\phi - \kappa)$ are causal, then \mbox{$T  : \Lte \to \Lte$} is causal and well-posed.
        \item For the LTI plant with Lur'e controller: for $\kappa = \Phi_0$, $L_0 = GK/(1+\kappa K)$ and
        \begin{equation*}
            \dist(1+\SRG'(L_0)^{-1}, -\SRG((\phi - \kappa) G^{-1}))=r_m>0.
        \end{equation*}
        If, additionally, $L_0$ and $(\phi - \kappa) G^{-1}$ are causal, then \mbox{$T  : \Lte \to \Lte$} is causal and well-posed.
    \end{enumerate}
    The result can be restated for the $\SG_0(\phi)$, yielding a non-incremental gain bound $\gamma(T)$ for $T : \Lte \to \Lte$. In this case, well-posedness and causality of the system must be assumed for all $\tau \in [0,1]$ during the inflation of $\phi$ using the lift $\Phi_\tau$.
\end{theorem}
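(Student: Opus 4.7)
The plan is to adapt the homotopy argument behind Theorem~\ref{thm:chaffey_thm2} while using the extended SRG $\SRG'$ to accommodate possibly unstable LTI blocks. By the inflatable hypothesis there is a continuous lift $\Phi_\tau$ with $\Phi_0=\kappa\in\SRG(\phi)\cap\R$, $\Phi_1=\phi$, and $\SRG(\Phi_\tau)\subseteq\SRG(\phi)$ for all $\tau\in[0,1]$. Replacing $\phi$ by $\Phi_\tau$ in each block diagram defines a family of candidate closed-loop operators $T_\tau$, with $T_1=T$. I will show that (i) $T_0$ is $L_2$-stable with the claimed gain bound, (ii) an SRG-calculus computation gives $\Gamma(T_\tau)\le 1/r_m$ uniformly in $\tau$, and (iii) a homotopy theorem promotes these two facts to well-posedness and the gain bound for $T_1=T$.

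At $\tau=0$, each $T_0$ is purely LTI: for the Lur'e case $T_0=(G^{-1}+\kappa)^{-1}$, while in the two coupled cases $T_0=(1+L_0^{-1})^{-1}$. Applying Theorem~\ref{thm:srg_calculus_for_extended_lti_srg} to move the $(\cdot)^{-1}$, the $+1$ and the $+\kappa$ outside of $\SRG'$ expresses $\SRG'(T_0)$ directly in terms of $\SRG'(G)^{-1}$ or $\SRG'(L_0)^{-1}$. The distance hypothesis, specialised to $\tau=0$ (where $\SRG(K^{-1}(\Phi_0-\kappa))=\{0\}$ in the coupled cases), reads $\dist(-\kappa,\SRG'(G)^{-1})\ge r_m$ or $\dist(0,\,1+\SRG'(L_0)^{-1})\ge r_m$, respectively, so $\rmin(\SRG'(T_0))\le 1/r_m$. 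Lemma~\ref{lemma:finite_srg'_radius} then supplies $L_2$-stability and the gain bound for $T_0$; this is precisely where the Nyquist information baked into $\SRG'$ compensates for open-loop instability of $G$ or $L_0$.

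For $\tau\in(0,1]$, I combine Proposition~\ref{prop:srg_calculus} (applied via the chord property of $\SRG(\Phi_\tau)$, inherited from $\SRG(\phi)$) with Theorem~\ref{thm:srg_calculus_for_extended_lti_srg} on the LTI blocks to get, in the Lur'e case,
\begin{equation*}
\SRG(T_\tau)\subseteq\bigl(\SRG'(G)^{-1}+\SRG(\Phi_\tau)\bigr)^{-1},
\end{equation*}
whence, using $\SRG(\Phi_\tau)\subseteq\SRG(\phi)$, one finds $\rmin(\SRG(T_\tau))\le 1/\dist(\SRG'(G)^{-1},-\SRG(\phi))=1/r_m$ uniformly in $\tau$. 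The controlled Lur'e and Lur'e-controlled-LTI cases are handled by the analogous identities $T_\tau=(1+L_0^{-1}+K^{-1}(\Phi_\tau-\kappa))^{-1}$ and its counterpart with $(\Phi_\tau-\kappa)G^{-1}$, producing the same uniform bound $1/r_m$ from the respective distance hypotheses.

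With $T_0$ well-posed and $\Gamma(T_\tau)\le 1/r_m$ uniform along a continuous path, the incremental homotopy theorem of~\cite{chaffeyHomotopyTheoremIncremental2024} promotes this to well-posedness of $T_1=T$ on $L_2$ with $\Gamma(T)\le 1/r_m$; causality when the constituent blocks are causal follows as in Theorem~\ref{thm:chaffey_thm2}. For the non-incremental statement, replacing $\SRG$ by $\SG_0$ throughout and invoking Theorem~\ref{thm:chaffey_thm2_non_incremental} from Appendix~\ref{app:non_incremental_homotopy} gives the corresponding $\gamma$-bound, but because that homotopy result does not supply well-posedness for free, it must be added along the path as an assumption, as stated. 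The main obstacle I anticipate is verifying the chord-property hypothesis of the SRG sum rule at every point of the homotopy, i.e.\ that $\SRG(\Phi_\tau)$ (and $\SRG(K^{-1}(\Phi_\tau-\kappa))$ in the coupled cases) inherits the chord property from $\SRG(\phi)$; this should follow from the monotone growth built into Definition~\ref{def:general_inflation_condition} together with a careful choice of lift, but it is the most delicate bookkeeping step in the argument.
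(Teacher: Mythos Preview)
Your approach is essentially the same as the paper's: establish $L_2$-stability of the LTI base point $T_0$ via Lemma~\ref{lemma:finite_srg'_radius} and the extended SRG, obtain a uniform SRG bound along the inflation, and then invoke a homotopy argument (Theorem~\ref{thm:chaffey_thm2} / \cite{chaffeyHomotopyTheoremIncremental2024}) for well-posedness and causality.

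The one place where the paper is slightly more explicit, and which actually dissolves the obstacle you flag at the end, is the \emph{loop transformation} used before invoking the homotopy theorem. Rather than running the homotopy with the general lift $\Phi_\tau$, the paper rewrites $T$ as a standard feedback of two \emph{stable} blocks: $H_1=G/(1+\kappa G)$ and $H_2=\phi-\kappa$ in the Lur'e case, and $H_1=T_{\mathrm{LTI}}=(1+L_0^{-1})^{-1}$ with $H_2=K^{-1}(\phi-\kappa)$ (resp.\ $(\phi-\kappa)G^{-1}$) in the coupled cases. Theorem~\ref{thm:chaffey_thm2} is then applied verbatim with the linear homotopy $\tau H_2$. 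This buys two things: first, the finite-gain hypotheses $\Gamma(H_1),\Gamma(H_2)<\infty$ of Theorem~\ref{thm:chaffey_thm2} are met (whereas $G$ itself may be unstable); second, the chord-property requirement only needs to hold for the fixed set $\SRG(H_2)=\SRG(\phi)-\kappa$, a translate of $\SRG(\phi)$, so it is inherited immediately---no need to track the chord property of $\SRG(\Phi_\tau)$ or $\SRG(K^{-1}(\Phi_\tau-\kappa))$ along the path. Your version with the inflation lift $\Phi_\tau$ is conceptually equivalent, but recasting it through this loop shift makes the well-posedness step a clean one-line application of Theorem~\ref{thm:chaffey_thm2} and removes the ``delicate bookkeeping'' you were worried about.
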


\noindent The proof can be found in Appendix~\ref{app:proofs}.

Note that the result $r_m$ in Theorem~\ref{thm:lti_srg_nyquist_extension} may depend on the choice of $\kappa$. In fact, $\kappa$ plays the role of a loop transformation. Maximizing $r_m$ over all possible valued $\kappa \in \R$ can be viewed as optimizing a static multiplier.

Since the sensitivity function relates to the complementary sensitivity operator as $S = 1-T = (1 + L)^{-1}$ for the controlled Lur'e and Lur'e controller setups, it is also possible to extract an upper bound for the incremental $L_2$-gain of the sensitivity function. The bound is given by the radius of 
\begin{equation*}
    (1+(\SRG'(L_0)^{-1} + \SRG(K^{-1} ( \phi - \kappa)))^{-1})^{-1},
\end{equation*}
where $\kappa = \Phi_0$ and $L_0 = GK/(1+ \kappa G)$ for the controlled Lur'e system and by the radius of
\begin{equation*}
    (1+(\SRG'(L_0)^{-1} + \SRG(( \phi - \kappa) G^{-1}))^{-1})^{-1},
\end{equation*}
where $L_0 = GK/(1+ \kappa K)$ for the Lur'e controller.

\subsection{The Generalized Circle Criterion}\label{sec:compare_circle_criterion}

Theorem~\ref{thm:chaffey_thm2} can only reproduce the circle criterion in the case that $G$ in Fig.~\ref{fig:lure} is stable. We will now show that using our result, Theorem~\ref{thm:lti_srg_nyquist_extension} for the Lur'e system, one can prove a result that is more general than the celebrated circle criterion Theorem~\ref{thm:circle}. This is possible since we combined the information of the Nyquist criterion into the SRG. 

\begin{theorem}\label{thm:generalized_circle_criterion}
    Let $G(s)$ be an LTI operator and $\phi$ an inflatable nonlinear operator, where at least one of $\SRG(G)^{-1}$ or $\SG_0(\phi)$ obeys the chord property. The system in Fig.~\ref{fig:lure} satisfies $r \in \Lte \implies y \in \Lte$ if 
    \begin{equation}\label{eq:circle_criterion_srg}
        \dist(\SRG'(G),-\SG_0(\phi)^{-1}) >0.
    \end{equation}
    Furthermore, we have the $L_2$-gain bound $\gamma(T) \leq 1/r_m$, where $\dist(\SRG'(G)^{-1},-\SG_0(\phi)) \geq r_m>0$.
    
    One can replace the $L_2$-gain with the incremental $L_2$-gain by taking $\SRG(\phi)$ instead of $\SG_0(\phi)$ in~\eqref{eq:circle_criterion_srg}.
\end{theorem}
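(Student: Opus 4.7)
The strategy is to reduce Theorem~\ref{thm:generalized_circle_criterion} to the non-incremental variant of Theorem~\ref{thm:lti_srg_nyquist_extension}, part 1 (the Lur'e case), which is already established. There are three wrinkles to handle: (i) the chord property is allowed on either $\SRG(G)^{-1}$ or $\SG_0(\phi)$, (ii) two ostensibly different distance conditions appear, and (iii) the conclusion is stated directly on $\Lte$.

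First I would write the closed-loop operator as $T=(G^{-1}+\phi)^{-1}$ and compute
\begin{equation*}
    \SRG'(T) \subseteq \bigl(\SRG'(G)^{-1}+\SG_0(\phi)\bigr)^{-1},
\end{equation*}
using Theorem~\ref{thm:srg_calculus_for_extended_lti_srg}.\ref{thm:srg_calculus_for_extended_lti_srg_inv} for the outer inversion, together with the sum rule Proposition~\ref{prop:srg_calculus}.\ref{eq:srg_calculus_parallel} applied via the substitution $\SRG \to \SG_0$ justified after Proposition~\ref{prop:srg_calculus}. The sum rule requires the chord property on at least one of the summands $\SRG'(G^{-1})=\SRG'(G)^{-1}$ or $\SG_0(\phi)$; the hypothesis supplies exactly that, noting that the Nyquist-encirclement set $\mathcal{N}_G$ added in Definition~\ref{def:lti_srg_extended} is conjugation-symmetric (inherited from $G(\bar s)=\overline{G(s)}$), so chord property on $\SRG(G)^{-1}$ lifts to $\SRG'(G)^{-1}$.

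Second, I would invoke the non-incremental version of Theorem~\ref{thm:lti_srg_nyquist_extension}, part 1 (alluded to in the final paragraph of that theorem): under $\dist(\SRG'(G)^{-1}, -\SG_0(\phi))\geq r_m > 0$, and with $\phi$ inflatable so that the homotopy argument carries through for the $\SG_0$ variant, one obtains $\gamma(T)\leq 1/r_m$ with $T:\Lte\to\Lte$ after assuming well-posedness and causality along the inflation. This directly yields $r \in \Lte \Rightarrow y=Tr \in \Lte$.

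Third, I would establish equivalence of the two displayed distance conditions. Since every SRG is conjugation-symmetric, $\SG_0(\phi)$ satisfies $-\SG_0(\phi)^{-1}=(-\SG_0(\phi))^{-1}$, and M\"obius inversion is a homeomorphism of $\C_\infty$; closedness in $\C_\infty$ of both $\SRG'(G)$ and $\SG_0(\phi)$, combined with compactness of $\C_\infty$, implies
\begin{equation*}
    \dist(\SRG'(G),-\SG_0(\phi)^{-1})>0 \;\Longleftrightarrow\; \dist(\SRG'(G)^{-1},-\SG_0(\phi))>0,
\end{equation*}
with the convention $|\infty-\infty|:=0$ from Section~\ref{sec:preliminaries}. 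The incremental version of the statement follows by repeating the argument with $\SRG(\phi)$ in place of $\SG_0(\phi)$ and invoking the incremental form of Theorem~\ref{thm:lti_srg_nyquist_extension} directly.

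The main obstacle I anticipate is ensuring that the chord property is genuinely preserved under the Nyquist extension in Definition~\ref{def:lti_srg_extended}: although $\mathcal{N}_G$ is conjugation-symmetric, it need not be convex, so one must verify that for every $z$ in $\mathcal{G}_G \cup \mathcal{N}_G$ the chord $[z,\bar z]$ lies in the union. A secondary technical point is the careful treatment of inversion near $0$ and $\infty$ in the equivalence of distance conditions; both items can be handled by direct reference to Theorem~\ref{thm:srg_calculus_for_extended_lti_srg} and the extended-plane conventions introduced earlier.
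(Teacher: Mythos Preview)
Your proposal is correct and follows essentially the same route as the paper: reduce to the Lur'e case of Theorem~\ref{thm:lti_srg_nyquist_extension} and argue that the two displayed distance conditions are equivalent via M\"obius inversion. The paper's proof is a one-liner that asserts both steps without elaboration; your version is more detailed, and in particular you explicitly address the chord-property flexibility (either summand) and the compactness argument behind the equivalence of distances, both of which the paper leaves implicit.
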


\begin{proof}
    The theorem is a direct result of Theorem~\ref{thm:lti_srg_nyquist_extension} upon realizing that $\dist(\SRG'(G)^{-1},-\SG_0(\phi)) >0$ is equivalent to~\eqref{eq:circle_criterion_srg}.
\end{proof}


Note that Theorem~\ref{thm:generalized_circle_criterion} is equivalent to the circle criterion in Theorem~\ref{thm:circle} when $\phi \in [k_1, k_2]$. However, it is more general than the circle criterion since $\phi$ can be any operator, not necessarily sector bounded. An example of a nonlinearity that the circle criterion cannot handle, but Theorem~\ref{thm:generalized_circle_criterion} can, is the time-varying sector bounded nonlinearity $\phi(x, t)$ defined by $x(t) \mapsto \sin(t) \sin(x(t))$, which obeys $\SRG(\phi)\subset D_{[-1,1]}$. Another example is a reset element~\cite{vandeneijndenScaledGraphsReset2024}. Guiver et al.~\cite{guiverCircleCriterionClass2022} derived a circle criterion for sector bounded dynamic operators, however their method provides no graphical tools to check stability. Additionally, SRG analysis provides a bound on the (incremental) $L_2$-gain of the system, whereas the classical circle criterion and Ref.~\cite{guiverCircleCriterionClass2022} only guarantee boundedness. See~\cite{krebbekxResetControllerAnalysis2025} for an application where $\phi$ is a reset controller.

One might worry that the h-convex hull of the Nyquist diagram will make the SRG analysis more conservative than the circle criterion. We will argue here that this is not the case. Suppose that the h-convex hull does make the SRG analysis more conservative, i.e.
\begin{equation}\label{eq:nyquist_hconvhull_contradiction}
\begin{aligned}
    &D_\phi \cap \operatorname{Nyquist}(G) = \emptyset, \\
    &D_\phi \cap \SRG(G) \neq \emptyset,
\end{aligned}
\end{equation}
are both true, where $D_\phi := D_{[-1/k_1, -1/k_2]}$. This means that the circle criterion would predict stability, but the SRG method would be conservative and is not able to predict stability. Note that we use the SRG for $G$ as defined in Theorem~\ref{thm:lti_srg}, so we only consider the h-convex hull part. If~\eqref{eq:nyquist_hconvhull_contradiction} would be true, then there exist $z_1, z_2 \in \operatorname{Nyquist}(G) \cap \C_{\mathrm{Im} >0}$ such that $z_1, z_2 \notin D_\phi$, but $\operatorname{Arc}_\text{min}(z_1, z_2) \cap D_\phi \neq \emptyset$. This is only possible if either $z_1$ or $z_2$ lies in $D_\phi$, since $D_\phi$ is already h-convex. This contradicts~\eqref{eq:nyquist_hconvhull_contradiction}, hence we can conclude that the SRG analysis of the Lur'e system is never more conservative than the circle criterion. Also, since $\SRG(G) = \SG_0(G)$ for LTI operators, there is no need to define a non-incremental version of the extended SRG.

\section{Application of Extended SRG to General Interconnections}\label{sec:general_interconnections}

Even though Theorem~\ref{thm:lti_srg_nyquist_extension} provides useful formulae for stability analysis of various systems, it is hard to generalize to arbitrary interconnections of systems. Particularly for the controlled Lur'e plant and LTI plant with Lur'e controller, the choice of $\kappa = \Phi_0$ appears explicitly in the expressions. This can be used to optimize the $L_2$-gain bound over all possible choices of $\kappa$, but at the same time presents an intractable degree of freedom when the systems become large and include multiple nonlinearities.

The goal is to resolve the pitfall identified in Section~\ref{sec:pitfall_and_resolution} such that one can use \enquote{SRG calculus} safely for analyzing any interconnection of operators, which may be stable or unstable, and going beyond the feedback interconnections in Theorems~\ref{thm:chaffey_thm2} and~\ref{thm:lti_srg_nyquist_extension}. In addition, the method we develop in this section is amenable to large scale interconnections with many nonlinearities, unlike Theorem~\ref{thm:lti_srg_nyquist_extension}.

\subsection{Representing General Interconnections}\label{sec:general_interconnections_language}

To study general interconnections, we need a language to describe these \emph{general interconnections}, which include all systems discussed in Sections~\ref{sec:preliminaries}, \ref{sec:pitfall_and_resolution} and \ref{sec:resolution_applied}. This section focuses on the development of such a language as one of the main contributions of the paper.

\begin{figure*}[t]
    \centering
    \includegraphics[width=0.7\linewidth]{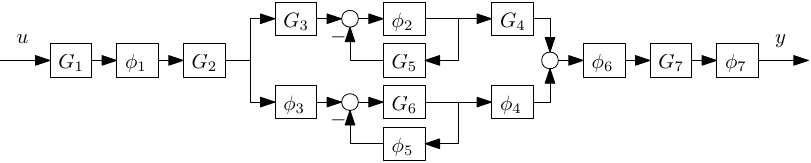}
    \caption{A block-chain system, where $G_i$ are LTI operators and $\phi_j$ are static or dynamic nonlinear operators.}
    \label{fig:block_chain}
\end{figure*}

\subsubsection{Motivation}

Suppose $R : \Lte \to \Lte$ is a system that consists of interconnecting finitely many LTI operators $G_i$ and static or dynamic nonlinear operators $\phi_j$, e.g. the block-chain system in Fig.~\ref{fig:block_chain}. The operator $R$ is uniquely defined by the block diagram. However, we do not only care about the behavior of the operator $R$, but we require an explicit description of how the operator $R$ can be written in terms of compositions of the individual operators $G_i,\phi_j$. 

The need of such a machinery is easily motivated by Eq.~\eqref{eq:chaffey_closed_loop_representations}, where both $(H_1^{-1} + H_2)^{-1}$ and $H_1 (1+ H_2 H_1)^{-1}$ represent the \emph{same} operator on $\Lte$. Even though they describe the same block diagram, and hence the same operator on $\Lte$, they are considered \emph{different} representations of the interconnection. This distinction between interconnections is made, because when one wants to obtain a bound for $\SRG(T)$ by replacing $H_1, H_2$ with their SRGs, the order of sums/multiplications/inverses of the operators matters when using Proposition~\ref{prop:srg_calculus}.

\subsubsection{Formal Languages for General Interconnections}

Let us discuss the required definitions for formal languages. The core mathematical machinery in this section follows the book by Hopcroft et al.~\cite[Ch. 5]{johne.hopcroftIntroductionAutomataTheory2014}. An alphabet is a finite nonempty set of symbols $a_i$, defined as $\Sigma = \{ a_i \mid i \in \mathbb{I}_{1}^{n_\Sigma} \}$, where $\mathbb{I}_{\tau_1}^{\tau_2} = \{ \tau \in \mathbb{N} \mid  \tau_1 \leq \tau \leq \tau_2 \}$ is an index set. A word $w$ is a finite sequence of symbols from $\Sigma$ and $\epsilon$ denotes the empty word. A factor $a_{m} \dots a_{m+n-1}$ of length $n$ is any subsequence of $w$. Suppose one appends a symbol $x$ to a word $w=a_i a_j$, then $wx= a_i a_j x$, where $|wx|=3$ denotes the length of the word. Denote $\Sigma^k$ the set of words of length $k$, where $\Sigma^0 = \{ \epsilon \} \neq \emptyset$. Then $\Sigma^+ = \bigcup_{k \in \N_{>0}} \Sigma^k$ is the set of all nonempty words over $\Sigma$ and let $\Sigma^* = \Sigma^0 \cup \Sigma^+$. Any subset $L \subseteq \Sigma^*$ is a (formal) language over $\Sigma$. 

A particularly useful type of formal language is the \emph{Context-Free Grammar} (CFG) $\mathcal{G} = (\mathcal{V}, \mathcal{T}, P, \mathcal{S})$. Here, $\mathcal{V}$ is a finite set of symbols called the nonterminals, $\mathcal{T}$ is a finite set of symbols called the terminals, such that $\Sigma = \mathcal{V} \cup \mathcal{T}$. Furthermore, $P \in \mathcal{V}$ is the start (nonterminal) symbol and $\mathcal{S} = \{ \sigma_i \mid i \in \mathbb{I}_{1}^{n_\sigma} \}$ is a finite set of production rules. Each production rule reads $\sigma_i = V_i \to w_i$, where $V_i \in \mathcal{V}$ and $w_i \in \Sigma^*$. We define $L(\mathcal{G}) \subseteq \Sigma^*$ to be the set of all words constructed from the CFG $\mathcal{G}$, that is, by starting with $P$ and applying rules $\sigma_i$.

\subsubsection{System Interconnections using Context Free Grammars}

In the context of system interconnections, we define the \emph{Operator Context Free Grammar} (OCFG) as follows. 

\begin{definition}\label{def:operator_cfg}
    An OCFG $\mathcal{G} = (\mathcal{V}, \mathcal{T}, P, \mathcal{S})$ with $n_G$ LTI operators $G_i$, $n_\phi$ nonlinear operators $\phi_j$ and $n_\alpha$ real coefficients is defined by the nonterminals $\mathcal{V} = \{ 1, G_i, \phi_j \mid i \in \mathbb{I}_{1}^{n_G}, j \in \mathbb{I}_{1}^{n_\phi} \} = \{ V_i \mid i \in \mathbb{I}_{1}^{n_V} \}$, the mathematical symbols $\mathcal{T} = \{ (, ), +, ^{-1} \}$, as terminals and the set of rules $\mathcal{S} = \{ V_i \to \alpha_l V_i, V_i \to (V_i), V_i \to V_i^{-1}, V_i \to V_i + V_j, V_i \to V_i V_j \mid i,j \in \mathbb{I}_{1}^{n_V}, l \in \mathbb{I}_{1}^{n_\alpha}\}$ that are the interconnection rules, where we take $P=1 \in \mathcal{V}$ as the starting symbol.
\end{definition}

Note that there can be at most a finite number of coefficients $\alpha_l \in \R$ instead of taking all $\R$, because the definition of a CFG requires that there can be only a finite number of rules. Since we are interested in finite interconnections, this constraint poses does not limit the applicability of our framework. 

Every word $w \in L(\mathcal{G})$ represents a system defined by a finite interconnection of operators. To illustrate this, we provide two examples.

\begin{example}\label{example:lure_word}
    Consider the Lur'e system in Fig.~\ref{fig:lure}. The operator $T$ is obtained by the following substitutions: $1 \to G \to G^{-1} \to (G)^{-1} \to (G+\phi)^{-1} \to (G^{-1} + \phi)^{-1} = T$. This way $T \in L(\mathcal{G})$ as a word, where $\mathcal{V} = \{ 1, G, \phi \}$.
\end{example}

\begin{example}
    Consider the block-chain system in Fig.~\ref{fig:block_chain}. Let us treat the upper and lower branch separately. For the top branch, we obtain $1 \to G_3 \to G_3 \phi_2 \to G_3 \phi_2 G_4 \to \dots \to G_3 (\phi_2^{-1} + G_5)^{-1} G_4 =: w_\mathrm{top}$, where $\dots$ is the same as Example~\ref{example:lure_word}. Similarly, we obtain $w_\mathrm{bot} := \phi_3 (G_6^{-1} + \phi_5)^{-1} \phi_4$. To describe the complete system, we do $1  \to \dots \to G_1(1+1)\phi_6 \to \dots \to G_1(w_\mathrm{top}+w_\mathrm{bot})\phi_6 \to \dots \to G_1 \phi_1 G_2(w_\mathrm{top}+w_\mathrm{bot})\phi_6 G_7 \phi_7$, where $\dots$ abbreviate intermediate steps. The closed-loop operator thus reads $T = G_1 \phi_1 G_2(G_3 (\phi_2^{-1} + G_5)^{-1} G_4+\phi_3 (G_6^{-1} + \phi_5)^{-1} \phi_4)\phi_6 G_7 \phi_7$.
\end{example}

The OCFG provides a blueprint for an explicit algorithm to compute a bound $\mathcal{C} \subseteq \C$ such that $\SRG(T) \subseteq \mathcal{C}$. That is, replace each $V_i \to \SRG(V_i)$ in $R$ and use Proposition~\ref{prop:srg_calculus} to compute $\mathcal{C}$. Moreover, each word $w \in L(\mathcal{G})$ represents a \emph{unique} SRG bound. This idea is developed in the next section.

\subsection{SRG Analysis of General Interconnections}\label{sec:srg_analysis_general_interconnections}

We will now apply the OCFG of Section~\ref{sec:general_interconnections_language} to the SRG analysis of general interconnections. Our analysis result aims at efficiently determining an (incremental) $L_2$-gain performance bound for general interconnections.

\subsubsection{Assumptions}

In order to study the stability of general interconnections, essentially covering all nonlinear systems, we must make some assumptions to make the analysis tractable. The first assumption regards the continuity of the system during a homotopy. Let $R_\tau : \Lte \times [0,1] \to \Lte$ be an operator parameterized by $\tau \in [0,1]$. We define the following notion of continuity.

\begin{definition}\label{def:tau_continuity}
    The operator $R_\tau : \Lte \times [0,1] \to \Lte$ is called continuous in $\tau$, if for all $u \in \Lte$ and $T \in \R_{\geq 0}$, the map $u \mapsto \norm{P_T R_\tau u}_2$ is continuous in $\tau \in [0,1]$. 
\end{definition}

This may seem as a strong assumption that is hard to check. However, Definition~\ref{def:tau_continuity} is merely a slightly stronger assumption than the commonly taken one that $f$ in the ODE $\dot x = f(x)$ is locally Lipschitz in $x$ to ensure existence and uniqueness of solutions to the differential equation, see \cite{thompsonOrdinaryDifferentialEquations1998,khalilNonlinearControl2002} and the example in Section~\ref{sec:example_lure_sat}. 

The assumption on the nonlinearities $\phi_j$ is that their SRGs are \emph{inflatable}, see Definition~\ref{def:general_inflation_condition}. Note that this implies boundedness of their SRGs.

\subsubsection{Stability of General Interconnections}

In this section, we explain and prove the main result. Using the homotopy method, we can analyze the (incremental) stability, see~\cite{megretskiSystemAnalysisIntegral1997}, of arbitrary interconnection of systems in the following way. Let $w_R \in L(\mathcal{G})$ denote the word that represents the interconnection $R$ in terms of LTI and nonlinear operators $G_i, \phi_j$, respectively. Informally speaking, the method goes as follows.

\begin{enumerate}
    \item Replace each $\phi_j \to \kappa_j \in \SRG(\phi_j) \cap \R$ to obtain an LTI system $R_\mathrm{LTI}(s)$.
    \item Analyze stability of $R_\mathrm{LTI}(s)$ using $\SRG'$.
    \item Replace $G_i \to \SRG'(G_i)$ and \enquote{inflate} back each $\kappa_j \to \SRG(\phi_j)$ to compute a bound $\mathcal{C}(R) \supseteq \SRG(R)$ using the elementary operations in Proposition~\ref{prop:srg_calculus}.
    \item If $R_\mathrm{LTI}$ is stable, and $\mathcal{C}(R)$ has finite radius, then the stability of $R_\mathrm{LTI}$ is not lost during the inflation $\kappa_j \to \phi_j$, concluding that $R$ is stable with $\Gamma(R) \leq \rmin(\mathcal{C}(R))$.
\end{enumerate}

We will now explain each step in detail, which results in a stability result for general interconnections.

\textbf{Step 1: Replace Nonlinear Operators with Real Gains.}

The following definition precisely defines how to replace the nonlinear operators $\phi_j$ in $R$ with real gains.

\begin{definition}\label{def:linearized_operator}
    Let $R : \Lte \to \Lte$ be an operator that is defined by some word $w_R \in L(\mathcal{G})$, where $\mathcal{G} = (\mathcal{V}, \mathcal{T}, P, \mathcal{S})$ is an OCFG with all nonlinear operators $\phi_j \in \mathcal{V}$ being inflatable, called an inflatable OCFG. Fix a lift $\Phi_{j, \tau}$ for each $\phi_j$ and define $R_\tau : \Lte \to \Lte$ as the operator where $\phi_j \to \Phi_{j,\tau}$ is substituted. 

    The operator $R_\mathrm{LTI}(s) := R_0$ is called the linearization of $R$, and is obtained by replacing $\phi_j \to \kappa_j \in \SRG(\phi_j) \cap \R$ for every nonlinear operator in $w_R$. This yields the word $w_{R_\mathrm{LTI}}$. 
\end{definition}

\noindent If one adds $\kappa_j$ to $\mathcal{V}$ for all $j \in \mathbb{I}_{1}^{n_\phi}$, one has $w_{R_\mathrm{LTI}} \in L(\mathcal{G})$.

\textbf{Step 2: Analyze Stability of the LTI System.}

Next, one must the assess stability of $R_\mathrm{LTI}(s)$. Stability can be checked by computing the poles of $R_\mathrm{LTI}(s)$, but we prefer to use a graphical method based on the Nyquist criterion, since this can be generalized to the nonlinear case using the SRG. This is accomplished by Lemma~\ref{lemma:finite_srg'_radius}, i.e. check if $\rmin(\SRG'(R_\mathrm{LTI})) < \infty$. 

\textbf{Step 3: Inflate Back the Nonlinearities and Compute an SRG Bound.}

If $R_\mathrm{LTI}(s)$ is stable, we can proceed to the third step. In this step, one replaces $G_i \to \SRG'(G_i)$ and $\phi_j \to \kappa_j$ in $w_R$, obtaining the word $w_{R_\mathrm{LTI}}^\mathrm{SRG}$. This word is not in $L(\mathcal{G})$ anymore, but rather defines a set in $\C$ as a SRG bound $w_{R_\mathrm{LTI}}^\mathrm{SRG} =: \mathcal{C}(R_\mathrm{LTI}) \supseteq \SRG'(R_\mathrm{LTI})$, where all nonterminal symbols in $w_{R_\mathrm{LTI}}^\mathrm{SRG}$ are evaluated using Theorem~\ref{thm:srg_calculus_for_extended_lti_srg}. Then, one replaces $\kappa_j \to \SRG(\phi_j)$ in $w_{R_\mathrm{LTI}}^\mathrm{SRG}$ to obtain the bound $w_{R}^\mathrm{SRG} =: \mathcal{C}(R) \supseteq \SRG(R)$ in the same way. Note that $\mathcal{C}(R_\mathrm{LTI}) \subseteq \mathcal{C}(R)$, since by inflating $\kappa_j \to \SRG(\phi_j)$, one only \emph{adds} points to the SRG (because of the inflatability assumption on $\phi_j$). We can define the SRG bound for an operator $R$, obtained via SRG calculus applied to the word $w_R$, as follows. 

\begin{definition}\label{def:srg_bound_from_word}
    Let $R : \Lte \to \Lte$ be an operator that is defined by some word $w_R \in L(\mathcal{G})$, where $\mathcal{G}$ is an inflatable OCFG. The SRG bound for $R$ obtained through $w_R$ is defined as
    \begin{equation}\label{eq:general_bound_def}
        \mathcal{C}(R) = w_{R}^\mathrm{SRG},
    \end{equation}
    where $w_R^\mathrm{SRG}$ is formed by replacing $G_i \to \SRG'(G_i)$ and $\phi_j \to \SRG(\phi_j)$ for each nonterminal $G_i, \phi_j \in \mathcal{V}$ in $w_R$. The word $w_R^\mathrm{SRG}$ becomes a set $\mathcal{C}(R) \subseteq \C$ when one views the terminals in $w_R^\mathrm{SRG}$ as calculation rules that are carried\footnote{When computing $\mathcal{C}(R)$, one must make sure that the chord (arc) property is satisfied when taking sums (products) of SRGs.} out using Proposition~\ref{prop:srg_calculus}.
\end{definition}

We emphasize that the SRG bound in Eq.~\eqref{eq:general_bound_def} is \emph{uniquely} defined by $w_R \in L(\mathcal{G})$, and \emph{independent} of the choice of lifts $\Phi_{j ,\tau}$ in Definition~\ref{def:linearized_operator}. 

\textbf{Step 4: The Main Result.}

We can now state the main result, which relates the bound in Eq.~\eqref{eq:general_bound_def} to the stability and $L_2$-gain performance of a nonlinear system.

\begin{theorem}\label{thm:general_interconnection}
    Let $R : \Lte \to \Lte$ be an operator that is defined by some word $w_R \in L(\mathcal{G})$, where $\mathcal{G}$ is an OCFG. On $\dom(R)$, one has the gain bound
    \begin{equation}\label{eq:general_bound}
        \Gamma(R) \leq \rmin(\mathcal{C}(R)).
    \end{equation}
    Under the conditions that $R_\tau$ is continuous in $\tau$ and $\mathcal{G}$ is inflatable OCFG, one can conclude that $\dom(R) = \Lte$. 
\end{theorem}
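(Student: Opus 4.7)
For the gain bound, I would proceed by structural induction on the parse tree of the word $w_R$ derived in the OCFG $\mathcal{G}$. At the leaves, every LTI atom $G_i$ satisfies $\SRG(G_i) \subseteq \SRG'(G_i)$ by Definition~\ref{def:lti_srg_extended}, and every nonlinear atom $\phi_j$ is bounded by $\SRG(\phi_j)$ by construction. At each internal node, one of the five production rules of Definition~\ref{def:operator_cfg} --- scalar multiplication, grouping, inversion, sum, and product --- is matched by the corresponding clause of Proposition~\ref{prop:srg_calculus}. Whenever the subexpression is purely LTI, Theorem~\ref{thm:srg_calculus_for_extended_lti_srg} lifts the same five rules verbatim to $\SRG'$, so unstable LTI factors are absorbed in the same calculus. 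The chord and arc side-conditions required by the sum and product rules are the standing assumption embedded in Definition~\ref{def:srg_bound_from_word}. Collecting the inclusions bottom-up yields $\SRG(R) \subseteq \mathcal{C}(R)$, and the identity $\Gamma(R) = \rmin(\SRG(R))$ on $\dom(R)$ then delivers~\eqref{eq:general_bound}.

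For the second claim, I would run the homotopy $R_\tau$ of Definition~\ref{def:linearized_operator}. Inflatability of each $\phi_j$ gives $\SRG(\Phi_{j,\tau_1}) \subseteq \SRG(\Phi_{j,\tau_2})$ whenever $\tau_1 \leq \tau_2$; propagating these containments bottom-up through the inclusion-preserving SRG operations that build $\mathcal{C}$ yields the monotonicity $\mathcal{C}(R_\tau) \subseteq \mathcal{C}(R)$ for every $\tau \in [0,1]$. Combined with the first part, this produces the \emph{uniform} bound $\Gamma(R_\tau) \leq \rmin(\mathcal{C}(R))$ on $\dom(R_\tau)$; the conclusion is vacuous when $\rmin(\mathcal{C}(R)) = \infty$, so I proceed under the nontrivial assumption that it is finite. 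At $\tau = 0$ the system $R_0 = R_\mathrm{LTI}$ is LTI with bounded $\SRG'(R_0)$, so Lemma~\ref{lemma:finite_srg'_radius} (together with causality, via Lemma~\ref{lemma:L2_norm_to_L2e}) gives $\dom(R_0) = \Lte$.

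The final, and most delicate, step transports $\dom(R_0) = \Lte$ to $\dom(R_1) = \dom(R) = \Lte$ along the homotopy, in the spirit of the argument of Chaffey et al.\ recorded in Appendix~\ref{app:non_incremental_homotopy}. I would argue by contradiction: suppose there exist $u \in \Lte$ and $T \in \R_{\geq 0}$ with $\norm{P_T R u}_2 = \infty$, and set $\tau^\star := \inf\{\tau \in [0,1] : \norm{P_T R_\tau u}_2 = \infty\}$, which is strictly positive by the previous step. For every $\tau < \tau^\star$, the uniform bound together with Lemma~\ref{lemma:L2_norm_to_L2e} applied to $P_T u \in L_2$ gives $\norm{P_T R_\tau u}_2 \leq \rmin(\mathcal{C}(R)) \, \norm{P_T u}_2 < \infty$. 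The continuity hypothesis of Definition~\ref{def:tau_continuity} then forces $\tau \mapsto \norm{P_T R_\tau u}_2$ to be continuous at $\tau^\star$, contradicting $\norm{P_T R_{\tau^\star} u}_2 = \infty$. I expect the main obstacle to lie precisely in reconciling the truncation-based notion of continuity with the uniform SRG bound: inflatability is needed in its full monotone form --- not merely as eventual boundedness of $\mathcal{C}(R_\tau)$ as $\tau \to 1$ --- so that the bound is available throughout a one-sided neighborhood of $\tau^\star$ and the limit in $\tau$ can legitimately be taken inside the truncated norm.
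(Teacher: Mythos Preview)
Your proposal follows the paper's own line of argument: both establish $\SRG(R)\subseteq\mathcal{C}(R)$ on $\dom(R)$ (you make this explicit via structural induction on the OCFG parse tree, while the paper simply asserts the inclusion), then use inflatability to obtain the monotone chain $\SRG'(R_{\mathrm{LTI}})\subseteq\mathcal{C}(R_{\mathrm{LTI}})\subseteq\mathcal{C}(R_\tau)\subseteq\mathcal{C}(R)$, invoke Lemma~\ref{lemma:finite_srg'_radius} for stability of $R_0=R_{\mathrm{LTI}}$, and transport stability along the homotopy via the continuity hypothesis before extending to $\Lte$ through causality and Lemma~\ref{lemma:L2_norm_to_L2e}. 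The paper's proof is in fact terser than yours at precisely the step you flag as ``the main obstacle'': it writes only ``we know that $R_\tau$ is stable for all $\tau\in[0,1]$'' from the finite radius of $\mathcal{C}(R)$, so your contradiction sketch already goes beyond what the paper spells out.

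One point to tighten: in the contradiction argument you invoke Lemma~\ref{lemma:L2_norm_to_L2e} for $\tau<\tau^\star$ to bound $\norm{P_T R_\tau u}_2$, but that lemma presupposes $R_\tau:L_2\to L_2$ with \emph{full} domain, whereas your definition of $\tau^\star$ only guarantees $\norm{P_T R_\tau u}_2<\infty$ for the single pair $(u,T)$ under consideration. A cleaner formulation is to set $\tau^\star:=\sup\{\tau'\in[0,1]:\dom(R_\sigma)=L_2\text{ for all }\sigma\leq\tau'\}$, so that the full-domain hypothesis is available for every $\tau<\tau^\star$ and the lemma applies to all inputs; continuity plus the uniform bound then force $\tau^\star=1$ by the standard open--closed connectedness argument underlying the IQC homotopy of~\cite{megretskiSystemAnalysisIntegral1997}. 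You already anticipate that this is where the care is needed.
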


\noindent The proof can be found in Appendix~\ref{app:proofs}.

Let us briefly discuss the implications of Theorem~\ref{thm:general_interconnection}. When the conditions of the theorem are satisfied, Eq.~\eqref{eq:general_bound} shows that one can analyze stability and incremental $L_2$-gain performance using Proposition~\ref{prop:srg_calculus} with $\SRG'(G_i)$ and $\SRG(\phi_j)$ for the LTI and nonlinear operators, respectively. This includes interconnections with unstable plants and plants with integrators. Hence, as a special case, our result resolves the pitfall identified in Section~\ref{sec:pitfall}. In short, if one uses the extended SRG and inflatable nonlinearities in \enquote{SRG calculus} with Proposition~\ref{prop:srg_calculus}, the pitfall is avoided.

\subsubsection{Remarks on the Main Result}

When computing $\mathcal{C}(R)$, one can compute the SRG of the LTI operators $G_i$ in different ways. We note that for each factor $w \in L(\mathcal{G})$ of $R$ that contains only LTI operators and terminals, one can tighten the bound $\mathcal{C}(R)$ take $\SRG'(w)$ instead of computing the SRG separately for each LTI operator. 
    
More specifically, let $w_R \in L(\mathcal{G})$ and denote $\mathcal{G}$ the OCFG with only the LTI operators $G_i$. Upon replacing $V_i \to \SRG(V_i)$ in $w_R$, one obtains $w_R^\mathrm{SRG}$. For factors $w_1, w_2 \in L(\mathcal{G}_\mathrm{LTI})$ (i.e. interconnections of LTI operators) we introduce the simplification rules $\SRG'(w_1) + \SRG'(w_2) \to \SRG'(w_1 + w_2)$ and $\SRG'(w_1) \SRG'(w_2) \to \SRG'(w_1 w_2)$ that can be applied to $w_R^\mathrm{SRG}$. One applies these rules iteratively on $w_R^\mathrm{SRG}$ until there is no simplification possible anymore. Note that $\mathcal{C}(R_\mathrm{LTI}) = \SRG'(R_\mathrm{LTI})$ if one groups all LTI operators in $R_\mathrm{LTI}$ together.

Replacing $\SRG(\phi_j) \to \SG_0(\phi)$ in Theorem~\ref{thm:general_interconnection} results in an $L_2$-gain bound instead of an incremental $L_2$-gain bound.

Finding an SRG bound $\mathcal{C}(R)$ with finite radius has similar implications as finding a Lyapunov function. If one can find one, the system is stable, but the converse is not true; a system $R$ can be stable, while one has not found an SRG bound with finite radius. Moreover, each word $w_R$ generates a unique $\mathcal{C}(R)$, but there may be more than one $w_R$ that describes the same operator. Each $w_R$ generates a possibly different bound, some of which can have finite radius while others have not.

If the nonlinearities in Theorem~\ref{thm:general_interconnection} are not inflatable, then the result still holds if $\rmin(\mathcal{C}(R_\tau)) < \infty$ for all $\tau \in [0,1]$, where $R_\tau$ is obtained from $R$ by $\phi_j \to \tau \phi_j$.

\section{Examples}\label{sec:examples}

We present three examples showcasing the use of the general method Theorem~\ref{thm:general_interconnection} and the result for specific interconnections Theorem~\ref{thm:lti_srg_nyquist_extension}. The first example, the Duffing oscillator in Section~\ref{sec:examples_duffing}, showcases the application of Theorem~\ref{thm:lti_srg_nyquist_extension} to a Lur'e system. It also shows how the SRG analysis is applied to restricted input spaces. The second example, the nonlinear pendulum in Section~\ref{sec:examples_nl_pendulum}, demonstrates the application of Theorem~\ref{thm:lti_srg_nyquist_extension} to a controlled Lur'e plant. The third example, the controlled Lur'e plant with saturation in Section~\ref{sec:example_lure_sat}, shows how the general result Theorem~\ref{thm:general_interconnection} is applied in practice.

\subsection{Duffing Oscillator}\label{sec:examples_duffing}

The Duffing oscillator is defined by the ODE
\begin{equation}\label{eq:duffing}
    \ddot y + \delta \dot y + \alpha y + \beta y^3 = u(t),
\end{equation}
where $\alpha, \beta, \delta \in \R$ are parameters and $u(t)\in \Lte$ is some input signal. It can be viewed as a mass-spring-damper system with nonlinear spring. The parameters for this system are chosen as $\alpha = -1, \, \beta=1$, and $\delta = 0.3$. The control objective we consider is to stabilize the origin w.r.t. the input $r=0$ and process noise $d$, see Fig.~\ref{fig:lure_disturbed}. We assume that the disturbance obeys $\norm{d}_\infty \leq 1$. The chosen controller is a PD controller. 

\begin{figure}[tb]
    \centering
    \begin{subfigure}[b]{0.9\linewidth}
        \centering

        \tikzstyle{block} = [draw, rectangle, 
        minimum height=2em, minimum width=2em]
        \tikzstyle{sum} = [draw, circle, node distance={0.5cm and 0.5cm}]
        \tikzstyle{input} = [coordinate]
        \tikzstyle{output} = [coordinate]
        \tikzstyle{pinstyle} = [pin edge={to-,thin,black}]
        
        \begin{tikzpicture}[auto, node distance = {0.3cm and 0.5cm}]
            \node [input, name=input] {};
            \node [sum, right = of input] (sum) {$ $};
            \node [block, right = of sum] (controller) {$K(s)$};
            \node [sum, right = of controller] (sigma) {$ $};
            \node [input, above = of sigma] (disturbance) {};
            \node [block, right = of sigma] (lti) {$G(s)$};
            \node [coordinate, right = of lti] (z_intersection) {};
            \node [output, right = of z_intersection] (output) {}; 
            \node [block, below = of lti] (static_nl) {$\phi$};
            \node [coordinate, right = of static_nl] (phi_intersection) {};
        
            \draw [->] (input) -- node {$r$} (sum);
            \draw [->] (sum) -- node {$e$} (controller);
            \draw [->] (controller) -- node {$u$} (sigma);
            \draw [->] (sigma) -- node {$u'$} (lti);
            \draw [->] (lti) -- node [name=z] {$y$} (output);
            \draw [->] (z) |- (static_nl);
            \draw [->] (static_nl) -| node[pos=0.99] {$-$} (sigma);
            \node [coordinate, below = of static_nl] (tmp1) {$H(s)$};
            \draw [->] (z) |- (tmp1)-| node[pos=0.99] {$-$} (sum);
            \draw [->] (disturbance) -- node [pos=0]{$d$} (sigma);
        
        \end{tikzpicture}
        \caption{Block diagram of the controlled Duffing oscillator with disturbance $d$ and reference $r$.}
        \label{fig:lure_disturbed}
    \end{subfigure}
    \hfill
    \begin{subfigure}[b]{0.7\linewidth}
        \centering
        \includegraphics[width=\linewidth]{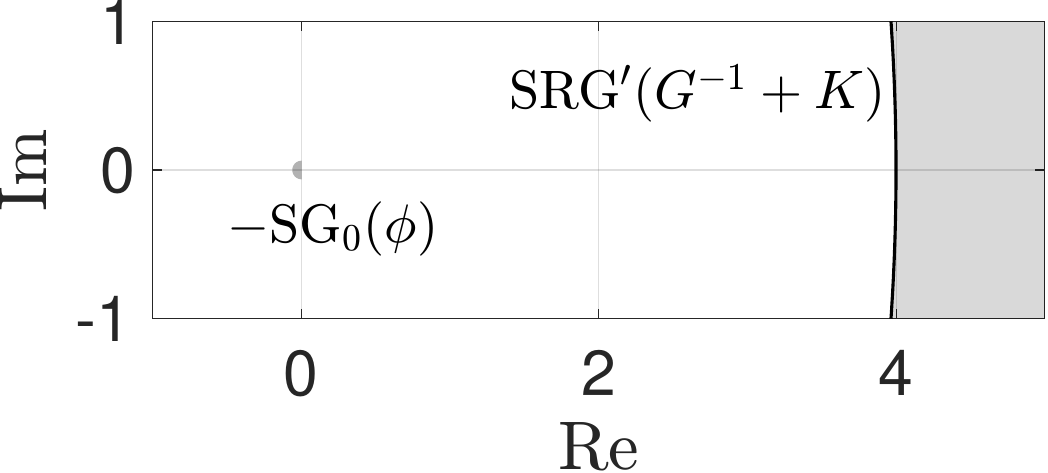}
        \caption{SRG analysis with $\norm{y}_\infty = 0.25$.}
        \label{fig:srg_duffing_pd}
    \end{subfigure}
    \hfill
    \centering
    \begin{subfigure}[b]{0.7\linewidth}
        \centering
        \includegraphics[width=\linewidth]{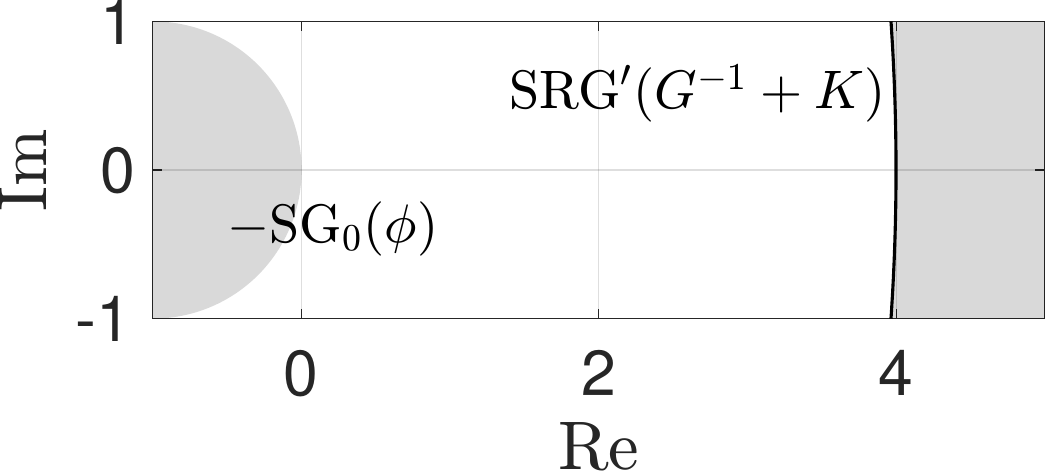}
        \caption{SRG analysis with $\norm{y}_\infty = \sqrt{2}$.}
        \label{fig:srg_duffing_pd_sqrt2}
    \end{subfigure}
    \hfill
    \begin{subfigure}[b]{0.6\linewidth}
        \centering
        \includegraphics[width=\linewidth]{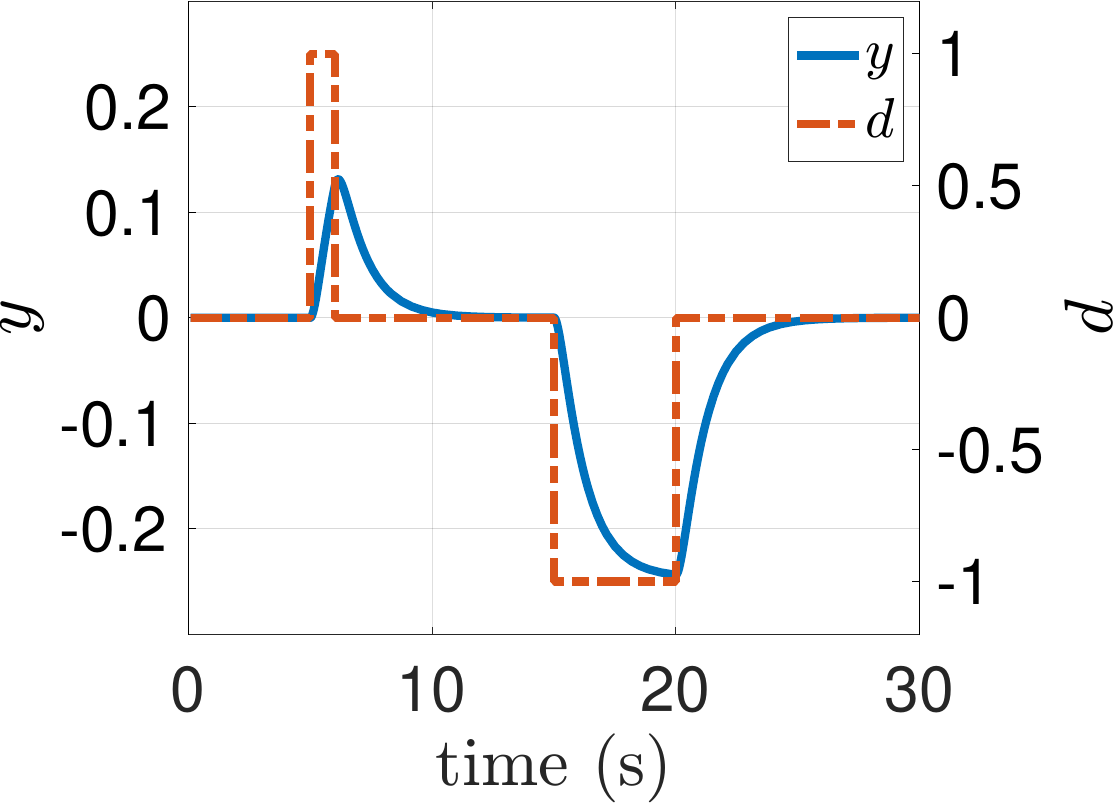}
        \caption{Simulation response with disturbance.}
        \label{fig:sim_duffing_pd}
    \end{subfigure}
    \caption{Analysis of the controlled Duffing oscillator.}
    \label{fig:duffing_pd_srg_sim}
\end{figure}

\subsubsection{Stability and Performance Analysis with SRGs}

In order to write the Duffing system in the form of Fig.~\ref{fig:lure_disturbed}, we rewrite Eq.~\eqref{eq:duffing} as a Lur'e system with $G(s) = 1/(s^2 + \delta s + \alpha)$ and $\phi(y) = \beta y^3$. The PD controller is given in the form of $K(s)=k_p+k_d s/(s/N+1)$. It turns out that Fig.~\ref{fig:lure_disturbed} can be written in the Lur'e form in Fig.~\ref{fig:lure} by setting $r = d$ and replacing $G$ with $\tilde{G} = G/(1+GK)$. This can be seen from the relation $y = (G^{-1} + K +\phi)^{-1} d$ and comparing with Eq.~\eqref{eq:lure_closedloop}. 

Note that $\SRG(\phi)$ is problematic, since $\SRG(\phi) = \C_{\mathrm{Re} >0}$, see Appendix~\ref{app:srg_cubic_nl}. To remedy this, we consider an upper bound $\norm{y}_\infty \leq M$, which can be obtained for the Duffing oscillator based on the approach detailed in Appendix~\ref{app:duffing_bound}. Since we study the behavior w.r.t. the zero solution $r=d=0$, we must use the SG around zero. This gives us $\SG_{0}(\phi) \subseteq \{ z\in \C_{\mathrm{Re} >0} \mid |z| \leq \norm{y}_\infty^2 \}$, see Appendix~\ref{app:srg_cubic_nl}. For values $k_p=k_d=5$ and $N \to \infty$, the bound in Appendix~\ref{app:duffing_bound} reads $\norm{y}_\infty \leq 0.25$.

Now we apply Theorem~\ref{thm:lti_srg_nyquist_extension} for the Lur'e system and compute
\begin{equation*}
    \dist(\SRG'(G^{-1}+K), -\SG_0(\phi)) = r_m,
\end{equation*}
which gives $r_m=4$. See Fig.~\ref{fig:srg_duffing_pd} for the SRG analysis, where $N=100$ is used for the PD controller. We can conclude $PS: \Lte \to \Lte$ (not per se well-posed) with $\gamma(PS) \approx 1/4$.

Instead of estimating an upper bound for $\norm{y}_\infty$, one can design the controller for the operating range $y \in [-\sqrt{2}, \sqrt{2}]$, as done in~\cite{koelewijnEquilibriumIndependentControlContinuousTime2023}. This allows one to use $\SRG(\phi) \subseteq D_{[0, 6\beta]}$ and $\SG_0(\phi) \subseteq D_{[0, 2 \beta]}$, see Appendix~\ref{app:srg_cubic_nl}, directly in SRG calculations. The latter bound is used in Fig.~\ref{fig:srg_duffing_pd_sqrt2}, where a bound $\gamma(PS) \approx 1/4$ is obtained. The result is analogous to the result from Fig.~\ref{fig:srg_duffing_pd}, which means that, in this case, the resulting $L_2$-gain is not dependent on the SRG bound that is used for $\phi$.

\subsubsection{Simulation Results}

The disturbed Duffing oscillator is simulated on the time interval $t \in [0, 30]$ for the parameters $\alpha = -1, \, \beta =1, \, \delta = 0.3$ and controller parameters $k_p=k_d=5, \, N=100$. At $t=5$ we apply $d=1$ for one second and at $t=15$ we apply $d=-1$ for five seconds. The results are plotted in Fig.~\ref{fig:sim_duffing_pd}. We see that indeed the response of the system is at most $1/4$ the size of the disturbance (see $t=20$), which corresponds to $\gamma(PS) \approx 1/4$.

\subsection{Nonlinear Pendulum}\label{sec:examples_nl_pendulum}

In the case of the nonlinear pendulum in Fig.~\ref{fig:controlled_lure}, the output $y$ is now the pendulum angle $\theta$. For this setup, we will study incremental stability. The nonlinear pendulum is described by the ordinary differential equation (ODE) 
\begin{equation}
    m l^2 \ddot{\theta} + C l \dot{\theta} + m g l \sin(\theta) = u, 
\end{equation}
where $u$ is the torque applied by the controller, and length $l$, mass $m$, gravity $g$, and friction $C$ are the parameters. The nonlinear pendulum is represented in Lur'e form by
\begin{align*}
    G(s) &= \frac{1}{m l^2 s^2 + C l s},\\
    \phi(\theta) &= mgl \sin(\theta).
\end{align*}
W.l.o.g., we assume parameters $m=l=C=g=1$ for the remainder of this work. We consider a PID controller $K(s)=k_p + k_i /s + k_d s/(s/N+1)$ that controls the motion of the pendulum by applying a torque. We consider two different sets of PID controller parameters. The first one reads $K_1(s) = 2 + 1/s + s/(s/N+1)$ and the second one $K_2(s) = 5 + 1/s + 2s/(s/N+1)$, with $N=10$. Note that we use an implementable (proper) variant of the PID controller by picking $N < \infty$. 

To analyze this system using Theorem~\ref{thm:chaffey_thm2} it is required that $L = ((GK)^{-1} + K^{-1} \phi)^{-1}$ has finite $L_2$-gain. Intuitively, this is already impossible by virtue of the integrator in the controller and, as we show later, one would conclude with Theorem~\ref{thm:chaffey_thm2} that $\Gamma(L) = \infty$. Instead, we apply Theorem~\ref{thm:lti_srg_nyquist_extension} to analyze the controlled pendulum.

\subsubsection{Stability and Performance Analysis with SRGs}\label{sec:pendulum_stability_analysis_srg}

To prove stability of the controlled nonlinear pendulum, one has to show that $\dist(\SRG(L), -1) > 0$, where $L = ((GK)^{-1} + K^{-1} \phi)^{-1}$. By Theorem~\ref{thm:srg_calculus_for_extended_lti_srg} and Theorem~\ref{thm:lti_srg_nyquist_extension}, it suffices to compute
\begin{equation*}
    \dist(-1-\SRG'(K)\SRG(\phi), \SRG'((GK)^{-1})) = r_m,
\end{equation*}
which is visualized in Fig.~\ref{fig:pendulum_stability_K1} for $K=K_1$.

\begin{figure*}[tb]
    \centering
     \begin{subfigure}[b]{0.225\linewidth}
         \centering
         \includegraphics[width=\linewidth]{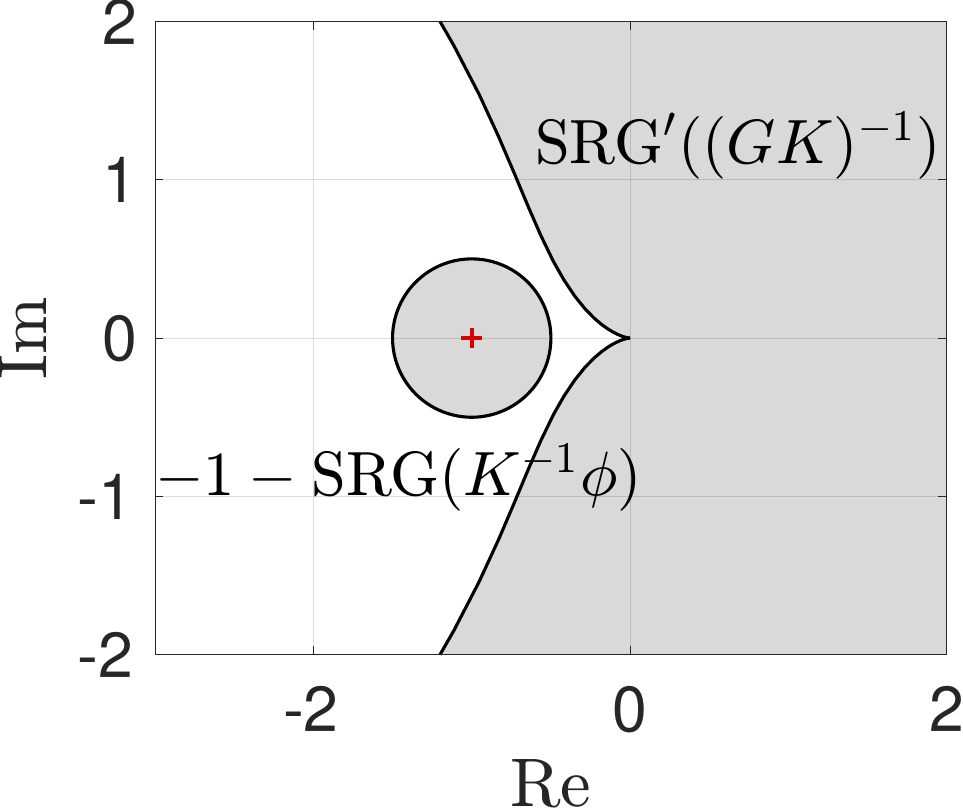}
         \caption{SRG analysis of $K=K_1$.}
         \label{fig:pendulum_stability_K1}
     \end{subfigure}
     \hfill
     \begin{subfigure}[b]{0.26\linewidth}
         \centering
         \includegraphics[width=\linewidth]{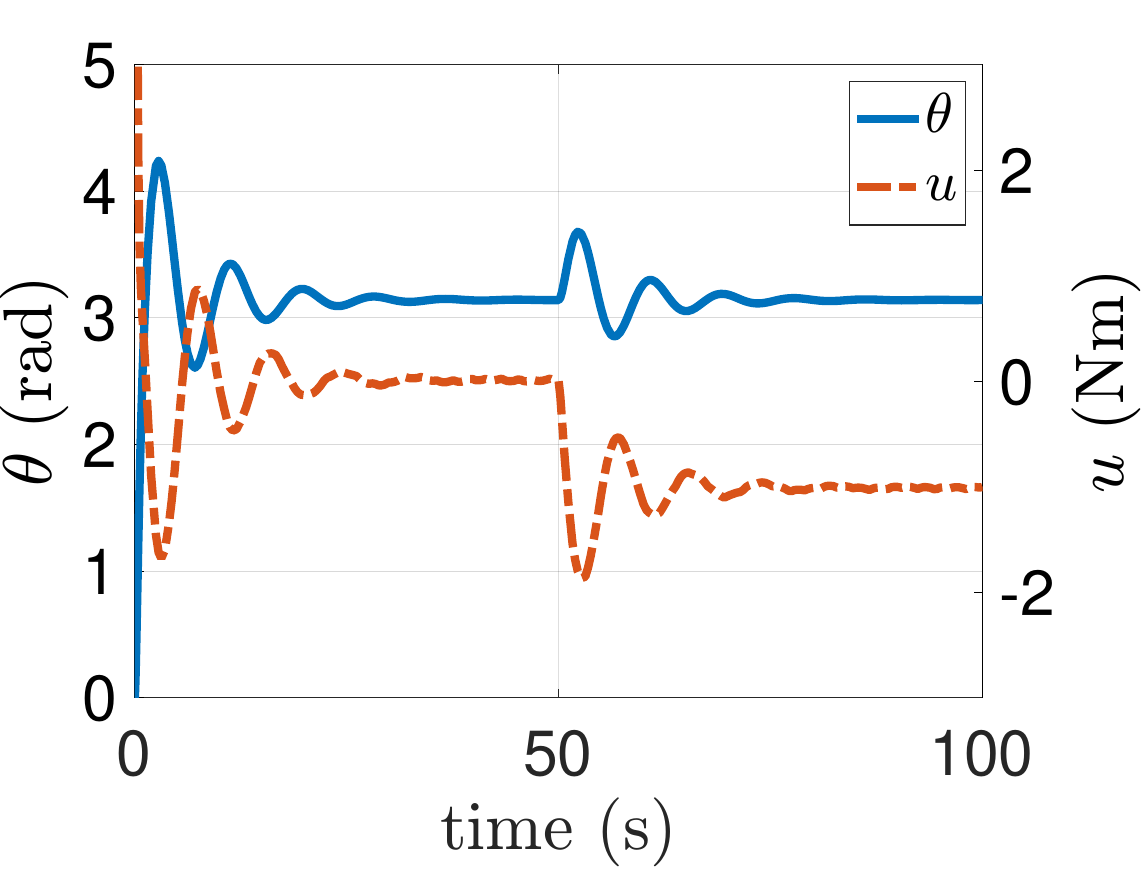}
         \caption{Simulation of $K = K_1$.}
         \label{fig:pendulum_simulations_K1}
     \end{subfigure}
     \hfill
     \begin{subfigure}[b]{0.225\linewidth}
         \centering
         \includegraphics[width=\linewidth]{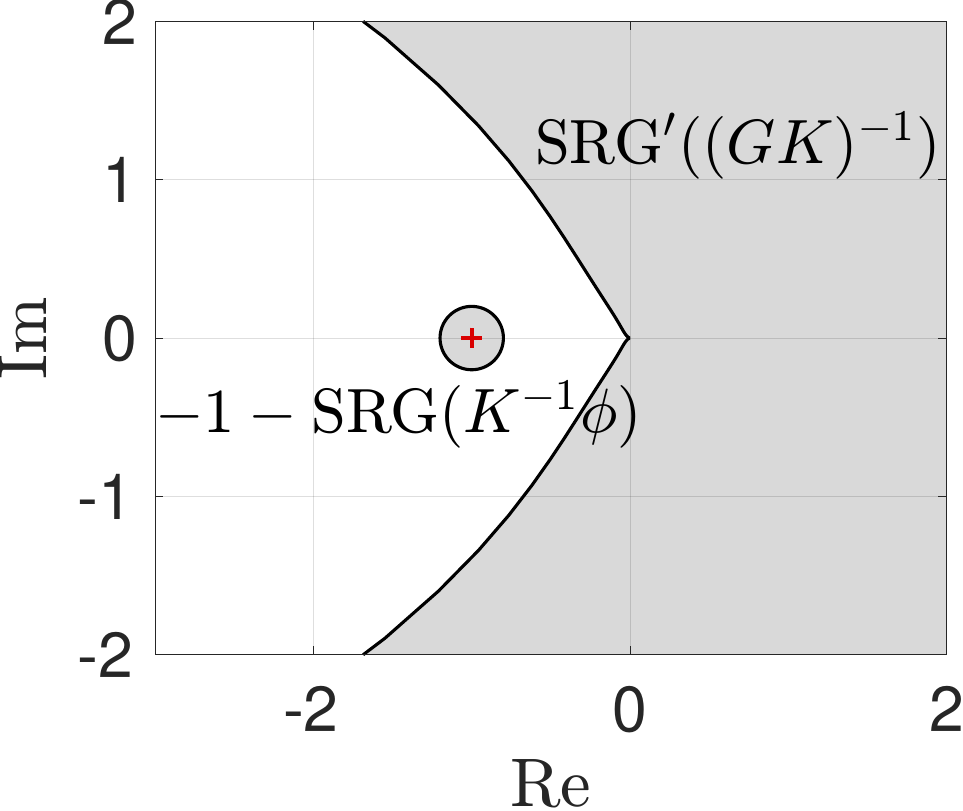}
         \caption{SRG analysis of $K=K_2$.}
         \label{fig:pendulum_stability_K2}
     \end{subfigure}
     \hfill
     \begin{subfigure}[b]{0.26\linewidth}
         \centering
         \includegraphics[width=\linewidth]{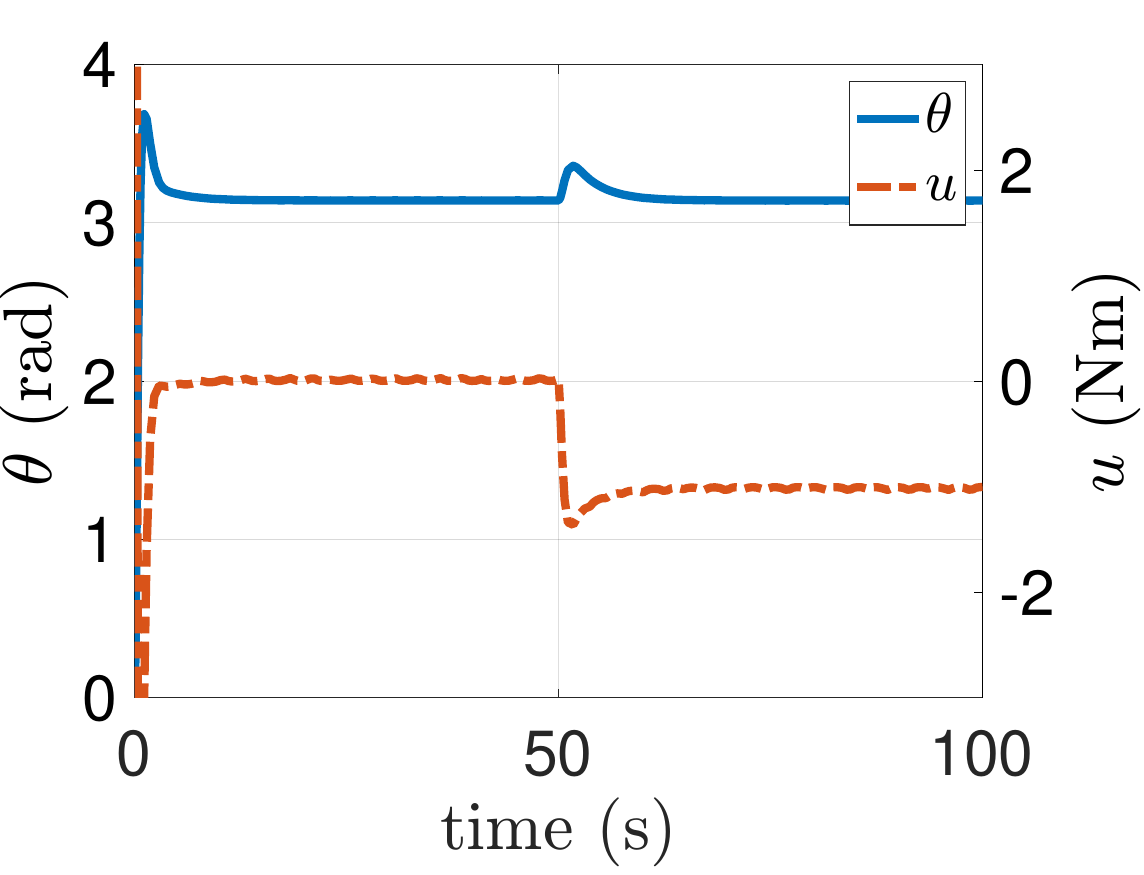}
         \caption{Simulation of $K = K_2$.}
         \label{fig:pendulum_simulations_K2}
     \end{subfigure}
    \caption{SRG analysis and simulations of the nonlinear pendulum.}
    \label{fig:pendulum}
\end{figure*}

The SRG bounds in Fig.~\ref{fig:pendulum_stability_K1} are obtained as follows. Firstly, we have $\phi(x) = \sin(x)$, which satisfies $\partial \phi \in [-1, 1]$, hence $\SRG(\phi) \subseteq D_{[-1, 1]}$ by Proposition~\ref{prop:static_nl_srg}. Moreover, one can show that $\SRG'(K)^{-1} = D_{[0, 1/k_p]}$ using Theorem~\ref{thm:nyquist}, observing $K(j \omega) = 2 + j(k_d \omega - k_i/\omega)$ and using the fact that a line $a + j \R$ maps to the boundary of $D_{[0, 1/a]}$ under the M\"obius inverse. Therefore, we can bound $\SRG'(K)^{-1} \SRG(\phi) \subseteq D_{[-1/k_p, 1/k_p]}$. In Fig.~\ref{fig:pendulum_stability_K1}, the set $-1-\SRG'(K)\SRG(\phi)$ is bounded by a disk centered at $-1$ with radius $1/k_p$.

The SRG analysis for $K=K_1$ and $K=K_2$ is visualized in Figs.~\ref{fig:pendulum_stability_K1} and \ref{fig:pendulum_stability_K2}. By Theorem~\ref{thm:lti_srg_nyquist_extension}, we can conclude that $T : \Lte \to \Lte$ is a well-posed operator with $\Gamma(T) \leq 1/0.19$ for $K=K_1$ and $\Gamma(T) \leq 1/0.81$ for $K=K_2$. 

Finally, we will discuss why Theorem~\ref{thm:chaffey_thm2} cannot be applied in this situation. From Fig.~\ref{fig:pendulum_stability_K1}, is evident that $-\SRG(K^{-1} \phi) \cap \SRG'((GK)^{-1}) \neq \emptyset$, which means that $0 \in \SRG(K^{-1} \phi) + \SRG'((GK)^{-1}) \supseteq \SRG(L)^{-1}$. Therefore, we have the best estimate $\Gamma(L) = \infty$ with Theorem~\ref{thm:chaffey_thm2}, which shows that Theorem~\ref{thm:chaffey_thm2} is not applicable for the nonlinear pendulum. 

\begin{remark}[Gravity and the pendulum nonlinearity]
    Note that in this case, by picking $\kappa=0$ in $\SRG(x \mapsto \sin x)$, the homotopy method has the direct physical interpretation that is gravity turned off at $\tau=0$, and is gradually switched on.
\end{remark}

\subsubsection{Simulation Results}

Similar to \AA{}str\"om et al.~\cite{astromDesignPIControllers1998}, we consider the setpoint change at $t=0$ from $\theta=0$ to $\theta = \pi$, and at time $t=50$, we apply an external torque $\tau = mgl$. The simulation results are given in Figs.~\ref{fig:pendulum_simulations_K1} and \ref{fig:pendulum_simulations_K2}. 

We can see that the step response for $K_2$ is less oscillatory than for $K_1$, matching the smaller gain bound. In similar spirit to \cite{astromDesignPIControllers1998}, one can use our proposed results with the SRG framework to design controllers with a given (incremental) gain bound on the (complementary) sensitivity function.

\subsection{Controlled Lur'e Plant with Saturation}\label{sec:example_lure_sat}

As a last example, we consider a controlled Lur'e plant with an input saturation, as depicted in Fig.~\ref{fig:controlled_lure_sat}. The LTI blocks are
\begin{equation*}
    K(s) = \frac{1}{s+1}, \quad G(s) = \frac{3}{(s-2)(s/10+1)},
\end{equation*} 
and the nonlinearities are the static functions
\begin{equation*}
    \phi_1(s) = 
    \begin{cases}
        x  \text{ if } |x| \leq 1, \\ 
        x/|x|  \text{ else}, 
    \end{cases} 
    \phi_2(x) = 
    \begin{cases}
        x \text{ if } |x| \leq 1, \\ 
        2x - x/|x| \text{ else}.
    \end{cases}
\end{equation*}

\begin{figure}[tb]
    \centering
    \begin{subfigure}[b]{0.9\linewidth}
        \centering
        \tikzstyle{block} = [draw, rectangle, 
        minimum height=2em, minimum width=2em]
        \tikzstyle{sum} = [draw, circle, node distance={0.5cm and 0.5cm}]
        \tikzstyle{input} = [coordinate]
        \tikzstyle{output} = [coordinate]
        \tikzstyle{pinstyle} = [pin edge={to-,thin,black}]
        
        \begin{tikzpicture}[auto, node distance = {0.3cm and 0.5cm}]
            \node [input, name=input] {};
            \node [sum, right = of input] (sum) {$ $};
            \node [block, right = of sum] (controller) {$K(s)$};
            \node [block, right = of controller] (saturation) {$\phi_1$};
            \node [sum, right = of saturation] (sigma) {$ $};
            \node [block, right = of sigma] (lti) {$G(s)$};
            \node [coordinate, right = of lti] (z_intersection) {};
            \node [output, right = of z_intersection] (output) {}; 
            \node [block, below = of lti] (static_nl) {$\phi_2$};
            \node [coordinate, right = of static_nl] (phi_intersection) {};
        
            \draw [->] (input) -- node {$r$} (sum);
            \draw [->] (sum) -- node {$e$} (controller);
            \draw [->] (controller) -- node {$u$} (saturation);
            \draw [->] (saturation) -- node {$\hat{u}$} (sigma);
            \draw [->] (sigma) -- node {$u'$} (lti);
            \draw [->] (lti) -- node [name=z] {$y$} (output);
            \draw [->] (z) |- (static_nl);
            \draw [->] (static_nl) -| node[pos=0.99] {$-$} (sigma);
            \node [coordinate, below = of static_nl] (tmp1) {$H(s)$};
            \draw [->] (z) |- (tmp1)-| node[pos=0.99] {$-$} (sum);
        
        \end{tikzpicture}
        \caption{Block diagram of a controlled Lur'e plant with saturation.}
        \label{fig:controlled_lure_sat}
    \end{subfigure}
    \hfill
    \begin{subfigure}[b]{0.44\linewidth}
        \centering
        \includegraphics[width=\linewidth]{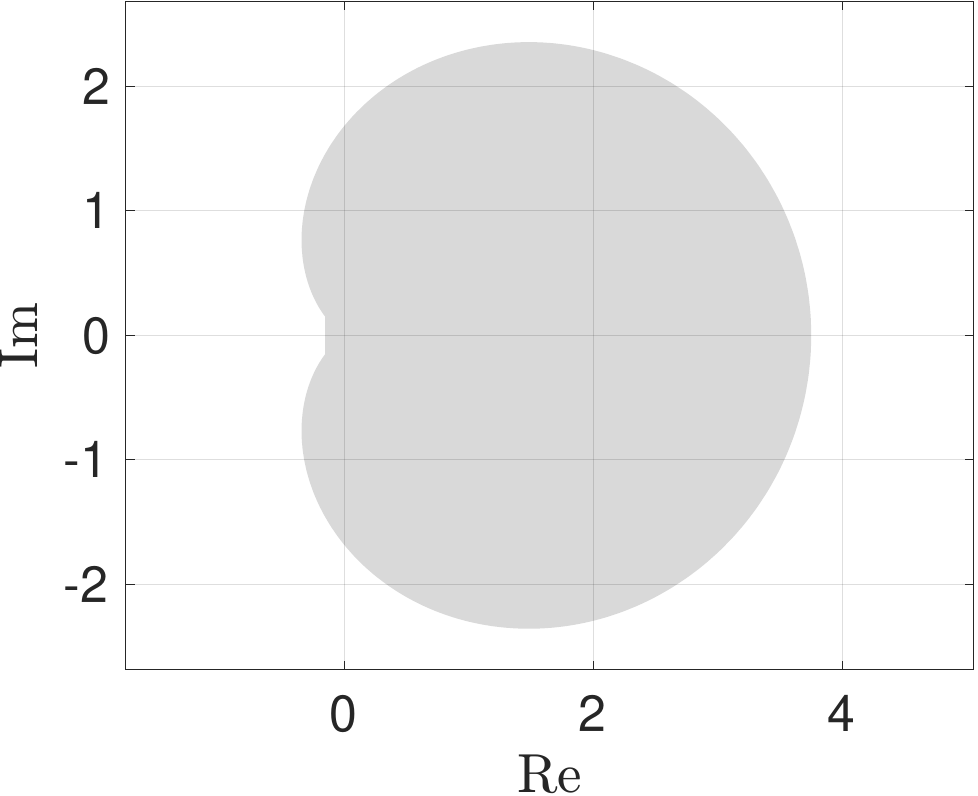}
        \caption{Bound $\mathcal{C}(L)$ for $\SRG(L)$.}
        \label{fig:lure_sat_srg_L}
    \end{subfigure}
    \hfill
    \begin{subfigure}[b]{0.45\linewidth}
        \centering
        \includegraphics[width=\linewidth]{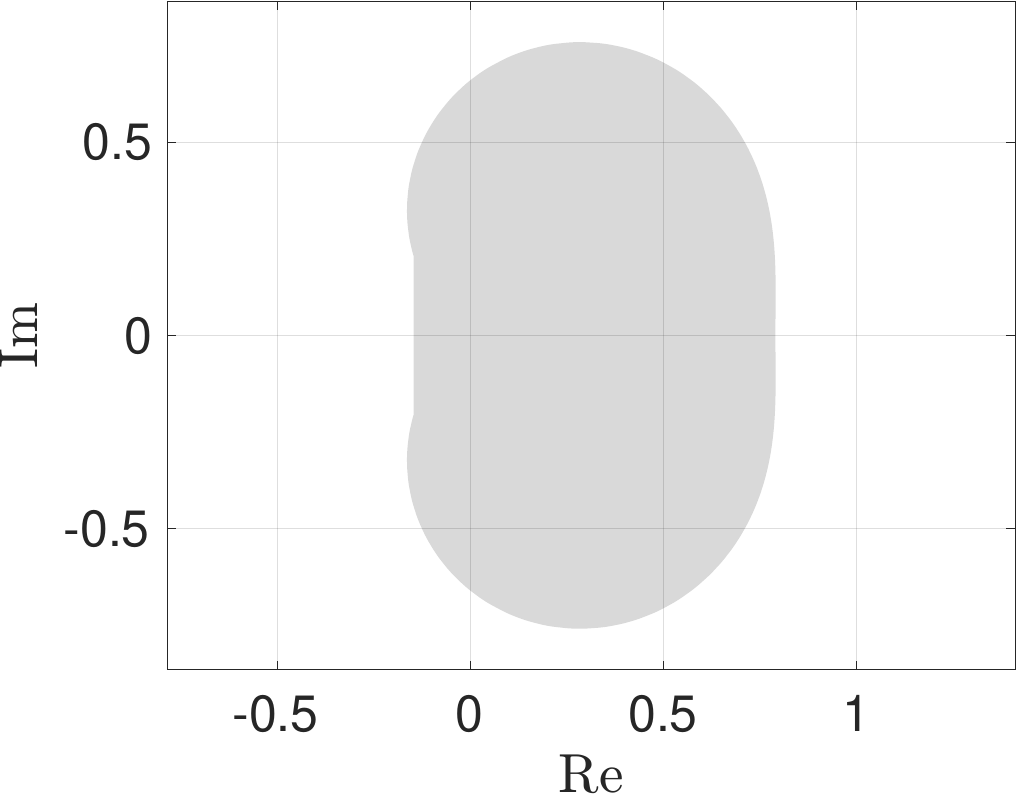}
        \caption{Bound $\mathcal{C}(T)$ for $\SRG(T)$.}
        \label{fig:lure_sat_srg_T}
    \end{subfigure}
\caption{Analysis of the example in Section~\ref{sec:example_lure_sat}.}
\label{fig:lure_sat_srg}
\end{figure}

Since $\phi_1, \phi_2$ are static nonlinearities, their SRGs (SGs) are readily obtained from Proposition~\ref{prop:static_nl_srg} (Proposition~\ref{prop:static_nl_srg_non_incremental}) as $\SRG(\phi_1) = D_{[0,1]}$  and $\SRG(\phi_2) = D_{[1, 2]}$ (note the $\SG_0$ are the same). Since these are circles, they are inflatable SRGs.

The system in Fig.~\ref{fig:controlled_lure_sat} is not covered by Theorem~\ref{thm:lti_srg_nyquist_extension}. Hence, we will use Theorem~\ref{thm:general_interconnection} instead to analyze stability and $L_2$-gain performance. For that, we consider the OCFG $\mathcal{G}$ with $\mathcal{V} = \{1, G, K, \phi_1, \phi_2 \}$. The operator $y=Tr$ is written as
\begin{equation}\label{eq:lure_sat_operator}
    T=(1+L^{-1})^{-1}, \quad L=(G^{-1}+\phi_2)^{-1} \phi_1 K,
\end{equation}
where $T \in L(\mathcal{G})$ is the word that describes the representation.

Now we invoke Theorem~\ref{thm:general_interconnection} to compute an (incremental) gain bound on $\dom(T) \subseteq L_2$. Direct calculation of
\begin{align*}
    \mathcal{C}(T) &= (1+\mathcal{C}(L)^{-1})^{-1}, \\
    \mathcal{C}(L) &= (\SRG'(G)^{-1}+\SRG(\phi_2))^{-1} \SRG(\phi_1) \SRG'(K),
\end{align*}
obtained by substituting $\phi_1,\phi_2$ by their SRG and $G,K$ by their extended SRG in $T \in L(\mathcal{G})$, yields the set $\mathcal{C}(T) \subseteq \C$ in Fig.~\ref{fig:lure_sat_srg_T}, where also the intermediate bound for $\SRG(L)$ is shown. By Theorem~\ref{thm:general_interconnection}, we find that 
\begin{equation*}
    \Gamma(T) \leq \rmin(\mathcal{C}(T)) = 4.81, \quad \gamma(T) \leq \rmin(\mathcal{C}(T)) = 4.81.
\end{equation*}

In order to conclude that $\dom(T) = \Lte$, we have to show that $T_\tau$ is continuous in $\tau$. To that end, consider a state-space representation of $K$ and $G$ with $\mathrm{n}_K$ and $\mathrm{n}_G$ states, given by the matrices $(A_K,B_K,C_K,0)$ and $(A_G,B_G,C_G,0)$, respectively. The operator $T_\tau$ can be written in ODE form $\dot x = f(x,r,\tau), y=h(x,r)$ where $x = (x_K^\top, x_G^\top)^\top$ and
\begin{align*}
    f(x,r,\tau) &= ( (A_K x_1)^\top, (A_G x_2)^\top )^\top, \\
    x_1 &= x_K + B_K(r-C_G x_G),\\
    x_2 &= x_G+B_G(\Phi_{1,\tau}(C_K x_K) - \Phi_{2,\tau}(C_G x_G)), \\
    h(x,r) &= h(x) = C_G x_G
\end{align*}
where $\Phi_{1,\tau} = \tau \phi_1$ and $\Phi_{2, \tau} = 1 + \tau(\phi_2-1)$. For any fixed $r$, we can derive that 
\begin{equation}\label{eq:lure_sat_lipschitz}
    |f(x)-f(y)| \leq L |x-y|, \quad \forall x,y \in \R^{\mathrm{n}_K+\mathrm{n}_G}
\end{equation}
where $L = \bar{\sigma}(\mathrm{diag}(A_K, A_G))(1+\bar{\sigma}(B_K C_G))+ \bar{\sigma}(B_G) (L_1 \bar{\sigma}(C_K)+L_2 \bar{\sigma}(C_G))<\infty$, where $\bar{\sigma}(\cdot)$ denotes the largest singular value and $L_1=1,L_2=2$ are the Lipschitz constant for the nonlinearities $\phi_1,\phi_2$, respectively. By \cite[Thm. 13.II]{thompsonOrdinaryDifferentialEquations1998}, the solution $P_t T_\tau r$ is a continuous function of $\tau \in [0,1]$ and $t \in \R_{\geq 0}$, which implies by the dominated convergence theorem~\cite{schillingMeasuresIntegralsMartingales2017} that $r \to \norm{P_t T_\tau r}_2$ is a continuous function of $\tau \in [0,1]$, for all $t \in \R_{\geq 0}$. The latter statement shows that $T_\tau$ is continuous in $\tau$, hence by Theorem~\ref{thm:general_interconnection}, $\dom(T) = \Lte$.

\section{Conclusion and Outlook}\label{sec:conclusion}

Previous results did not allow for the SRG calculus to assess stability of feedback interconnections of unstable components, preventing the use of this powerful theory in many practical applications. In this paper, our proposed results fix the framework, in fact enabling the treatment of not only simple interconnections of nonlinear and linear static and dynamic blocks, but for general interconnection of them, including unstable elements. The key ingredient is an extended SRG for LTI operators, where the encirclement information from the Nyquist criterion is incorporated into the SRG. For this extended SRG, we also prove the same interconnection rules as for the original SRG. The extended SRG resolves a pitfall of the state-of-the-art SRG methods that we identified in this paper.

We first focused our attention to three canonical interconnections: the Lur'e, controlled Lur'e and Lur'e controller systems, and derive a stability and well-posedness theorem. The result for the Lur'e system yields a generalization of the celebrated circle criterion. To treat general interconnections, we developed a formal language approach to represent interconnections of operators. We then combine this general language with SRG calculus to derive a stability and (incremental) $L_2$-gain theorem for arbitrary interconnections. The resulting framework allows for a modular analysis of arbitrary networks of SISO LTI and nonlinear operators. Finally, we showcase the use of our methods in three examples, which include nonlinearities and unstable LTI plants.

\appendices 

\section{Additional Preliminaries}

\subsection{Complex Geometry}\label{app:complex_geometry}

We denote the line segment between $z_1, z_2 \in \C$ as $[z_1, z_2] := \{ \alpha z_1 + (1-\alpha) z_2 \mid \alpha \in [0, 1] \}$. Let the right-hand arc, denoted by $\operatorname{Arc}^+(z, \bar{z})$, be the circle segment of the circle that is centered at the origin and intersects $z,\bar{z}$, with real part greater than $\mathrm{Re} (z)$. The left-hand arc, denoted $\operatorname{Arc}^-(z, \bar{z})$, is similarly defined, but with real part smaller than $\mathrm{Re} (z)$. More precisely
\begin{align*}
    \operatorname{Arc}^+(z, \bar{z}) =& \{ r e^{j(1-2 \alpha)\phi} \\ & \, \mid z = r e^{j \phi}, \phi \in (-\pi, \pi ], \alpha \in [0, 1] \}, \\
    \operatorname{Arc}^-(z, \bar{z}) =& -\operatorname{Arc}^+(-z, -\bar{z}).
\end{align*}

Let $z_1, z_2 \in \C_{\mathrm{Im} \geq 0}$ where we assume w.l.o.g. that $\mathrm{Re} (z_1) \leq \mathrm{Re} (z_2)$. Denote $\operatorname{Circ}(z_1, z_2)$ the unique circle through $z_1, z_2$ centered on $\R$. Let 
\begin{multline*}
    \operatorname{Arc}_{\operatorname{min}}(z_1, z_2) = \\ \{ z \in \operatorname{Circ}(z_1, z_2) \mid \mathrm{Re} (z_1) \leq \mathrm{Re} (z) \leq \mathrm{Re} (z_2), \, \mathrm{Im} (z) \geq 0  \}.
\end{multline*} 

\begin{definition}[Hyperbolic convexity]
    A set $S \subseteq \C_{\mathrm{Im} \geq 0}$ is h-convex if 
    \begin{equation*}
        z_1, z_2 \in S \iff \operatorname{Arc}_{\operatorname{min}}(z_1, z_2) \subseteq S.
    \end{equation*}
    Given a set of points $P \subseteq \C_{\mathrm{Im} \geq 0}$, the h-convex hull of $P$ is the smallest set $\tilde{P} \supseteq P$ that is h-convex. We denote $\Tilde{P}= \hco (P)$ as in \cite{patesScaledRelativeGraph2021}.
\end{definition}

For a set $P \in \C$ that is equal to its complex conjugate $\bar{P} = P$, i.e., is symmetric w.r.t. the real axis, h-convexity can be studied for $P_+ := P \cap \C_{\mathrm{Im}  \geq 0}$. In that case, the h-convex hull is defined $\hco(P) = \hco(P_+) \cup \overline{\hco(P_+)}$.

\subsection{The Nyquist Criterion}\label{app:nyquist}

Consider the LTI feedback system in Fig.~\ref{fig:linear_feedback}, where $L(s)$ is a transfer function. In many control engineering situations, one is interested in the stability and performance aspects of this setup. In this work, we study the stability property.

\begin{theorem}\label{thm:nyquist}
    Let $n_\mathrm{z}$ denote the number of unstable closed-loop poles, and $n_\mathrm{p}$ the number of unstable poles of $L(s)$\footnote{Poles of $L(s)$ on the imaginary axis do not contribute to $n_\mathrm{p}$. The D-contour is chosen such that poles on the imaginary axis are kept to the left of the D-contour.}. Additionally, denote $n_\mathrm{n}$ the amount of times that $L(j \omega)$ encircles the point $-1$ in clockwise fashion as $\omega $ traverses the $D$-contour, going from $-jR$ to $jR$ and then along $Re^{j \phi}$ as $\phi$ goes from $\pi/2 \to -\pi/2$, for $R \to \infty$. The closed-loop system in Fig.~\ref{fig:linear_feedback} satisfies $n_\mathrm{z}=n_\mathrm{n}+n_\mathrm{p}$.
\end{theorem}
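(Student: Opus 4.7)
The plan is to reduce the statement to Cauchy's argument principle applied to the meromorphic function $f(s) = 1 + L(s)$. The first key observation is that the closed-loop transfer $T(s) = L(s)/(1+L(s))$ has poles exactly at the zeros of $1+L(s)$, so counting unstable closed-loop poles amounts to counting the zeros of $f$ strictly inside the $D$-contour. Moreover, the poles of $f$ coincide with those of $L$, so the poles of $f$ enclosed by the $D$-contour number exactly $n_{\mathrm{p}}$.

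With this in place, I would apply the argument principle directly on the $D$-contour $\Gamma_D$. Since $\Gamma_D$, traversed from $-jR$ up the imaginary axis to $jR$ and then along $Re^{j\phi}$ for $\phi: \pi/2 \to -\pi/2$, encloses the open right half plane in the clockwise sense, the argument principle gives
\begin{equation*}
    \frac{1}{2\pi j} \oint_{\Gamma_D} \frac{f'(s)}{f(s)} \, ds \;=\; -(Z_f - P_f),
\end{equation*}
where $Z_f$ and $P_f$ are the numbers of zeros and poles of $f$ enclosed by $\Gamma_D$. Equivalently, the number of clockwise encirclements of the origin by $f(\Gamma_D) = 1 + L(\Gamma_D)$ equals $Z_f - P_f = n_{\mathrm{z}} - n_{\mathrm{p}}$. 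Because the translation $w \mapsto w - 1$ maps the origin to $-1$, encirclements of $0$ by $1 + L(\Gamma_D)$ coincide exactly with encirclements of $-1$ by $L(\Gamma_D)$. In the limit $R \to \infty$, the contribution from the imaginary axis piece is the standard Nyquist locus $\{L(j\omega) : \omega \in \R\}$, yielding $n_{\mathrm{n}} = n_{\mathrm{z}} - n_{\mathrm{p}}$ and hence the claim.

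The main technical obstacles are the behavior at infinity and any poles of $L$ on the imaginary axis. For properness of $L$, one needs $L(Re^{j\phi}) \to L(\infty) \in \C$ as $R \to \infty$, so the image of the large semicircle collapses to a single point (or a small arc around $L(\infty)$) and contributes no encirclement of $-1$ in the limit, provided $L(\infty) \neq -1$; this is why the statement is phrased in terms of the Nyquist locus on $j\R$ only. For imaginary-axis poles of $L$, the $D$-contour is indented by small semicircles of radius $\varepsilon$ into the right half plane, as noted in the footnote; these indentations exclude such poles from the enclosed region (hence they do not count toward $n_{\mathrm{p}}$) while contributing large arcs of radius $\sim 1/\varepsilon$ to the Nyquist plot that close at infinity without producing spurious encirclements of $-1$. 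The final step, taking $R \to \infty$ and $\varepsilon \to 0$, requires justifying that the winding number stabilizes in this double limit, which follows from the continuity of the argument along the deformation since $-1$ is never crossed by $L(\Gamma_D)$ under the standing assumption that the closed loop has no poles on $j\R$ (i.e., $1 + L(j\omega) \neq 0$ for all $\omega \in \R$).
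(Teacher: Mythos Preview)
The paper does not supply its own proof of this theorem: it is stated in the appendix as a classical preliminary (the Nyquist stability criterion) and is used elsewhere without derivation. Your argument is the standard textbook route via Cauchy's argument principle applied to $f(s)=1+L(s)$ on the $D$-contour, and it is correct in outline and in the handling of the two genuine technicalities (the large semicircle under properness of $L$, and the $\varepsilon$-indentations around imaginary-axis poles so that they are excluded from the enclosed region, consistent with the footnote). There is nothing to compare against in the paper itself; your proof simply fills in what the paper takes as known.
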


\noindent Note that for stability, $n_\mathrm{z}=0$ is required.

\subsection{The Circle Criterion}\label{app:circle_criterion}

In order to assess stability of a Lur'e system, one can use the circle criterion, see~\cite[Ch. 5, Thm. 4]{desoerFeedbackSystemsInputoutput1975}.

\begin{theorem}\label{thm:circle}
    Let $G(s)$ be a strictly proper transfer function and let $\phi \in [k_1, k_2]$, meaning $k_1 \leq \phi(x)/x \leq k_2, \quad \forall x\in \R$. Let $n_\mathrm{p}$ be the number of poles $p$ of $G(s)$ such that $\mathrm{Re} (p)>0$. Then, the system in Fig.~\ref{fig:lure} is stable if it satisfies one of the conditions:
    \begin{enumerate}
        \item Let $0<k_1<k_2$. The Nyquist diagram of $G(s)$ must not intersect $D_{[-1/k_1, -1/k_2]}$ and has to encircle it $n_p$ times in counterclockwise direction.
        \item if $0=k_1<k_2$, then $n_\mathrm{p}=0$ must hold and the Nyquist diagram must satisfy 
        \begin{equation*}
            \mathrm{Re} \, G(j\omega) > -1/k_2, \quad \forall \omega \in \R.
        \end{equation*}
        \item if $k_1 <0 <k_2$, then $n_\mathrm{p}=0$ must hold and the Nyquist diagram has to be contained in the interior of $D_{[-1/k_1, -1/k_2]}$.
    \end{enumerate}
\end{theorem}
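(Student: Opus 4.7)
The plan is to obtain Theorem~\ref{thm:circle} as a direct corollary of the generalized circle criterion (Theorem~\ref{thm:generalized_circle_criterion}) by specializing to a static sector-bounded nonlinearity. First I would invoke Proposition~\ref{prop:static_nl_srg_non_incremental} to conclude $\SG_0(\phi) \subseteq D_{[k_1,k_2]}$; this is a disk centered on $\R$, so it automatically satisfies the chord property, and an inflation lift $\Phi_\tau := (1-\tau)\kappa + \tau\phi$ for any $\kappa \in [k_1,k_2]$ makes $\phi$ inflatable in the sense of Definition~\ref{def:general_inflation_condition}. Hence the only thing left to verify is that the distance condition
\[
    \dist(\SRG'(G), -\SG_0(\phi)^{-1}) > 0
\]
required by Theorem~\ref{thm:generalized_circle_criterion} is equivalent to the three geometric cases of Theorem~\ref{thm:circle}.

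Next I would compute $-\SG_0(\phi)^{-1}$ via M\"obius inversion on $\C_\infty$. For $0<k_1<k_2$ the disk $D_{[k_1,k_2]}$ maps to $D_{[1/k_2,1/k_1]}$, so $-\SG_0(\phi)^{-1} \subseteq D_{[-1/k_1,-1/k_2]}$, exactly the critical disk of case~1. For $0=k_1<k_2$ the inversion degenerates: the disk passes through $0$, its inverse is the closed right half-plane $\{\mathrm{Re}(z) \geq 1/k_2\}$, and negation gives $\{\mathrm{Re}(z) \leq -1/k_2\}$, matching case~2. For $k_1<0<k_2$ the disk contains $0$ in its interior, so its M\"obius image is the closure of the exterior of the disk with boundary through $-1/k_1,-1/k_2$; after negation the distance condition forces $\SRG'(G)$ to lie strictly inside $D_{[-1/k_1,-1/k_2]}$, which is exactly case~3.

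Finally I would translate each distance condition into the classical Nyquist-diagram statement by unpacking $\SRG'(G) = \mathcal{G}_G \cup \mathcal{N}_G$ from Definition~\ref{def:lti_srg_extended}. Non-intersection of $\mathcal{G}_G$ with the critical set reduces to non-intersection of $\operatorname{Nyquist}(G)$ with it, using the observation at the end of Section~\ref{sec:compare_circle_criterion} that the h-convex hull of a curve staying outside a disk centered on $\R$ still stays outside it. Non-intersection with $\mathcal{N}_G$ then encodes the encirclement count: $N_G(z)+n_p = 0$ for every $z$ in the critical set is equivalent (by Theorem~\ref{thm:nyquist} and the clockwise sign convention in Definition~\ref{def:lti_srg_extended}) to $\operatorname{Nyquist}(G)$ encircling that set exactly $n_p$ times counterclockwise, which is the classical hypothesis. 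Theorem~\ref{thm:generalized_circle_criterion} then yields $r \in \Lte \Rightarrow y \in \Lte$. The main obstacle is case~3, where $0 \in \SG_0(\phi)$ forces the M\"obius inverse to pass through $\infty$ and wrap around; one must work on $\C_\infty$ throughout and verify that the chord-property hypothesis is carried by $\SG_0(\phi)$ (harmless, since it is a disk) rather than by $\SG_0(\phi)^{-1}$. A smaller subtlety is case~2, where the inverse of a disk touching the origin must be interpreted as a half-plane in the one-point compactification rather than as a bounded disk, so the ``distance'' is really the Euclidean distance to that half-plane.
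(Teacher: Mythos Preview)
The paper does not give its own proof of Theorem~\ref{thm:circle}; it is stated in Appendix~\ref{app:circle_criterion} as a classical background result and simply cited from Desoer--Vidyasagar. Your proposal therefore cannot be compared against a ``paper proof'' in the usual sense. What the paper does do, in Section~\ref{sec:compare_circle_criterion}, is assert that Theorem~\ref{thm:generalized_circle_criterion} is \emph{equivalent} to Theorem~\ref{thm:circle} in the sector-bounded case and then argue informally that the h-convex hull step does not add conservatism. Your proposal is essentially a formalization of that remark: you run the M\"obius inversion of $D_{[k_1,k_2]}$ through the three sign cases and unpack $\SRG'(G)=\mathcal{G}_G\cup\mathcal{N}_G$ to recover the classical Nyquist-encirclement wording. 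The case analysis you give is correct, and the argument is non-circular since nothing in the proof chain leading to Theorem~\ref{thm:generalized_circle_criterion} invokes Theorem~\ref{thm:circle}.

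One point you should tighten. Definition~\ref{def:general_inflation_condition} is stated for $\SRG(\phi)$, whereas the classical hypothesis $\phi\in[k_1,k_2]$ only controls $\SG_0(\phi)$ via Proposition~\ref{prop:static_nl_srg_non_incremental}; it says nothing about $\SRG(\phi)$. So you must explicitly invoke the non-incremental restatement at the end of Theorem~\ref{thm:lti_srg_nyquist_extension}, which in turn requires you to \emph{assume} well-posedness and causality along the homotopy $\Phi_\tau$ rather than conclude them. For a strictly proper $G$ and a static sector nonlinearity this is standard, but it should be said. Relatedly, in cases~2 and~3 the critical set is unbounded (a half-plane, resp.\ an exterior-of-disk on $\C_\infty$), and the way $n_\mathrm{p}=0$ is forced is that for all sufficiently large $|z|$ the winding number $N_G(z)$ vanishes (since $G$ is strictly proper), so $z\notin\mathcal{N}_G$ requires $n_\mathrm{p}\le 0$; you hint at this but should make it explicit.
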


We denote by $\partial \phi \in [k_1, k_2]$, if the sector condition is satisfied incrementally.

The circle criterion is a famous generalization of the Nyquist criterion, given in Theorem~\ref{thm:nyquist}, for nonlinear systems of Lur'e type. The advantage of the circle criterion over small gain theory~\cite{zamesInputoutputStabilityTimevarying1966}, is that the former predict stability for more Lur'e type systems than the latter. The encirclement information of the Nyquist diagram is incorporated in the circle criterion, see case 1) in Theorem~\ref{thm:circle}, which is ignored in small gain theory, which is one way to understand that the circle criterion is more effective for Lur'e systems than the small-gain theorem.

\section{Proofs}\label{app:proofs}

\begin{proof}[Proof of Lemma~\ref{lemma:L2_norm_to_L2e}]
    Suppose $\Gamma(R) < \infty$. For any $u_1,u_2 \in \Lte$ and $T>0$, we have $\norm{P_T Ru_1-P_T Ru_2}_2 = \norm{P_T R P_T u_1-P_T R P_T u_2}_2 \leq \norm{R P_T u_1-R P_T u_2}_2 \leq \Gamma(R) \norm{P_T u_1-P_T u_2}_2$. This also shows that 
    \begin{equation*}
        \sup_{T \in \R_{\geq 0}} \frac{\norm{P_T Ru_1-P_T Ru_2}_2}{\norm{P_T u_1-P_T u_2}_2} \leq \Gamma(R),
    \end{equation*}
    By definition, $\Gamma(R) = \frac{\norm{Ru_1 - R u_2}}{\norm{u_1-u_2}}$ for some $u_1,u_2 \in L_2$. For these $u_1,u_2$ we obtain the equality $\sup_{T \in \R_{\geq 0}} \frac{\norm{P_T Ru_1-P_T Ru_2}_2}{\norm{P_T u_1-P_T u_2}_2} = \Gamma(R)$, which proves the claim. The result for $\gamma(R)$ follows by setting $u_2=Ru_2=0$. 
\end{proof}

\begin{proof}[Proof of Theorem~\ref{thm:chaffey_thm2}]
    Eq.~\eqref{eq:chaffey_closed_loop_representations} follows from defining $y=Tr$ and the computations $e=r-H_2 H_1 e \implies e=(1+H_2 H_1)^{-1} r$ and $y=H_1 e$ to obtain $T = H_1 (1+H_2 H_1)^{-1}$ and solving $H_1^{-1}y=r-H_2 y$ for $y$ to obtain $T=(H_1^{-1}+H_2)^{-1}$.
    Everything except the causality statement follows from~\cite[Corollary 1]{chaffeyHomotopyTheoremIncremental2024}. The causality statement follows from applying~\cite[Theorem 2.11]{willemsAnalysisFeedbackSystems1971} for the subalgebra $\tilde{\mathcal{B}}^+$ of causal operators with finite incremental gain to each application of the Banach fixed point theorem in the proof of~\cite[Theorem 2]{chaffeyHomotopyTheoremIncremental2024} (which is used to prove \cite[Corollary 1]{chaffeyHomotopyTheoremIncremental2024}). 
\end{proof}

\begin{proof}[Proof of Lemma~\ref{lemma:srg_radius_gamma_Gamma}]
    Since $R$ is linear, one has $\rmin(\SRG(R))=\Gamma(R)=\gamma(R)$ by definition of the SRG. For any circle with radius $r$ centered on the real line at $x$, the arc in $\C_{\mathrm{Im} \geq}$ can be parameterized as $z(\phi) = x+re^{j \phi}$, $\phi \in [0, \pi]$. For $x \geq 0$ ($x \leq 0$), the magnitude $|z(\phi)|$ is monotonically decreasing (increasing) on the interval $[0,\phi]$. Therefore, for any $z_1,z_2 \in \C_{\mathrm{Im} \geq 0}$, one has $\rmin(\operatorname{Arc}_\mathrm{min}(z_1,z_2)) = \max \{|z_1|, |z_2|\}$, which proves that adding $\operatorname{Arc}_\mathrm{min}$ to $\operatorname{Nyquist}(R)$ does not increase its radius, proving $\rmin(\SRG(R)) = \rmin(\operatorname{Nyquist}(R))$.
\end{proof}

\begin{proof}[Proof of Lemma~\ref{lemma:finite_srg'_radius}]
    Suppose that $\SRG'(R) \subseteq D_r(0)$, where $r>0$ is the radius. Then, the feedback system $y = T_k u = (R(s))/(1+k R(s)) u$ is stable for $|k| < 1/r$ by the Nyquist criterion, in the sense that all poles $p_i$ obey $\mathrm{Re} (p_i) < 0$. Therefore, when $k \to 0$, the system becomes $T_0(s)=R(s)$, which is stable and $\rmin(\SRG(R)) = \Gamma(R) = \gamma(R) \leq r$.

    Now it remains to show that $\SRG'(R)$ and $\SRG(R)$ have the same radius. Since $\SRG(R) \subseteq \SRG'(R)$, one has $r'=\rmin(\SRG'(R)) \geq \rmin(\SRG(R))=r$. Now suppose that $r'> r$, then there must be $z_1 \in \SRG'(R) \setminus \SRG(R)$ and $z_2 \notin \SRG'(R)$ with $r<|z_1|<r'<|z_2|$ such that the winding numbers obey $N_R(z_1) \neq N_R(z_2)$. However, this is impossible since $\rmin(\operatorname{Nyquist}(R)) = r$, which proves by contradiction that $r'=r$. 
\end{proof}
    
    

\begin{proof}[Proof of Theorem~\ref{thm:srg_calculus_for_extended_lti_srg}]
    We treat each case separately. Note that we can decompose $\SRG'(G) = \mathcal{G}_G \cup \mathcal{N}_G$.
    \begin{enumerate}
        \item The statement follows from applying Proposition~\ref{prop:srg_calculus}.\ref{eq:srg_calculus_alpha} on $\mathcal{G}_G$ and using the fact that $\operatorname{Nyquist}(\alpha G) = \operatorname{Nyquist}(G \alpha ) = \alpha \operatorname{Nyquist}(G)$ to prove $\mathcal{N}_{\alpha G} = \mathcal{N}_{G \alpha} = \alpha \mathcal{N}_{G}$. 
        \item The statement follows from applying Proposition~\ref{prop:srg_calculus}.\ref{eq:srg_calculus_plus_one} on $\mathcal{G}_G$ and using the fact that $\operatorname{Nyquist}(1+ G) = 1 + \operatorname{Nyquist}(G)$ to prove $\mathcal{N}_{1+G} = 1 + \mathcal{N}_{G}$. 
        \item The statement follows from applying Proposition~\ref{prop:srg_calculus}.\ref{eq:srg_calculus_inverse} on $\mathcal{G}_G$ and proving $\mathcal{N}_{G^{-1}} = \mathcal{N}_{G}^{-1}$. To prove the latter, consider $k_u \in \mathcal{N}_G \cap \R$, and $k_s \notin \mathcal{N}_G \cap \R$. Then the feedback system in Fig.~\ref{fig:linear_feedback} with $L_u = (-1/k_u) G = b_u(s)/a_u(s)$ is unstable and $L_s = (-1/k_s) G = b_s(s)/a_s(s)$ is stable, where $a_u,b_u,a_s,b_s$ are real polynomials in $s$. We can write $T_{u/s} = b_{u/s}(s) / (a_{u/s}(s) + b_{u/s}(s))$ for the complementary sensitivity operator, where stability is determined by the zero locations of $a_{u/s}(s) + b_{u/s}(s)$. If we consider $L^{-1}_{u/s}$ as open-loop operators in Fig.~\ref{fig:linear_feedback}, then we obtain the complementary sensitivity operator $T'_{u/s} = a_{u/s}(s) / (a_{u/s}(s) + b_{u/s}(s))$. Clearly, $T_{u/s}$ and $T'_{u/s}$ have the same poles. Therefore, we have shown that 
        \begin{equation*}
            k \in \mathcal{N}_G \cap \R \iff 1/k \in \mathcal{N}_{G^{-1}} \cap \R.
        \end{equation*}
        Since we assume that $\mathcal{N}_G$ is simply connected to $\R$, we can conclude that $k \in \mathcal{N}_G \iff 1/k \in \mathcal{N}_{G^{-1}}$, i.e. $\mathcal{N}_{G^{-1}} = \mathcal{N}_G^{-1}$. Note that by $G^{-1}$ we mean the inverse of $G$ and by $1/k$ and $\mathcal{N}_G^{-1}$ we mean the M\"obius inverse.
        \item The statement follows from applying Proposition~\ref{prop:srg_calculus}.\ref{eq:srg_calculus_parallel} to derive $\mathcal{G}_{G_1 + G_2} \subseteq \mathcal{G}_{G_1} + \mathcal{G}_{G_2}$ and proving $\mathcal{N}_{G_1 + G_2} \subseteq \mathcal{N}_{G_1} + \mathcal{N}_{G_2}$. To prove the latter, write $G_1(s) = b_1(s)/a_1(s)$ and $G_2(s) = b_2(s)/a_2(s)$. Let $k \in \mathcal{N}_{G_1+G_2}$, then there exists a $s_u \in \C$ such that $T(s) = \frac{(-1/k)(G_1(s)+G_2(s))}{1+(-1/k)(G_1(s)+G_2(s))} = \frac{-a_1(s) b_2(s) - a_2(s) b_1(s)}{a_1(s) a_2(s) k - a_1(s) b_2(s) - a_2(s) b_1(s)}$ has a pole at $s=s_u$ with $\mathrm{Re} (s_u) >0$. Denote $a_i(s_u)=a_i, b_i(s_u)=b_i$ for $i\in \{ 1, 2 \}$, such that the pole at $s=s_u$ is equivalent to         
        \begin{equation}\label{eq:extended_srg_proof_parallel}
            a_1 a_2 k - a_1 b_2 - a_2 b_1 = 0.
        \end{equation}
        We will now show that one can pick $k_i \in \mathcal{N}_{G_i}$ such that $k=k_1+k_2$ and $T_i(s) = \frac{(-1/k_i) G_i(s)}{1+(-1/k_i) G_i(s)} = \frac{-b_i(s)}{a_i(s) k - b_i(s)}$ has a pole at $s=s_u$. This would imply $\mathcal{N}_{G_1 + G_2} \subseteq \mathcal{N}_{G_1} + \mathcal{N}_{G_2}$. Note that we can assume w.l.o.g. that $a_i \neq 0$, since $a_1=0$ or $a_2=0$ would amount to a pole-zero cancellation. Take $k_1 = b_1/a_1$, then $T_i(s)$ has a pole at $s=s_u$ because $a_1 k_1-b_1=0$. Since $k=k_1+k_2$, we must take $k_2 = k-k_1$, which we can plug into the instability condition $a_2 k_2 - b_2$ to obtain $a_2 k - a_2 b_1/a_1 - b_2 \implies a_1 a_2 k - a_1 b_2 - a_2 b_1=0$, where the equality to zero is obtained since it is the instability condition \eqref{eq:extended_srg_proof_parallel}.  
        \item The statement follows from applying Proposition~\ref{prop:srg_calculus}.\ref{eq:srg_calculus_series} to derive $\mathcal{G}_{G_1 G_2} \subseteq \mathcal{G}_{G_1} \mathcal{G}_{G_2}$ and proving $\mathcal{N}_{G_1 G_2} \subseteq \mathcal{N}_{G_1} \mathcal{N}_{G_2}$. To prove the latter, write $G_1(s) = b_1(s)/a_1(s)$ and $G_2(s) = b_2(s)/a_2(s)$. Let $k \in \mathcal{N}_{G_1 G_2}$, then there exists a $s_u \in \C$ such that $T(s) = \frac{(-1/k)G_1(s) G_2(s)}{1+(-1/k)G_1(s) G_2(s)} = \frac{-b_1(s) b_2(s)}{a_1(s) a_2(s) k - b_1(s) b_2(s)}$ has a pole at $s=s_u$ with $\mathrm{Re} (s_u) >0$. Denote $a_i(s_u)=a_i, b_i(s_u)=b_i$ for $i\in \{ 1, 2 \}$, such that the pole at $s=s_u$ is equivalent to        
        \begin{equation}\label{eq:extended_srg_proof_series}
            a_1 a_2 k - b_1 b_2 = 0.
        \end{equation}
        We will now show that one can pick $k_i \in \mathcal{N}_{G_i}$ such that $k=k_1 k_2$ and $T_i(s) = \frac{(-1/k_i) G_i(s)}{1+(-1/k_i) G_i(s)} = \frac{-b_i(s)}{a_i(s) k - b_i(s)}$ has a pole at $s=s_u$. This would imply $\mathcal{N}_{G_1 G_2} \subseteq \mathcal{N}_{G_1} \mathcal{N}_{G_2}$. Note that we can assume w.l.o.g. that $a_i \neq 0$, since $a_1=0$ or $a_2=0$ would amount to a pole-zero cancellation. Take $k_1 = b_1/a_1$, then $T_i(s)$ has a pole at $s=s_u$ because $a_1 k_1-b_1=0$. Since $k=k_1 k_2$, we must take $k_2 = k/ k_1 = k a_1/b_1$, which we can plug into the instability condition $a_2 k_2 - b_2$ to obtain $a_2 k a_1 / b_1- b_2 \implies a_1 a_2 k - b_1 b_2 = 0$, where the equality to zero is obtained since it is the instability condition~\eqref{eq:extended_srg_proof_series}.  
    \end{enumerate}
    This completes the proof.
\end{proof}

\begin{proof}[Proof of Theorem~\ref{thm:general_interconnection}]
    Consider $R:L_2 \to L_2$. By the condition that $R_\tau$ is continuous in $\tau$, we know that $\rmin(\mathcal{C}(R_\tau))$ is a continuous function of $\tau$. By the inflation condition, we have $\mathcal{C}(R_{\tau_1}) \subseteq \mathcal{C}(R_{\tau_2})$ for all $0 \leq \tau_1 \leq \tau_2\leq 1$. This proves the inclusions
    \begin{equation}\label{eq:homotopy_srg_inclusions}
        \SRG'(R_\mathrm{LTI}) \subseteq \mathcal{C}(R_\mathrm{LTI}) \subseteq \mathcal{C}(R_{\tau}) \subseteq \mathcal{C}(R).
    \end{equation}
    From~\eqref{eq:homotopy_srg_inclusions} and the condition that $\mathcal{C}(R)$ has finite radius $r$, we know that $R_\tau$ is stable for all $\tau \in [0, 1]$. We can conclude that $\dom(R_\tau) = L_2$ for all $\tau \in [0,1]$. Finally, since $R$ is stable and $\SRG(R) \subseteq \mathcal{C}(R)$, we can conclude $\Gamma(R) \leq r$. Since $R$ is causal by assumption, we know by Lemma~\ref{lemma:L2_norm_to_L2e} that $R: \Lte \to \Lte$ with $\Gamma(R) \leq r$.
\end{proof}

\begin{proof}[Proof of Theorem~\ref{thm:lti_srg_nyquist_extension}]

    We will treat the Lur'e case separately from the controlled Lur'e and the Lur'e controlled LTI system. 
    
    \subsection*{The Lur'e system}

    Let $T=(G^{-1} + \phi)^{-1}$ be the closed loop operator and $T_\mathrm{LTI} = (G^{-1} + \kappa)^{-1}$ be its linearization. By Lemma~\ref{lemma:finite_srg'_radius}, the linearized operator is stable if $\SRG'(T_\mathrm{LTI})$ has finite radius, which is equivalent to $\dist(\SRG'(G)^{-1}, -\kappa)>0$ by Theorem~\ref{thm:srg_calculus_for_extended_lti_srg}. Now replace $\kappa \to \SRG(\phi)$. Since $\dist(\SRG'(G)^{-1}, -\SRG(\phi))=r_m>0$ by assumption, $\dist(\SRG'(G)^{-1}, -\kappa)\geq r_m>0$ holds automatically, which implies that $T_\mathrm{LTI}$ is stable by the Nyquist criterion. We satisfy all conditions for Theorem~\ref{thm:general_interconnection}, which means that $\SRG(T)$ has finite gain $\Gamma(T) \leq 1/r_m$ on $\dom(T)$. 

    Note that $T$ can be written as the feedback system in Fig.~\ref{fig:chaffey_thm2} with $H_1 = \frac{G}{1+\kappa G}$ and $H_2 = \phi-\kappa$, which both have bounded SRGs. Since $\dist(\SRG(H_1),-\tau \SRG(H_2))>0$ by the inflation condition, we can conclude $\dom(T) = L_2$ by Theorem~\ref{thm:chaffey_thm2}. If $G$ and $\phi$ are causal, then $T:L_2 \to L_2$ is causal by Theorem~\ref{thm:chaffey_thm2} and $\dom(T) = \Lte$ by Lemma~\ref{lemma:L2_norm_to_L2e}.

    Note that we used that $\SRG(\phi)$ satisfies the chord property to conclude $\SRG(G^{-1} + \phi) \subseteq \SRG'(G)^{-1} + \SRG(\phi)$.

    \subsection*{Controlled Lur'e plant and Lur'e controlled LTI plant}

    Let $T = (1+L^{-1})^{-1}$ with $L = [(GK)^{-1} + K^{-1} \phi]^{-1}$ be the closed-loop operator and $T_\mathrm{LTI} = (1 + L_0^{-1})^{-1}$ with $L_0 = GK/(1+ \kappa K)$ and $\kappa \in \SRG(\phi)$ be its linearization. Note that $L^{-1} = L_0^{-1} + K^{-1}(\phi - \kappa)$ and since $0 \in \SRG(K^{-1}(\phi - \kappa))$, we know $\SRG'(L_0)^{-1} \subseteq \SRG'(L_0)^{-1} + \SRG(K^{-1}(\phi - \kappa))$. Therefore, if $\dist(1+\SRG'(L_0)^{-1}, -\SRG(K^{-1} (\phi - \kappa))) = r_m>0$, we know that $\dist(1+\SRG'(L_0)^{-1},0) \geq r_m$ as well, hence $\SRG(T)$ has finite radius $1/r_m$ and $T_\mathrm{LTI}$ is stable. We satisfy all conditions for Theorem~\ref{thm:general_interconnection} hence $T$ has $\Gamma(T) \leq 1/r_m$ on $\dom(T)$. 

    Note that $T$ can be written as the feedback system in Fig.~\ref{fig:chaffey_thm2} with $H_1 = T_\mathrm{LTI}$ and $H_2 = K^{-1}(\phi-\kappa)$, which both have bounded SRGs. Since $\dist(\SRG(H_1),-\tau \SRG(H_2))>0$ by the inflation condition, we can conclude $\dom(T) = L_2$ by Theorem~\ref{thm:chaffey_thm2}. If $L_0$ and $K^{-1} (\phi-\kappa)$ are causal, then $T:L_2 \to L_2$ is causal by Theorem~\ref{thm:chaffey_thm2} and $\dom(T) = \Lte$ by Lemma~\ref{lemma:L2_norm_to_L2e}.

    The proof for the Lur'e controlled LTI plant is entirely analogous upon replacing $K^{-1} \phi \to \phi G^{-1}$ and $G \to K$ in the calculations above.

    For each of the three interconnections, the non-incremental result is obtained by using Theorem~\ref{thm:chaffey_thm2_non_incremental} instead of Theorem~\ref{thm:chaffey_thm2}.
\end{proof}

\section{SRG of a Cubic Nonlinearity}\label{app:srg_cubic_nl}

Consider the function $\phi(x)=x^3$, which can be viewed as a static nonlinearity that maps $x(t) \mapsto x^3(t)$, for any $x \in \Lte$. We will show the following facts:

\begin{theorem}\label{thm:cubic_nl}
    The operator $\phi : \Lte \to \Lte$, given by $x(t) \mapsto x^3(t)$, has the following properties:
    \begin{enumerate}
        \item Let $\mathcal{U}_A = \{ x \in \Lte \mid \norm{x}_\infty \leq A \}$, then $\SG_{\mathcal{U}_A, 0}(\phi) \subseteq D_{A^2}(0)$.
        \item Let $\mathcal{U}_A = \{ x \in \Lte \mid \norm{x}_\infty \leq A \}$, then $\SRG_{\mathcal{U}_A}(\phi) \subseteq D_{3A^2}(0)$.
        \item $\SRG(\phi) = \C_{\mathrm{Re} >0}$.
    \end{enumerate}
\end{theorem}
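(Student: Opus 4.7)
The plan is to parametrise the SRG via the identity $u_1^3 - u_2^3 = \Delta \cdot S$, where $\Delta := u_1 - u_2$ and $S := u_1^2 + u_1 u_2 + u_2^2 = \bigl(3(u_1+u_2)^2 + \Delta^2\bigr)/4 \geq 0$. Letting $\hat\mu$ denote the probability measure on $\R_{\geq 0}$ with density proportional to $\Delta(t)^2$, the defining quantities of the SRG point simplify to $\mathrm{Re}\, z_\phi = \int S\, d\hat\mu$ and $|z_\phi|^2 = \int S^2 \, d\hat\mu$, and hence $|\mathrm{Im}\, z_\phi|^2 = \int S^2 \, d\hat\mu - \bigl(\int S\, d\hat\mu\bigr)^2$. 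All three statements reduce to estimating moments of $S$ under $\hat\mu$.

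For (1) I set $u_2 = 0$, so $S = u_1^2 \leq A^2$ pointwise, giving $|z_\phi|^2 = \int S^2\, d\hat\mu \leq A^4$; for (2), $u_1,u_2 \in \mathcal{U}_A$ yields $S \leq 3A^2$ pointwise, whence $|z_\phi| \leq 3A^2$. For the inclusion $\SRG(\phi) \subseteq \C_{\mathrm{Re}>0}$ in (3), I note that $S(t) > 0$ unless $u_1(t) = u_2(t) = 0$, while $\hat\mu$ is supported on $\{t : \Delta(t) \neq 0\}$, where at least one of $u_1, u_2$ is nonzero, so $S > 0$ $\hat\mu$-a.e.\ and thus $\int S\, d\hat\mu > 0$.

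For the reverse inclusion $\C_{\mathrm{Re}>0} \subseteq \SRG(\phi)$, I take two-step inputs $u_i = a_i \mathbf{1}_{[0,1]} + c_i \mathbf{1}_{[1,2]}$ ($i = 1,2$), under which $\hat\mu$ collapses to a two-point distribution with weights $w_j \propto d_j^2 := (a_j - c_j)^2$ on values $s_j := a_j^2 + a_j c_j + c_j^2$. Given a target $r + \mathrm{i} s$ with $r > 0$ and $s \geq 0$, the moment equations $r = w_1 s_1 + w_2 s_2$ and $s^2 = w_1 w_2 (s_2 - s_1)^2$ admit solutions with $s_1, s_2 \geq 0$ whenever $w_1/w_2 > s^2/r^2$, which I arrange by an explicit choice of weights. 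Real $a_j, c_j$ realising the chosen $(d_j, s_j)$ exist iff $d_j^2 \leq 4 s_j$ (from $s_j = (3(a_j+c_j)^2 + d_j^2)/4$), a constraint that becomes vacuous after a uniform rescaling $d_j \mapsto \varepsilon d_j$ which preserves the weight ratio; complex conjugation symmetry then covers $s < 0$. The main obstacle is precisely the coupling between $\hat\mu$ and $S$ in (3)—both are determined by the same pair $(u_1, u_2)$ and cannot be prescribed independently at first sight—and it is unwound exactly by this scale-invariance of the weights together with the slackness of the admissibility constraint for small $\varepsilon$.
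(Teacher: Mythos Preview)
Your proof is correct but follows a genuinely different route from the paper. The paper dispatches (2) by invoking Proposition~\ref{prop:static_nl_srg} (the incremental sector bound $\partial(x^3)\in[0,3A^2]$ on $[-A,A]$, which in fact gives the sharper inclusion in $D_{[0,3A^2]}$), and for (3) it introduces the saturated cubic $\phi_M$, cites the equality case of that same proposition to obtain $\SRG(\phi_M)=D_{[0,3M^2]}$, asserts $\SRG(\phi_M)\subseteq\SRG(\phi)$, and lets $M\to\infty$. Your approach instead parametrises every SRG point through the factorisation $u_1^3-u_2^3=\Delta\cdot S$ and the $\Delta^2$-weighted probability measure $\hat\mu$, so that all three items reduce to moment estimates of the nonnegative multiplier $S$. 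This is more self-contained: you never invoke the black-box proposition, you prove the forward inclusion $\SRG(\phi)\subseteq\C_{\mathrm{Re}>0}$ explicitly (the paper is silent on it), and your two-step construction for the reverse inclusion is an exact realisation rather than an approximation-and-limit. The paper's route is shorter because the work is delegated to Proposition~\ref{prop:static_nl_srg}; yours explains structurally why the cubic SRG fills the open half-plane, and the scale-invariance trick that decouples the weights $w_j$ from the admissibility constraint $d_j^2\le 4s_j$ is a nice observation. One notational slip: your ansatz $u_i=a_i\mathbf{1}_{[0,1]}+c_i\mathbf{1}_{[1,2]}$ does not match the formulae $d_j=a_j-c_j$ and $s_j=a_j^2+a_jc_j+c_j^2$; you evidently intend $u_1=a_1\mathbf{1}_{[0,1]}+a_2\mathbf{1}_{[1,2]}$ and $u_2=c_1\mathbf{1}_{[0,1]}+c_2\mathbf{1}_{[1,2]}$, so that interval $j$ carries the pair $(a_j,c_j)$.
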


\begin{proof}
    We start by proving the first property. Pick any $x \in \mathcal{U}_A$, which satisfies $\norm{x}_\infty = M \leq A$ and let $\tilde{x} = x/A$, such that $\norm{\tilde{x}}_\infty \leq 1$. Since $|\tilde{x}(t)| \geq |\tilde{x}^3(t)|$ for all $t \in [0, \infty)$, it holds that $A^4 |x(t)|^2 \geq |x^3(t)|^2$. Therefore, $A^4 \int_0^\infty |x(t)|^2 d t \geq \int_0^\infty |x^3(t)|^2 d t$, which is equivalent to 
    \begin{equation*}
        \frac{\norm{\phi(x)}}{\norm{x}} \leq A^2,
    \end{equation*}
    which proves the first property. The second property follows directly from Proposition~\ref{prop:static_nl_srg} and the derivative of $x\mapsto x^3$ being $x\mapsto 3x^2$. To prove the third property, define
    \begin{equation*}
        \phi_M(x) = 
        \begin{cases}
            x^3 & \text{ if } |x|\leq M, \\
            \operatorname{sign}(x)M^3 & \text{ else}.
        \end{cases}
    \end{equation*}
    Then using Proposition~\ref{prop:static_nl_srg}, we have $\SRG(\phi_M) = D_{[0, 3M^2]}$ and $\SRG(\phi_M) \subseteq \SRG(\phi)$. Letting $M \to \infty$ results in $\lim_{M \to \infty} \SRG(\phi_M) = \SRG(\phi) = \C_{\mathrm{Re} > 0 }$.
\end{proof}

\section{Bound for the Duffing Oscillator}\label{app:duffing_bound}

In this appendix, we show how to obtain a bound $\norm{y}_\infty$ for the system in Fig.~\ref{fig:lure_disturbed}. The system with a pure PD controller $K = k_p + k_d s$ can be written as $\ddot y =- \tilde{\alpha} y - \beta y^3 - \tilde{\delta} \dot y + d$, where $\tilde{\alpha} = \alpha + k_p>0$ and $\tilde{\delta} = \delta + k_d>0$. Using $\tilde{\delta} = 0$ and $d=0$, we can define the Hamiltonian $H(y, \dot y) := \frac{1}{2}(\dot y)^2 + \frac{1}{2} \tilde{\alpha} y^2 + \frac{1}{4} \beta y^4$, which reproduces the Duffing equation upon the substitution $x = \dot y$ with $\dot y = \partial H/ \partial x, \ddot y = \dot x = - \partial H/\partial y $. This Hamiltonian is a convex potential, consisting of a linear and cubic spring, both with positive spring constant. Therefore, the maximum amplitude $\norm{y}_\infty$ of the system response is monotonically increasing with the energy $H$, which in turn means that disturbance $d$ that maximizes $H$ will result in the upper bound for $\norm{y}_\infty$. 

Multiplying the Duffing equation by $\dot y$ results in the equality $\dot y (\ddot y +\tilde{\alpha} y +\beta y^3 + \tilde{\delta} \dot x - d(t))=0$, which can be rewritten as $\frac{d}{d t} H(y, \dot y) = -\tilde{\delta} (\dot y)^2 + \dot y d(t)$. This means that $d(t)$ should always have the same sign as $\dot y$ in order to add as much energy to the system as possible. Therefore, we start at $t=0$ with a \emph{negative} value of $y$ with $\dot y=0$, and apply the maximum positive value of $d$. The motion starts with $\dot y >0$. We define the \emph{turning point} as the value of $y$ when $\dot y =0$ again for the first time.

If the system at $t=0$ starts at $(y(0), \dot y(0)) = (-\norm{y}_\infty, 0)$ with $d(t) = \norm{d}_\infty$ throughout, then the turning point should obey $(y, \dot y) = (\norm{y}_\infty, 0)$ in an equilibrium situation. Solving this problem numerically for the values $\alpha=-1, \, \beta =1, \delta=0.3$, $k_p=k_d=5$, and $\norm{d}_\infty = 1$, the resulting bound is $\norm{y}_\infty = 0.25$.

\section{Non-incremental Homotopy Theorem}\label{app:non_incremental_homotopy}

In this appendix, we state and prove the non-incremental version of the incremental homotopy theorem from~\cite{chaffeyHomotopyTheoremIncremental2024}. Denote $T_\tau = (H_1^{-1}+\tau H_2)^{-1}$, where $T_1=T$. Note that we assume throughout that $H_1, H_2$ are \emph{systems} in terms of Definition~\ref{def:system}. This means that they allow inputs in $\Lte$, which is necessary for the definition of well-posedness, see Definition~\ref{def:well-posedness}. 
 
\begin{theorem}\label{thm:non_incremental_homotopy}
    Let $H_1,H_2 :L_2 \to L_2$ be systems such that 
    \begin{itemize}
        \item $\gamma(H_1) < \infty$ and $\gamma(H_2) < \infty$,
        \item $T_\tau$ is well-posed (see Definition~\ref{def:well-posedness}) for all $\tau \in [0,1]$,
        \item there exists a $\hat{\gamma} >0$ such that for all $\tau \in [0,1]$ and all $u \in \dom(T_\tau)$ it holds that $\norm{T_\tau u} \leq \hat{\gamma} \norm{u}$.
    \end{itemize}
    Then, $T: L_2 \to L_2$ is well-posed with gain bound $\gamma(T) \leq \hat{\gamma}$.
\end{theorem}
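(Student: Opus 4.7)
The plan is to run a standard homotopy continuation argument on $\tau$. Because the smallness constraint that appears in the openness step (below) is uniform in the base point, the closedness half of the usual open/closed topological argument can be bypassed: finitely many steps of size strictly less than $1/(\hat{\gamma}\gamma(H_2))$ take us from $\tau=0$ to $\tau=1$. I would organize the argument around the set $\Sigma := \{\tau \in [0,1] \mid \dom(T_\tau)=L_2\}$ and show $\Sigma = [0,1]$.

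At $\tau=0$ we have $T_0 = H_1 : L_2 \to L_2$ by hypothesis, and the a priori bound at $\tau=0$ applied on $\dom(T_0)=L_2$ gives $\gamma(H_1) \leq \hat{\gamma}$, so the induction starts and $0 \in \Sigma$. For the inductive step, assume $\tau_0 \in [0,1)$ satisfies $\dom(T_{\tau_0})=L_2$ with $\gamma(T_{\tau_0}) \leq \hat{\gamma}$, fix $\delta < 1/(\hat{\gamma}\gamma(H_2))$, and consider $\tau \in [\tau_0, \tau_0+\delta] \cap [0,1]$. For any $r \in L_2$, well-posedness at $\tau$ gives a unique $y = T_\tau r \in \Lte$; the aim is to show $y \in L_2$. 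Using the identity $T_\tau^{-1} = T_{\tau_0}^{-1} + (\tau-\tau_0) H_2$, the defining equation can be rewritten as the fixed-point equation $y = T_{\tau_0}(r - (\tau-\tau_0) H_2 y)$. I would then run the Picard iteration $y^{(n+1)} = T_{\tau_0}(r - (\tau-\tau_0) H_2 y^{(n)})$ with $y^{(0)}=0$. Since each $y^{(n)}$ lies in $L_2$ (by the inductive hypothesis on $T_{\tau_0}$ together with $\gamma(H_2) < \infty$), the a priori bound yields
\begin{equation*}
    \norm{y^{(n+1)}}_2 \leq \hat{\gamma}\norm{r}_2 + \hat{\gamma}\gamma(H_2)|\tau-\tau_0|\norm{y^{(n)}}_2,
\end{equation*}
so $\sup_n \norm{y^{(n)}}_2 \leq \hat{\gamma}\norm{r}_2 / (1-\hat{\gamma}\gamma(H_2)\delta) < \infty$. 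By Banach--Alaoglu a subsequence $y^{(n_k)}$ converges weakly to some $y^\star \in L_2$ obeying the same uniform bound.

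The main obstacle is the identification $y^\star = y$, where $y$ is the unique $\Lte$-solution: the operators $T_{\tau_0}$ and $H_2$ are nonlinear and not a priori weakly continuous on $L_2$, so one cannot simply pass to the limit in the iteration. My intended remedy is a truncation argument: for each fixed $T \in \R_{\geq 0}$ the bounded sequence $P_T y^{(n_k)}$ admits a weakly convergent sub-subsequence in $L_2([0,T])$, and passing to limits there together with a diagonal procedure in $T$ reconstructs a solution in $\Lte$ to the fixed-point equation; uniqueness of the $\Lte$-solution at $\tau$ then pins down $y^\star = y$, placing $y$ in $L_2$. Invoking the a priori bound at $\tau$ finally tightens $\norm{y}_2 \leq \hat{\gamma}\norm{r}_2$. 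This extends $\Sigma$ from $\tau_0$ to $\tau_0+\delta$, and iterating at most $\lceil 1/\delta \rceil$ times reaches $\tau=1$, yielding $T : L_2 \to L_2$ with $\gamma(T) \leq \hat{\gamma}$.
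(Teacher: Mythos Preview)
Your overall homotopy structure---finitely many steps of uniform size strictly less than $1/(\hat\gamma\,\gamma(H_2))$, starting from $T_0=H_1$---is exactly what the paper does. The divergence is in the inner step, and there your argument has a genuine gap.

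The identification $y^\star = y$ cannot be rescued by the truncation you sketch. Restricting to $L_2([0,T])$ does nothing to make the nonlinear operators $T_{\tau_0}$ and $H_2$ weakly (or even strongly) sequentially continuous: all you have on them is a \emph{non-incremental} gain bound, which carries no continuity information at all. Hence there is no mechanism to pass limits through the fixed-point equation $y^{(n+1)} = T_{\tau_0}\bigl(r-(\tau-\tau_0)H_2 y^{(n)}\bigr)$ on any horizon, and the diagonal procedure produces a candidate you cannot verify. In fact, in the non-incremental setting the Picard iterates need not converge in any sense---a gain bound is not a contraction---so even the existence of a useful limit is in question.

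The paper avoids this entirely by \emph{not constructing} a solution. Well-posedness at $\tau_0+\delta$ already supplies the unique $y\in\Lte$; one then views $T_{\tau_0+\delta}$ as the feedback interconnection of $T_{\tau_0}$ in the forward path with $\delta H_2$ in the return path and invokes the classical non-incremental small-gain theorem of Desoer--Vidyasagar. Since $\gamma(T_{\tau_0})\cdot\delta\,\gamma(H_2)<1$, that theorem upgrades the $\Lte$-solution to $y\in L_2$ directly; its proof is nothing more than a uniform-in-$T$ bound on $\norm{P_T y}_2$ exploiting causality, with no iterates and no compactness. You can repair your argument by discarding the Picard iteration altogether and applying exactly this truncation estimate to the $\Lte$-solution you already have from the well-posedness hypothesis.
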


\begin{proof}
    Note that $\gamma(T) \leq \hat{\gamma}$ holds by assumption. It remains to show that $T : L_2 \to L_2$. 

    Let $\nu \in [0,1/(\gamma(H_1)\gamma(H_2))$. By the well-posedness assumption one has $T_\nu : L_2 \to \Lte$, and by the non-incremental small gain theorem \cite{desoerFeedbackSystemsInputoutput1975}, one has $T_\nu :L_2 \to L_2$. Note that $\gamma(T_\nu) \leq \hat{\gamma}$ hold by assumption. 
    
    Fix some $N \in \N$ and $\tau \in [0,1/(\hat{\gamma} \gamma(H_2))$ such that $N \tau = 1-\nu$. One can apply the small gain theorem again to conclude that $T_{\nu+\tau} : L_2 \to L_2$. Proceeding inductively $N$ times until $\nu + N \tau=1$, as in the proof of~\cite[Theorem 2]{chaffeyHomotopyTheoremIncremental2024}, shows that $T_{\nu + N \tau} = T_1=T$ obeys $T : L_2 \to L_2$.
\end{proof}

\begin{theorem}\label{thm:chaffey_thm2_non_incremental}
    Let $H_1,H_2 :L_2 \to L_2$ be systems such that 
    \begin{itemize}
        \item $\gamma(H_1) < \infty$ and $\gamma(H_2) < \infty$,
        \item $T_\tau$ is well-posed for all $\tau \in [0,1]$,
        \item there exists an $r_m>0$ such that for all $\tau \in [0,1]$
        \begin{equation}\label{eq:non_incremental_srg_condition}
            \dist(\SG_0(H_1)^{-1}, -\tau \SG_0(H_2)) \geq r_m,
        \end{equation}
    \end{itemize}
    where at least one of $\SG_0(H_1), \SG_0(H_2)$ obeys the chord property. Then, $T=(H_1^{-1}+H_2)^{-1}$ is well posed and obeys $T:L_2 \to L_2$ with $\gamma(T) \leq 1/r_m$.
\end{theorem}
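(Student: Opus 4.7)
The plan is to reduce the claim to the non-incremental homotopy theorem (Theorem~\ref{thm:non_incremental_homotopy}). Two of its three hypotheses — finiteness of the non-incremental gains and well-posedness of the homotopy $T_\tau = (H_1^{-1}+\tau H_2)^{-1}$ for every $\tau \in [0,1]$ — are already assumed, so the only work is to extract the uniform gain bound $\hat{\gamma} = 1/r_m$ on the family $T_\tau$ from the SG distance condition~\eqref{eq:non_incremental_srg_condition}; invoking Theorem~\ref{thm:non_incremental_homotopy} then finishes the argument.

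To produce that uniform bound I would invoke the $\SG_0$ version of Proposition~\ref{prop:srg_calculus}, which is available here because systems satisfy $R(0)=0$ by the standing convention (the paper makes this remark immediately after Proposition~\ref{prop:srg_calculus}). Applying the inverse rule and then the sum rule to the definition of $T_\tau$ gives
\begin{equation*}
    \SG_0(T_\tau) \subseteq \bigl(\SG_0(H_1)^{-1} + \tau\,\SG_0(H_2)\bigr)^{-1},
\end{equation*}
where the chord-property hypothesis on one of $\SG_0(H_1),\SG_0(H_2)$ is what makes the sum step legal — observing that $\SG_0(\tau H_2) = \tau\,\SG_0(H_2)$ automatically inherits the chord property from $\SG_0(H_2)$ since $\tau \geq 0$ is real. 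The assumption $\dist(\SG_0(H_1)^{-1}, -\tau\,\SG_0(H_2)) \geq r_m$ unfolds to the statement that every $z$ in the Minkowski sum $\SG_0(H_1)^{-1} + \tau\,\SG_0(H_2)$ satisfies $|z| \geq r_m$; M\"obius-inverting pointwise gives $|1/z| \leq 1/r_m$, so the inverted set lies in $D_{1/r_m}(0)$. Since $\gamma(T_\tau) = \rmin(\SG_0(T_\tau))$, this is the uniform bound $\gamma(T_\tau) \leq 1/r_m$ on $\dom(T_\tau)$ for every $\tau \in [0,1]$ that Theorem~\ref{thm:non_incremental_homotopy} demands, and the conclusion $T : L_2 \to L_2$ with $\gamma(T) \leq 1/r_m$ follows.

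The delicate part is the chord-property bookkeeping: M\"obius inversion does not in general preserve the chord property, so if the hypothesis is placed on $\SG_0(H_1)$ rather than on $\SG_0(H_2)$, the sum-rule step has to be read through the operational form of the hypothesis — namely, one of $\SG_0(H_1)^{-1}$ or $\tau\,\SG_0(H_2)$ obeys the chord property — which is exactly the manner in which the incremental companion Theorem~\ref{thm:chaffey_thm2} is applied. A minor routine check is that $T_\tau(0)=0$, so that $\SG_0(T_\tau)$ is well-defined; this follows from $H_i(0)=0$ and well-posedness of $T_\tau$.
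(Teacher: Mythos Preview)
Your proposal is correct and follows essentially the same approach as the paper: use the $\SG_0$ version of Proposition~\ref{prop:srg_calculus} (licensed by $H_i(0)=0$) to turn the distance condition into a uniform gain bound $\hat{\gamma}=1/r_m$ on $\dom(T_\tau)$, then invoke Theorem~\ref{thm:non_incremental_homotopy}. Your write-up is in fact more explicit than the paper's proof, and your remark on the chord-property bookkeeping correctly flags a subtlety that the paper's proof leaves implicit.
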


\begin{proof}
    Since $H_1,H_2$ have finite gain, it holds that $H_1(0)=H_2(0)=0$ and the SRG calculus in Proposition~\ref{prop:srg_calculus} for $\SG_0(\cdot)$. By the condition in Eq.~\eqref{eq:non_incremental_srg_condition}, one can conclude from the SRG rules in Proposition~\ref{prop:srg_calculus} that $\norm{T_\tau u} \leq 1/r_m \norm{u}$ for all $u \in \dom(T_\tau) \subseteq L_2$ and $\tau \in [0, 1]$. By Theorem~\ref{thm:non_incremental_homotopy}, we obtain $\dom(T_\tau) = L_2$, i.e. $T_\tau : L_2 \to L_2$, and $\gamma(T) \leq 1/r_m$.
\end{proof}

\bibliographystyle{IEEEtran} 
{\footnotesize

\bibliography{IEEEabrv, bibliography}} 

\begin{thebibliography}{10}
\providecommand{\url}[1]{#1}
\csname url@samestyle\endcsname
\providecommand{\newblock}{\relax}
\providecommand{\bibinfo}[2]{#2}
\providecommand{\BIBentrySTDinterwordspacing}{\spaceskip=0pt\relax}
\providecommand{\BIBentryALTinterwordstretchfactor}{4}
\providecommand{\BIBentryALTinterwordspacing}{\spaceskip=\fontdimen2\font plus
\BIBentryALTinterwordstretchfactor\fontdimen3\font minus \fontdimen4\font\relax}
\providecommand{\BIBforeignlanguage}[2]{{%
\expandafter\ifx\csname l@#1\endcsname\relax
\typeout{** WARNING: IEEEtran.bst: No hyphenation pattern has been}%
\typeout{** loaded for the language `#1'. Using the pattern for}%
\typeout{** the default language instead.}%
\else
\language=\csname l@#1\endcsname
\fi
#2}}
\providecommand{\BIBdecl}{\relax}
\BIBdecl

\bibitem{nyquistRegenerationTheory1932}
H.~Nyquist, ``Regeneration {{Theory}},'' \emph{Bell System Technical Journal}, vol.~11, no.~1, pp. 126--147, Jan. 1932.

\bibitem{sandbergFrequencyDomainConditionStability1964}
I.~W. Sandberg, ``A {{Frequency-Domain Condition}} for the {{Stability}} of {{Feedback Systems Containing}} a {{Single Time-Varying Nonlinear Element}},'' \emph{Bell System Technical Journal}, vol.~43, no.~4, pp. 1601--1608, Jul. 1964.

\bibitem{krylovIntroductionNonlinearMechanics1947}
N.~M. Krylov, N.~N. Bogoliubov, and S.~Lefschetz, \emph{Introduction to Non-Linear Mechanics}, ser. Annals of Mathematics Studies; {{Annals}} of Mathematics Studies; {{Annals}} of Mathematics Studies; No. 11; {{No}}. 11.\hskip 1em plus 0.5em minus 0.4em\relax Princeton: Princeton University Press, 1947.

\bibitem{pavlovFrequencyDomainPerformance2007}
A.~Pavlov, N.~Van De~Wouw, A.~Pogromsky, M.~Heertjes, and H.~Nijmeijer, ``Frequency domain performance analysis of nonlinearly controlled motion systems,'' in \emph{Proc. of the 46th {{IEEE Conference}} on {{Decision}} and {{Control}}}, 2007, pp. 1621--1627.

\bibitem{ryuScaledRelativeGraphs2022}
E.~K. Ryu, R.~Hannah, and W.~Yin, ``Scaled relative graphs: Nonexpansive operators via {{2D Euclidean}} geometry,'' \emph{Mathematical Programming}, vol. 194, no. 1-2, pp. 569--619, Jul. 2022.

\bibitem{chaffeyGraphicalNonlinearSystem2023}
T.~Chaffey, F.~Forni, and R.~Sepulchre, ``Graphical {{Nonlinear System Analysis}},'' \emph{IEEE Transactions on Automatic Control}, vol.~68, no.~10, pp. 6067--6081, Oct. 2023.

\bibitem{vandeneijndenScaledGraphsReset2024}
S.~Van Den~Eijnden, T.~Chaffey, T.~Oomen, and W.~(Maurice)~Heemels, ``Scaled graphs for reset control system analysis,'' \emph{European Journal of Control}, vol.~80, p. 101050, Nov. 2024.

\bibitem{krebbekxNonlinearBandwidthBode2025}
J.~P.~J. Krebbekx, R.~T{\'o}th, and A.~Das, ``Nonlinear {{Bandwidth}} and {{Bode Diagrams}} based on {{Scaled Relative Graphs}},'' \emph{arXiv:2504.01585}, Apr. 2025.

\bibitem{krebbekxSRGAnalysisLure2024}
------, ``{{SRG Analysis}} of {{Lur}}'e {{Systems}} and the {{Generalized Circle Criterion}},'' \emph{arXiv:2411.18318}, Nov. 2024.

\bibitem{chaffeyHomotopyTheoremIncremental2024}
T.~Chaffey, A.~Kharitenko, F.~Forni, and R.~Sepulchre, ``A homotopy theorem for incremental stability,'' \emph{arXiv:2412.01580}, Dec. 2024.

\bibitem{megretskiSystemAnalysisIntegral1997}
A.~Megretski and A.~Rantzer, ``System analysis via integral quadratic constraints,'' \emph{IEEE Transactions on Automatic Control}, vol.~42, no.~6, pp. 819--830, Jun. 1997.

\bibitem{desoerFeedbackSystemsInputoutput1975}
C.~A. Desoer and M.~Vidyasagar, \emph{Feedback Systems: Input-Output Properties}, ser. Electrical Science Series.\hskip 1em plus 0.5em minus 0.4em\relax New York: Acad. Press, 1975.

\bibitem{willemsAnalysisFeedbackSystems1971}
J.~C. Willems, \emph{The {{Analysis}} of {{Feedback Systems}}}.\hskip 1em plus 0.5em minus 0.4em\relax The MIT Press, Jan. 1971.

\bibitem{vanderschaftL2GainPassivityTechniques2017}
A.~Van Der~Schaft, \emph{L2-{{Gain}} and {{Passivity Techniques}} in {{Nonlinear Control}}}, ser. Communications and {{Control Engineering}}.\hskip 1em plus 0.5em minus 0.4em\relax Cham: Springer International Publishing, 2017.

\bibitem{osheaImprovedFrequencyTime1967}
R.~O'Shea, ``An improved frequency time domain stability criterion for autonomous continuous systems,'' \emph{IEEE Transactions on Automatic Control}, vol.~12, no.~6, pp. 725--731, Dec. 1967.

\bibitem{guiverCircleCriterionClass2022}
C.~Guiver and H.~Logemann, ``The circle criterion for a class of sector-bounded dynamic nonlinearities,'' \emph{Mathematics of Control, Signals, and Systems}, vol.~34, no.~3, pp. 461--492, Sep. 2022.

\bibitem{krebbekxResetControllerAnalysis2025}
J.~P.~J. Krebbekx, R.~T{\'o}th, and A.~Das, ``Reset {{Controller Analysis}} and {{Design}} for {{Unstable Linear Plants}} using {{Scaled Relative Graphs}},'' \emph{arXiv.2506.13518}, Jun. 2025.

\bibitem{johne.hopcroftIntroductionAutomataTheory2014}
{John E. Hopcroft}, {Rajeev Motwani}, and {Jeffrey D. Ullman}, \emph{Introduction to {{Automata Theory}}, {{Languages}}, and {{Computation}} : {{Pearson New International Edition}}}, 3rd~ed.\hskip 1em plus 0.5em minus 0.4em\relax Pearson, 2014.

\bibitem{thompsonOrdinaryDifferentialEquations1998}
R.~Thompson, \emph{Ordinary {{Differential Equations}}}, 1st~ed., ser. Graduate {{Texts}} in {{Mathematics Ser}}.\hskip 1em plus 0.5em minus 0.4em\relax New York, NY: Springer New York, 1998, no. v.182.

\bibitem{khalilNonlinearControl2002}
H.~K. Khalil, \emph{Nonlinear Control}, 3rd~ed.\hskip 1em plus 0.5em minus 0.4em\relax Prentice-Hall, 2002.

\bibitem{koelewijnEquilibriumIndependentControlContinuousTime2023}
P.~J.~W. Koelewijn, S.~Weiland, and R.~T{\'o}th, ``Equilibrium-{{Independent Control}} of {{Continuous-Time Nonlinear Systems}} via the {{LPV Framework}} -- {{Extended Version}},'' \emph{arXiv:2308.08335}, 2023.

\bibitem{astromDesignPIControllers1998}
K.~{\AA}str{\"o}m, H.~Panagopoulos, and T.~H{\"a}gglund, ``Design of {{PI Controllers}} based on {{Non-Convex Optimization}},'' \emph{Automatica}, vol.~34, no.~5, pp. 585--601, May 1998.

\bibitem{schillingMeasuresIntegralsMartingales2017}
R.~L. Schilling, \emph{Measures, Integrals and Martingales}, 2nd~ed.\hskip 1em plus 0.5em minus 0.4em\relax Cambridge University Press, 2017.

\bibitem{patesScaledRelativeGraph2021}
R.~Pates, ``The {{Scaled Relative Graph}} of a {{Linear Operator}},'' \emph{arXiv:2106.05650}, Aug. 2021.

\bibitem{zamesInputoutputStabilityTimevarying1966}
G.~Zames, ``On the input-output stability of time-varying nonlinear feedback systems {{Part}} one: {{Conditions}} derived using concepts of loop gain, conicity, and positivity,'' \emph{IEEE Transactions on Automatic Control}, vol.~11, no.~2, pp. 228--238, Apr. 1966.

\end{thebibliography}

\end{document}